  \DeclareFontShape{T1}{cmr}{m}{scit}{<->ssub*cmr/m/sc}{}%
\newtheorem{theorem}{Theorem}[section]
\newtheorem{lemma}[theorem]{Lemma}
\newtheorem{corollary}[theorem]{Corollary}
\newtheorem{fact}[theorem]{Fact}
\newtheorem{definition}[theorem]{Definition}
\newtheorem{remark}{Remark}[section]
\newtheorem{problem}{Problem}
\DeclarePairedDelimiter\rbra{\lparen}{\rparen}
\DeclarePairedDelimiter\sbra{\lbrack}{\rbrack}
\DeclarePairedDelimiter\cbra{\{}{\}}
\DeclarePairedDelimiter\abs{\lvert}{\rvert}
\DeclarePairedDelimiter\Abs{\lVert}{\rVert}
\DeclarePairedDelimiter\ceil{\lceil}{\rceil}
\DeclarePairedDelimiter\floor{\lfloor}{\rfloor}
\DeclarePairedDelimiter\ket{\lvert}{\rangle}
\DeclarePairedDelimiter\bra{\langle}{\rvert}
\DeclareMathOperator*{\E}{\mathbb{E}}
\newcommand{\tr} {\operatorname{tr}}
\newcommand{\poly} {\operatorname{poly}}
\newcommand{\diag} {\operatorname{diag}}
\newcommand{\polylog} {\operatorname{polylog}}
\newcommand{\spanspace} {\operatorname{span}}
\def\@buildmath#1{%
  \expandafter\def\csname bb#1\endcsname{\ensuremath{\mathbb{#1}}}%
  \expandafter\def\csname bf#1\endcsname{\ensuremath{\mathbf{#1}}}%
  \expandafter\def\csname sf#1\endcsname{\ensuremath{\mathsf{#1}}}%
  \expandafter\def\csname cal#1\endcsname{\ensuremath{\mathcal{#1}}}%
  \expandafter\def\csname rm#1\endcsname{\ensuremath{\mathrm{#1}}}%
  \expandafter\def\csname tt#1\endcsname{\ensuremath{\mathtt{#1}}}%
}
\def\@buildmathletters#1{%
  \ifx#1\relax\else
    \@buildmath{#1}%
    \expandafter\@buildmathletters
  \fi
} 
\newcommand{\BQP}{\textnormal{\textsf{BQP}}\xspace}
\newcommand{\NIQSZK}{\textnormal{\textsf{NIQSZK}}\xspace}
\newcommand{\PS}{\mathrm{PS}}
\newcommand{\td}{\mathrm{T}}
\newcommand{\TV}{\mathrm{TV}}
\newcommand{\Hq}{\mathrm{H}^\mathrm{T}_q}
\newcommand{\HqHalf}{\mathrm{H}^\mathrm{T}_{1/2}}
\newcommand{\Sq}{\mathrm{S}^\mathrm{T}_q}
\newcommand{\SqHalf}{\mathrm{S}^\mathrm{T}_{1/2}}
\newcommand{\QSCMM}{\textnormal{\textsc{QSCMM}}\xspace}
\newcommand{\TsallisQEA}{\texorpdfstring{\textnormal{\textsc{TsallisQEA}\textsubscript{\textit{q}}}}\xspace}
\newcommand{\TsallisQEAnoq}{\texorpdfstring{\textnormal{\textsc{TsallisQEA}}}\xspace}
\newcommand{\ketbra}[2]{\ensuremath{\ket{#1}\!\bra{#2}}}
\newcommand{\footremember}[2]{%
    \footnote{#2}
    \newcounter{#1}
    \setcounter{#1}{\value{footnote}}%
}
\begin{document}

\title{Trace Estimation of Quantum State Powers: \\ {Sample Complexity and Computational Hardness}\footnotetext{A preliminary version of this paper \cite{CW25} was presented at the 38th Conference on Learning Theory (COLT 2025).}}
\author{Kean Chen \footremember{1}{\url{keanchen.gan@gmail.com}}
    \and Yupan Liu \footremember{2}{\url{yupan.liu@epfl.ch}}
    \and Qisheng Wang \footremember{3}{\url{QishengWang1994@gmail.com}}}
\date{}

\maketitle

\begin{abstract}
    As often emerges in various basic quantum properties such as R\'enyi and Tsallis entropies, 
    the trace of quantum state powers $\tr\rbra{\rho^q}$ has attracted a lot of attention.
    The recent work of \hyperlink{cite.LW25}{Liu and Wang} \hyperlink{cite.LW25}{(SODA 2025)} showed that, even for (possibly) non-integer $q>1$, $\tr\rbra{\rho^q}$ can be estimated to within additive error $\varepsilon$ using a dimension-independent (and also rank-independent) sample complexity of $\widetilde O(1/\varepsilon^{3+\frac{2}{q-1}})$,\footnote{Throughout this paper, $\widetilde O\rbra{\cdot}$, $\widetilde \Omega\rbra{\cdot}$, and $\widetilde \Theta\rbra{\cdot}$ suppress polylogarithmic factors in $\varepsilon$.} together with a lower bound of $\Omega(1/\varepsilon)$. In addition, combining this result with subsequent work of \hyperlink{cite.Liu26}{Liu (STACS 2026)} shows that the corresponding promise problem is \textsf{BQP}-complete.
    In this paper, we significantly improve and extend the sample complexity bounds for this problem. Furthermore, we show that for $0<q<1$, the problem does not admit an efficient estimator unless $\mathsf{BQP}=\mathsf{NIQSZK}$, which is considered highly unlikely.
    In particular, we have the following results.
    \begin{itemize}
        \item For $q > 2$, we settle the sample complexity with matching upper and lower bounds $\widetilde \Theta\rbra{1/\varepsilon^2}$. 
        \item For $1 < q < 2$, we obtain an upper bound of $\widetilde O\rbra{1/\varepsilon^{\frac{2}{q-1}}}$, with a lower bound of $\Omega\rbra{1/\varepsilon^{\max\cbra{\frac{1}{q-1}, 2}}}$ for dimension-independent (in fact, rank-independent) estimators.
        \item For $0< q < 1$, we obtain an upper bound of $O((d/\varepsilon)^{\frac{2}{q}})$, with a lower bound of $\Omega((d/\varepsilon)^{\frac{1}{q}})$ for $d$-dimensional states (in fact, both bounds can be naturally refined to depend on the rank rather than the dimension). Accordingly, the corresponding promise problem is \textsf{NIQSZK}-hard, which is in sharp contrast to the case of $q > 1$. 
    \end{itemize}
    Technically, our upper bounds are obtained by (non-plug-in) quantum estimators based on weak Schur sampling, in sharp contrast to the prior approach based on quantum singular value transformation and samplizer. 
\end{abstract}

\newpage
\tableofcontents
\newpage

\section{Introduction}

Testing the properties of quantum states is a fundamental problem in the field of quantum property testing \cite{MdW16}, where the spectra of quantum states turn out to be crucial, as they fully characterize unitarily invariant properties. 
Given samples of the quantum state to be tested, in \cite{OW21}, testing the spectrum was extensively studied, with several significant applications such as mixedness testing and rank testing.
In \cite{OW17}, they further investigated the sample complexity of the spectrum tomography of quantum states. 
Subsequently, as a representative unitarily invariant quantity, the entropy of a quantum state was known to have efficient estimators in \cite{AISW20,BMW16,WZ24}. 

The traces of quantum state powers, $\tr\rbra{\rho^q}$, of a quantum state $\rho$ are one of the simplest functionals of quantum states. 
The quantity $\tr\rbra{\rho^q}$ has  connections to the R\'enyi entropy $\mathrm{S}_q^{\mathrm{R}}\rbra{\rho} = \frac{1}{1-q} \ln\rbra{\tr\rbra{\rho^q}}$ \cite{Ren61} and the Tsallis entropy $\mathrm{S}_q^{\mathrm{T}}\rbra{\rho} = \frac{1}{1-q}\rbra{\tr\rbra{\rho^q} - 1}$ \cite{Tsa88}. 
The estimation of $\tr\rbra{\rho^q}$ is at the core of Tsallis entropy estimation, with a wide range of applications in physics.
A notable example is the Tsallis entropy of order $q = \frac{3}{2}$ for modeling fluid dynamics systems \cite{Bec01,Bec02}. 
In addition, for $q = 1.001$ (close to $1$), the Tsallis entropy $\mathrm{S}_q^{\mathrm{T}}\rbra{\rho}$ serves as a lower bound on the von Neumann entropy, whereas the former can be estimated exponentially faster than the latter, as noted in \cite{LW25}. 
In particular, $\tr\rbra{\rho^2}$ refers to the purity of $\rho$,
and it is well-known that the purity $\tr\rbra{\rho^2}$ can be estimated to within additive error $\varepsilon$ using $O\rbra{1/\varepsilon^2}$ samples of $\rho$ via the SWAP test \cite{BCWdW01}. 
For the case of constant integer $q \geq 2$, $\tr\rbra{\rho^q}$ can be estimated using $O\rbra{1/\varepsilon^2}$ samples of $\rho$ via the Shift test proposed in \cite{EAO+02}, generalizing the SWAP test. 
For non-integer $q > 0$ and $q \neq 1$, the estimation of $\tr\rbra{\rho^q}$ was considered in \cite{WGL+24} with the corresponding quantum algorithms presented with time complexity $\poly\rbra{r, 1/\varepsilon}$,\footnote{In \cite{WGL+24}, their main results only consider the quantum query complexity, as they assume access to the state-preparation circuit of $\rho$. Nevertheless, their results also imply a sample complexity of $\poly\rbra{r, 1/\varepsilon}$ (with a polynomial overhead compared to the corresponding query complexity) using the techniques in \cite{GP22}, as noted in \cite[Footnote 2]{WGL+24}.} where $r$ is the rank of $\rho$. 
Recently in \cite{LW25}, it was discovered that for every non-integer $q > 1$, $\tr\rbra{\rho^q}$ can be estimated using $\widetilde O\rbra{1/\varepsilon^{3+\frac{2}{q-1}}}$ samples of $\rho$, removing the dependence on $r$ (which we call dimension-independent as it depends on neither the rank nor the dimension of $\rho$). Thus, this exponentially improves the results in \cite{WGL+24} and the results implied by other works \cite{AISW20,WZL24,WZ24} on R\'enyi entropy estimation.
However, the sample complexity in \cite{LW25} is far from being optimal, as only a lower bound of $\Omega\rbra{1/\varepsilon}$ 
on the sample complexity of estimating $\tr\rbra{\rho^q}$ for non-integer $q > 1$ 
was known in \cite[Theorem 5.9]{LW25}. 
To our knowledge, only a matching lower bound of $\Omega\rbra{1/\varepsilon^2}$ was known for the case of $q = 2$, i.e., estimating the purity $\tr\rbra{\rho^2}$ (see \cite[Theorem 5]{CWLY23} and \cite[Lemma 3]{GHYZ24}).

Beyond prior work on the query and sample complexities of estimating the von Neumann entropy and the Tsallis entropy $\Sq(\rho)$, it is natural to consider the corresponding promise problem that distinguishes whether $\Sq(\rho) \geq \tau_0$ or $\Sq(\rho) \leq \tau_1$ for constants $\tau_0>\tau_1$. In the von Neumann entropy case ($q=1$), this problem is \NIQSZK{}-complete~\cite{Kobayashi03,BASTS10,CCKV08}, a class that originally arose in the study of (non-interactive) quantum statistical zero-knowledge~\cite{Watrous02,Kobayashi03}. Regardless of its origin, this result implies that estimating the von Neumann entropy is computationally equivalent to distinguishing any efficiently preparable quantum state from the maximally mixed state with respect to the trace distance.\footnote{While our work focuses on the case where $q$ is a real number, it is noteworthy that the \NIQSZK{}-hardness result extends to the regime where $q$ is slightly above $1$, specifically $1 \leq q \leq 1+\frac{1}{n}$, as proven in~\cite{LW25}.} 
In contrast, for constant $q>1$, the complexity landscape changes in a manner consistent with the known efficient estimators: it was shown in~\cite{LW25} that the regime $q\in(1,2]$ is \BQP{}-complete, while the regime $q>2$ lies in \BQP{}. An open problem posed in the conference version of this work~\cite{CW25}, asking whether constant $q>2$ is also \BQP{}-hard, was resolved affirmatively in~\cite{Liu26}. Together with~\cite{LW25}, this establishes that the associated problem is \BQP{}-complete for all constant $q>1$. 

In this paper, we further investigate the sample complexity and the computational hardness of estimating $\tr\rbra{\rho^q}$.
For $q > 1$, we achieve significant improvements over the prior results \cite{LW25} in both the upper and lower bounds.
In particular, for $q > 2$, we provide an estimator that is \textit{optimal} only up to a logarithmic factor in the precision $\varepsilon$.
For $0 < q < 1$, we complement the literature by providing estimators and showing that the corresponding promise problem is $\NIQSZK{}$-hard, which is in sharp contrast to the $\BQP{}$-completeness for $q > 1$ due to \cite{LW25,Liu26}. 
Our results are collected in \cref{sec:result}.
In addition, it is noteworthy that our techniques are conceptually and technically different from those in \cite{LW25}. 
In comparison, our estimator is based on weak Schur sampling \cite{CHW07} while the estimator in \cite{LW25} is based on quantum singular value transformation \cite{GSLW19} and samplizer \cite{WZ23,WZ24}. 
Concerning the computational hardness, while our result follows the framework of~\cite{CCKV08,KLGN19} at a conceptual level, our main technical contribution is a new inequality bounding the $q$-Tsallis entropy for $0<q<1$ in terms of its closeness to the uniform distribution, which is of independent interest.
For more details, see \cref{sec:technique}. 

\subsection{Main Results} \label{sec:result}

To illustrate our results, we present them in three parts separately: $q > 2$, $1 < q < 2$, and $0 < q < 1$.

\paragraph{The case of $q > 2$.}
For $q > 2$, we provide a quantum estimator with optimal sample complexity $\widetilde \Theta\rbra{1/\varepsilon^2}$ only up to a logarithmic factor in $\varepsilon$. 

\begin{theorem} [Optimal estimator for $q > 2$, informal version of \cref{thm:ub-q>2,thm:lb}] \label{thm:q>2-main}
    For every $q > 2$, it is necessary and sufficient to use $\widetilde\Theta\rbra{1/\varepsilon^2}$ samples of the quantum state $\rho$ to estimate $\tr\rbra{\rho^q}$ to within additive error $\varepsilon$.
\end{theorem}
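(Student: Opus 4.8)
The plan is to prove the two directions separately: the upper bound $\widetilde O(1/\varepsilon^2)$ (this is \cref{thm:ub-q>2}) and the lower bound $\Omega(1/\varepsilon^2)$ (this is \cref{thm:lb}). For the upper bound I would run weak Schur sampling on $n=\widetilde O(1/\varepsilon^2)$ copies of $\rho$, obtaining a Young diagram $\lambda\vdash n$ drawn from the Schur--Weyl distribution with parameter the spectrum $\alpha=(\alpha_1\ge\alpha_2\ge\cdots)$ of $\rho$, and then output a clipped, de-biased version of the empirical estimator $\widehat T=\sum_i(\lambda_i/n)^q$ --- the ``non-plug-in'' point. The variance is the easy part: writing $\lambda_i/n=\alpha_i+\xi_i$ with $\xi_i$ a fluctuation of order $\sqrt{\alpha_i/n}$ and pushing it through $x\mapsto x^q$, the $i$-th row contributes $\lesssim q^2\alpha_i^{2q-1}/n$ to the variance; since $2q-1\ge1$ one has $\sum_i\alpha_i^{2q-1}\le\sum_i\alpha_i=1$, so $\Var(\widehat T)=O_q(1/n)$ (the off-diagonal covariances of the Schur--Weyl distribution are non-positive, as for the multinomial, and only help). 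Chebyshev, boosted by median-of-means at a $\polylog(1/\varepsilon)$ overhead, then gives additive error $\varepsilon$ from $n=\widetilde O(1/\varepsilon^2)$ samples.

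The crux is the bias $\E[\widehat T]-\tr(\rho^q)$, which I would bound uniformly over all spectra by splitting at the resolution scale $\tau\asymp1/\sqrt n$. On the resolved part ($\alpha_i\ge\tau$) a second-order Taylor expansion of $x^q$ together with $\Var(\lambda_i/n)=O(\alpha_i/n)$ gives bias $\lesssim_q\sum_i\alpha_i^{q-2}\cdot(\alpha_i/n)\le\tr(\rho^{q-1})/n\le 1/n$, where $\alpha_i^{q-2}\le1$ uses $q\ge2$ and $\tr(\rho^{q-1})\le1$ uses $q-1\ge1$. On the unresolved bulk, both the true contribution $\sum_{\alpha_i<\tau}\alpha_i^q\le\tau^{q-1}\sum_i\alpha_i$ and the estimator's contribution are governed by a ``noise-floor'' scaling: since $\sum_i\lambda_i=n$ while the first unresolved row already satisfies $\lambda_i=O(\sqrt n)$ (a standard longest-weakly-increasing-subsequence bound applied to the low-weight part of the spectrum), each is $O_q(n^{-(q-1)/2})$, hence so is their difference. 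Altogether $|\E[\widehat T]-\tr(\rho^q)|=O_q(1/n+n^{-(q-1)/2})$, which is $O_q(\varepsilon)$ for $n=\widetilde O(1/\varepsilon^2)$ \emph{precisely because $q>2$}: then $(q-1)/2>1/2$, so $n^{-(q-1)/2}\le n^{-1/2}=\varepsilon$. For $q\le2$ this bulk term is the genuine obstruction, which is exactly why that regime instead needs a budget of $\widetilde O(1/\varepsilon^{2/(q-1)})$.

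For the matching lower bound I would, for each $\varepsilon$, take two commuting states in dimension $d\asymp\varepsilon^{-1/(q-1)}$: let $\rho_0$ have eigenvalue $\tfrac12$ on a fixed vector $\ket{\phi}$ and $\tfrac{1}{2(d-1)}$ with multiplicity $d-1$, and let $\rho_1$ agree with $\rho_0$ except that the $\ket{\phi}$-eigenvalue is $\tfrac12+\delta$ and the bulk is rescaled to $\tfrac{1/2-\delta}{d-1}$, where $\delta=\Theta_q(\varepsilon)$. For this choice of $d$ the bulk contributions to $\tr(\rho^q)$ are $o(\varepsilon)$, so $\tr(\rho_1^q)-\tr(\rho_0^q)=(\tfrac12+\delta)^q-(\tfrac12)^q+o(\varepsilon)=q\,2^{1-q}\delta\,(1+o(1))=\Theta_q(\varepsilon)$; on the other hand $F(\rho_0,\rho_1)=\tfrac12\sqrt{1+2\delta}+\tfrac12\sqrt{1-2\delta}=1-\Theta(\delta^2)$ \emph{independently of $d$}, so $F(\rho_0^{\otimes n},\rho_1^{\otimes n})=F(\rho_0,\rho_1)^n$ stays bounded away from $0$ unless $n=\Omega(1/\delta^2)=\Omega_q(1/\varepsilon^2)$. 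Since any estimator with additive error below $\tfrac12|\tr(\rho_0^q)-\tr(\rho_1^q)|$ would distinguish $\rho_0^{\otimes n}$ from $\rho_1^{\otimes n}$, this forces $\Omega(1/\varepsilon^2)$ samples. (The same pair, incidentally, witnesses the ``$2$'' inside the maximum of the $1<q<2$ lower bound.)

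I expect the main obstacle to be the uniform bias analysis in the second paragraph: pinning down $\E[\widehat T]-\tr(\rho^q)$ for every spectrum requires reasonably sharp control of the Schur--Weyl distribution --- moments and concentration of the individual rows $\lambda_i$ and, above all, the behaviour of the unresolved bulk of small eigenvalues --- and it is exactly here that the hypothesis $q>2$ (rather than $q\le2$) is used in an essential way.
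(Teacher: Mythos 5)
Your lower-bound half is essentially correct and matches the paper's in spirit: a pair of commuting states whose $\tr\rbra{\rho^q}$ values differ by $\Theta\rbra{\varepsilon}$ while their infidelity is $\Theta\rbra{\varepsilon^2}$, combined with \cref{fact:qsd}, gives $\Omega\rbra{1/\varepsilon^2}$. The paper does this more simply with the two-dimensional pair $\rbra{\tfrac23\pm\varepsilon,\tfrac13\mp\varepsilon}$ in \cref{thm:lb}; your high-dimensional spike-plus-uniform-bulk instance also works (minor slip: at $d\asymp\varepsilon^{-1/(q-1)}$ the bulk contribution to $\tr\rbra{\rho^q}$ is $\Theta\rbra{\varepsilon}$, not $o\rbra{\varepsilon}$ --- what matters, and what is true, is that its \emph{change} under the perturbation is $O\rbra{\delta\varepsilon}$, so the gap is still $\Theta\rbra{\varepsilon}$).

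The upper-bound half has a genuine gap: your bias--variance analysis of the (clipped) plug-in estimator rests on distributional facts about Schur--Weyl rows that are not available and are partly false as stated. (a) The per-row fluctuation of $\lambda_i/n$ around $\alpha_i$ is \emph{not} of order $\sqrt{\alpha_i/n}$; the only dimension-free bound in play (\cref{lemma:sw-2}) is $\E\sbra{\rbra{\lambda_i-\alpha_i n}^2}\le O\rbra{n}$, and this is tight in exactly the regime you need: for the maximally mixed state with $d\gg n$ the distribution approaches Plancherel, so $\lambda_1\approx 2\sqrt{n}$ while $\alpha_1 n=n/d$ is negligible, i.e., the deviation is $\Theta\rbra{\sqrt n}$ rather than $\Theta\rbra{\sqrt{\alpha_1 n}}$. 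This invalidates the claimed per-row variance $q^2\alpha_i^{2q-1}/n$ and the resolved-part bias bound $\sum_i\alpha_i^{q-2}\alpha_i/n$ as written. (b) The asserted nonpositivity of off-diagonal covariances of Schur--Weyl rows (``as for the multinomial'') is not a known fact and is not proved. (c) The ``noise-floor'' step --- that every unresolved row obeys $\lambda_i=O\rbra{\sqrt n}$ with enough uniformity to control $\sum_i\rbra{\lambda_i/n}^q$ over possibly dimension-many rows --- is precisely the hard point: the longest-increasing-subsequence bound covers the first row, but known per-row bounds degrade with the row index, and a union bound over $d$ rows would reintroduce the dimension dependence that the whole exercise is meant to avoid. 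This accumulation-of-errors issue is exactly the obstruction the paper identifies for plug-in analyses, and its proof takes a different route that needs \emph{only} \cref{lemma:sw-2}: median-boosting each entry over $O\rbra{\log\rbra{1/\delta'}}$ independent Schur samplings to get entry-wise accuracy $\varepsilon'$ with failure probability $\delta'$ (\cref{lemma:sample}), truncating the sum at $m=\min\cbra{\ceil{1/\varepsilon'},d}$ so that only $m$ entries need to be correct (a union bound then costs only a $\log m$ factor), the deterministic inequality $\abs{\hat\alpha_j^q-\alpha_j^q}\le q\alpha_j\abs{\hat\alpha_j-\alpha_j}+\abs{\hat\alpha_j-\alpha_j}^2$, which borrows a factor $\alpha_j$ so the first-order errors sum to $q\varepsilon'$ independently of $d$, and the tail bound $\sum_{j>m}\alpha_j^q\le m^{1-q}/\rbra{q-1}$ from $\alpha_j\le 1/j$. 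Unless you supply dimension-free proofs of (a)--(c), your bias and variance estimates do not go through, while the truncate-and-median argument closes the same $\widetilde O\rbra{1/\varepsilon^2}$ bound with only the second-moment bound per row.
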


\paragraph{The case of $1 < q < 2$.}
For $1 < q < 2$, we provide a quantum estimator with sample complexity $\widetilde O\rbra{1/\varepsilon^{\frac{2}{q-1}}}$, only with room for quadratic improvements due to a lower bound of $\Omega\rbra{1/\varepsilon^{\max\cbra{\frac{1}{q-1},2}}}$.

\begin{theorem} [Improved estimator for $1 < q < 2$, informal version of \cref{thm:ub-1<q<2,thm:lb,thm:lb1to2}] \label{thm:q<2-main}
    For every $1 < q < 2$, it is sufficient to use $\widetilde O\rbra{1/\varepsilon^{\frac{2}{q-1}}}$ samples of the quantum state $\rho$ to estimate $\tr\rbra{\rho^q}$ to within additive error $\varepsilon$.
    On the other hand, when the dimension of $\rho$ is sufficiently large, $\Omega\rbra{1/\varepsilon^{\max\cbra{\frac{1}{q-1},2}}}$ samples of $\rho$ are necessary. 
\end{theorem}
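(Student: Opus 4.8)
The plan is to get the upper bound from weak Schur sampling equipped with a tailored non-plug-in estimator, and the lower bound from two unrelated constructions whose maximum is the claimed exponent $\max\cbra{1/(q-1),2}$.

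\emph{Upper bound (\cref{thm:ub-1<q<2}).} Write $q=1+s$ with $s\in(0,1)$, and let $p_1,p_2,\dots$ denote the eigenvalues of $\rho$. I would first reduce estimating $\tr\rbra{\rho^q}$ to estimating $\sum_i Q(p_i)$ for a low-degree polynomial $Q$. Since $\sum_{p_i<\tau}p_i^q\le\tau^{s}$, the eigenvalues below $\tau\asymp\varepsilon^{1/s}$ contribute only $O(\varepsilon)$; and on $(0,1]$ the function $x^s$ is the limit of its binomial (Taylor) series at $x=1$, $x^s=\sum_{k\ge0}\binom{s}{k}(x-1)^k$, whose degree-$(d-1)$ truncation $\widetilde P$ has uniform error $O(d^{-s})$ on $(0,1]$ because $\abs{\binom{s}{k}}=\Theta(k^{-1-s})$ is absolutely summable. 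Taking $d=\widetilde\Theta(\varepsilon^{-1/s})$ and $Q(x)=x\widetilde P(x)$ gives $\abs{\sum_i Q(p_i)-\tr\rbra{\rho^q}}\le\varepsilon\sum_i p_i=\varepsilon$. Now $\sum_i Q(p_i)=\sum_{k=0}^{d-1}\binom{s}{k}(-1)^k Z_k$ with $Z_k:=\sum_i p_i(1-p_i)^k$, and the key point is that the outer coefficients $\binom{s}{k}$ are absolutely summable, so it suffices to estimate each $Z_k$ with variance that is \emph{not} exponential in $k$ --- which rules out the naive expansion $Z_k=\sum_m\binom{k}{m}(-1)^m\tr\rbra{\rho^{m+1}}$ (coefficients summing to $2^k$). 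Instead I would run weak Schur sampling once on all $n$ copies to obtain $\lambda\vdash n$ and use the shifted-symmetric-polynomial identity $\E_{\mathrm{SW}_n}[\,p_\mu^{\#}(\lambda)\,]=n^{\downarrow\abs\mu}\prod_j\tr\rbra{\rho^{\mu_j}}$ to assemble an unbiased estimator $\widehat T(\lambda)$ of $\sum_i Q(p_i)$, organized so as to mirror the fact that with genuine classical samples each $Z_k$ admits a $\cbra{0,1}$-valued estimator from $k+1$ samples. Finally bound $\Var[\widehat T]$ via concentration of the Schur--Weyl distribution; this variance analysis is the technical crux, and it is what forces the count to $n=\widetilde\Theta(\varepsilon^{-2/s})=\widetilde\Theta(\varepsilon^{-2/(q-1)})$ (rather than $\widetilde\Theta(\varepsilon^{-1/(q-1)})$), with a median-of-means wrapper handling the failure probability. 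For $q>2$ (\cref{thm:ub-q>2}) the same weak-Schur-sampling framework applies and is simpler: since $x^q$ is $C^2$ on $[0,1]$ with bounded second derivative, a bias-corrected estimator already has bias and variance $\widetilde O(1/n)$, giving $n=\widetilde O(1/\varepsilon^2)$.

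\emph{Lower bound.} The $\Omega(1/\varepsilon^2)$ part (\cref{thm:lb}, valid for all $q>1$) is a two-point argument with a fixed qubit state: take $\rho_0=\diag(a,1-a)$ with $a\neq1/2$ and $\rho_1=\diag(a+\delta,1-a-\delta)$ with $\delta=\Theta(\varepsilon)$. Then $\frac{d}{d\delta}\tr\rbra{\rho_\delta^q}\big|_{\delta=0}=q\rbra{a^{q-1}-(1-a)^{q-1}}\neq0$, so $\abs{\tr\rbra{\rho_0^q}-\tr\rbra{\rho_1^q}}=\Theta(\varepsilon)$, while the fidelity is $F(\rho_0,\rho_1)=1-\Theta(\delta^2)=1-\Theta(\varepsilon^2)$; hence $\Abs{\rho_0^{\otimes n}-\rho_1^{\otimes n}}_1\le\sqrt{1-F^{2n}}=o(1)$ whenever $n=o(1/\varepsilon^2)$, so any estimator with additive error $\varepsilon/3$ needs $\Omega(1/\varepsilon^2)$ copies. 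The $\Omega(1/\varepsilon^{1/(q-1)})$ part (\cref{thm:lb1to2}, which dominates once $q<3/2$) is where the dimension must be large: I would exhibit two states $\rho,\rho'$ of large dimension whose spectra agree in the first $K=\Theta(\varepsilon^{-1/(q-1)})$ power sums, $\tr\rbra{\rho^m}=\tr\rbra{\rho'^m}$ for $m\le K$, but with $\abs{\tr\rbra{\rho^q}-\tr\rbra{\rho'^q}}\ge2\varepsilon$, from a moment-matching construction exploiting that $\tr\rbra{\rho^q}=\sum_i p_i\cdot p_i^{q-1}$ is a $\tr\rho$-weighted moment of $x^{q-1}$ at the scale $\varepsilon^{1/(q-1)}$. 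Because the Schur polynomial $s_\lambda$ is a polynomial in the power sums $p_1,\dots,p_{\abs\lambda}$, agreement of the first $K$ power sums makes the weak-Schur-sampling outcome distributions of $\rho^{\otimes n}$ and $\rho'^{\otimes n}$ identical for every $n\le K$; and since any spectral estimator must work for $\rho$ in an arbitrary eigenbasis, it effectively receives $\E_U\bigl[(U\rho U^\dagger)^{\otimes n}\bigr]$, which depends on $\rho$ only through that distribution, so it cannot distinguish $\rho$ from $\rho'$ with $n\le K$ copies. Thus $n=\Omega(K)$, and taking the maximum of the two bounds proves the theorem.

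\emph{Main obstacle.} On the lower-bound side the hard part is \cref{thm:lb1to2}: obtaining the exponent $1/(q-1)$ rather than the $1/(2(q-1))$ that a black-box use of best-polynomial-approximation of $x^{q-1}$ on $[0,1]$ would yield --- this should require both the $\tr\rho$-weighted structure of $\tr\rbra{\rho^q}$ and a careful choice of supports and multiplicities so the construction is realized by bona fide quantum states (quite possibly through a $\chi^2$/mutual-information argument over an ensemble, in the spirit of the Valiant--Valiant moment-matching method, rather than a single pair). On the upper-bound side the crux is the variance bound for the non-plug-in weak-Schur-sampling estimator: showing that the unbiased estimator of $\sum_i Q(p_i)$ assembled from the $Z_k$'s is tight enough to give $n=\widetilde\Theta(\varepsilon^{-2/(q-1)})$, \emph{without} the $2^{\Theta(d)}$ loss that the monomial/power-sum expansion of $Q$ would incur --- which is exactly what organizing the estimator around the collision probabilities $Z_k$ is designed to avoid.
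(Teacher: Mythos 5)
Your write-up is a plan with two acknowledged holes, and both holes are exactly where the work lies; the paper's proof closes them by different (and simpler) means. On the upper bound, your route via the truncated binomial series for $x^{q-1}$, the quantities $Z_k=\sum_i p_i(1-p_i)^k$, and unbiased estimators assembled from shifted symmetric polynomials $p^\#_\mu$ under the Schur--Weyl distribution is coherent in outline, but the step you yourself label the ``technical crux''---a variance bound for the assembled estimator that does not blow up across the $\widetilde\Theta\rbra{\varepsilon^{-1/(q-1)}}$ terms---is never carried out, and merely asserting that it ``forces'' $n=\widetilde\Theta\rbra{\varepsilon^{-2/(q-1)}}$ is not a proof; this is precisely the kind of bound that is delicate in the polynomial-method literature. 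The paper avoids polynomial approximation and unbiasedness altogether: weak Schur sampling is used only to produce \emph{entrywise} eigenvalue estimates $\abs{\hat\alpha_j-\alpha_j}\le\varepsilon'$ with $O\rbra{\log(1/\delta')/\varepsilon'^2}$ samples (\cref{lemma:sample}, i.e.\ the second-moment bound of \cref{lemma:sw-2} plus a median trick), and the estimator is the truncated plug-in sum $\hat P=\sum_{j\le m}\hat\alpha_j^q$ with $\varepsilon'=(\varepsilon/5)^{1/(q-1)}$ and $m\approx 1/\varepsilon'$; the error analysis is elementary, using $\abs{\hat\alpha_j^q-\alpha_j^q}\le\abs{\hat\alpha_j-\alpha_j}\hat\alpha_j^{q-1}+\alpha_j\abs{\hat\alpha_j-\alpha_j}^{q-1}$, H\"older (\cref{prop-12251915}), and the tail bound $\sum_{j>m}\alpha_j^q\le m^{1-q}$ (\cref{lemma-12252035}). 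So the $\widetilde O\rbra{1/\varepsilon^{2/(q-1)}}$ bound needs no variance computation at all.

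On the lower bounds, your $\Omega\rbra{1/\varepsilon^2}$ two-point argument is correct and essentially the paper's \cref{thm:lb} (the paper invokes the quantum Chernoff bound via \cref{fact:qsd} where you use fidelity multiplicativity; either works). The gap is in the $\Omega\rbra{1/\varepsilon^{1/(q-1)}}$ part: your observation that matching the first $K$ power sums makes the Schur--Weyl outcome distributions identical for $n\le K$ is sound, but the moment-matched pair (or ensemble) achieving $\abs{\tr\rho^q-\tr\rho'^q}\ge 2\varepsilon$ with $K=\Theta\rbra{\varepsilon^{-1/(q-1)}}$ is never constructed, and you yourself flag that a duality-based construction may only reach exponent $\frac{1}{2(q-1)}$; as it stands this half of the theorem is unproved. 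The paper does not use exact moment matching: in \cref{thm:lb1to2} it compares a Haar-randomized rank-$r$ maximally mixed state, $r=\lfloor(2\varepsilon)^{-1/(q-1)}\rfloor$, with the maximally mixed state of dimension $d\approx\varepsilon^{-1/(q-1)}$, whose $\tr\rbra{\rho^q}$ values are $\varepsilon$-separated. Their power sums do \emph{not} match, but the induced distributions $\operatorname{SW}^n_r$ and $\operatorname{SW}^n_d$ are each within $\sqrt2\,n/r$ (resp.\ $\sqrt2\,n/d$) of the Plancherel distribution by \cref{lemma-241736} (CHW07), so the total variation distance---and hence, via \cref{thm:qsd} and the block structure from Schur--Weyl duality, the distinguishing advantage---is $O(n/r)$, giving $n=\Omega(r)=\Omega\rbra{1/\varepsilon^{1/(q-1)}}$. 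That quantitative closeness-to-Plancherel input is the ingredient your sketch is missing, and it is what lets the paper get the exponent $\frac{1}{q-1}$ without any bespoke moment construction.
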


Both \cref{thm:q>2-main,thm:q<2-main} improve the prior best upper bound $\widetilde O\rbra{1/\varepsilon^{3+\frac{2}{q-1}}}$ and lower bound $\Omega\rbra{1/\varepsilon}$ in \cite{LW25}.  
It is also noted that \cref{thm:q>2-main} gives a matching lower bound of $\Omega\rbra{1/\varepsilon^2}$ on the sample complexity of estimating $\tr\rbra{\rho^q}$ for every integer $q \geq 3$, implying that the Shift test in \cite{EAO+02} is sample-optimal to estimate $\tr\rbra{\rho^q}$ to within an additive error, generalizing the lower bounds in \cite{CWLY23,GHYZ24} for the optimality of the SWAP test \cite{BCWdW01} to estimate $\tr\rbra{\rho^2}$. 

\paragraph{The case of $0 < q < 1$.}
For $0 < q < 1$, we provide a quantum estimator with sample complexity $O\rbra{d^{2/q}/\varepsilon^{2/q}}$, only with room for quadratic improvements due to a lower bound of $\Omega\rbra{d^{1/q}/\varepsilon^{1/q}}$. 

\begin{theorem}[Estimator for $0 < q < 1$, informal version of \cref{thm-2100931,thm-03031025}] \label{thm:q<1-main}
    For every $0 < q < 1$, it is sufficient to use $O\rbra{d^{2/q}/\varepsilon^{2/q}}$ samples of the $d$-dimensional quantum state $\rho$ to estimate $\tr\rbra{\rho^q}$ to within additive error $\varepsilon$.
    On the other hand,  $\Omega\rbra{d^{1/q}/\varepsilon^{1/q}}$ samples of $\rho$ are necessary.
\end{theorem}

In \cref{thm:q<1-main}, if the rank of $\rho$ is known to be at most $r$ in advance, then we can replace $d$ in our upper and lower bounds with $r$.
It is noteworthy that \cref{thm:q<1-main} also implies an estimator for the R\'enyi entropy $\mathrm{S}_q^{\mathrm{R}}\rbra{\rho}$ for $0 < q < 1$ with sample complexity $O\rbra{d^{2/q}/\varepsilon^{2/q}}$,\footnote{Any estimator for the Tsallis entropy $\mathrm{S}_q^{\mathrm{T}}\rbra{\rho}$ is an estimator for the R\'enyi entropy $\mathrm{S}_q^{\mathrm{R}}\rbra{\rho}$ to the same additive error for $0 < q < 1$.} thus giving an alternative proof of \cite[Theorem 4]{AISW20}. 

\begin{remark}[Time efficiency of the estimators in \cref{thm:q>2-main,thm:q<2-main,thm:q<2-main,thm:q<1-main}]
The estimators in \cref{thm:q>2-main,thm:q<2-main} can actually be implemented with quantum time complexity $\poly\rbra{\log\rbra{d}, 1/\varepsilon}$ for any constant $q > 1$. 
This is because, in their implementations (\cref{algo:q>2,algo:1<q<2}), we only need the first $m$ entries of the output of the quantum spectrum estimation (\cref{algo:spectrum}) with $m \leq O\rbra{1/\varepsilon^{\max\cbra{1, \frac{1}{q-1}}}}$. 
On the other hand, \cref{algo:spectrum} uses $\mathsf{n} = \widetilde{O}\rbra{1/\varepsilon^{\max\cbra{\frac{2}{q-1},2}}}$ samples of $\rho$ and can be implemented with quantum time complexity $O\rbra{\mathsf{n}^3\polylog\rbra{\mathsf{n}, d}} = \widetilde{O}\rbra{1/\varepsilon^{\max\cbra{\frac{6}{q-1},6}}} \cdot \polylog\rbra{d}$ by weak Schur sampling \cite{CHW07}.\footnote{This quantum time complexity was noted in \cite{WZ24c,WZ24,Hay24}. This is achieved by using the implementation of weak Schur sampling introduced in \cite[Section 4.2.2]{MdW16}, equipped with the quantum Fourier transform over symmetric groups in \cite{KS16}.}
Similarly, the estimator in \cref{thm:q<1-main} can be implemented in \cref{algo-2100311}, using $\mathsf{n} = O\rbra{\rbra{d/\varepsilon}^{2/q}}$  samples of $\rho$ and thus with quantum time complexity $O\rbra{\mathsf{n}^3 \polylog\rbra{\mathsf{n}, d}} = \widetilde{O}\rbra{\rbra{d/\varepsilon}^{6/q}} \cdot \polylog\rbra{d}$. 
\end{remark}

We summarize the developments for the sample complexity of estimating $\tr\rbra{\rho^q}$ in \cref{tab:summary}.

\renewcommand{\arraystretch}{1.5}
\begin{table}[!htp]
\centering
\caption{Sample complexity of estimating $\tr\rbra{\rho^q}$.}
\label{tab:summary}
\vspace{3pt}
\begin{tabular}{|ccc|c|}
\hline
\multicolumn{1}{|c|}{$q \geq 2$}               & $1 < q < 2$ & \multicolumn{1}{|c|}{$0 < q < 1$}                & References        \\ \hline
\multicolumn{1}{|c|}{$O(1/\varepsilon^2)$, $q \in \mathbb{N}$}  & /                   & \multicolumn{1}{|c|}{/} & \cite{BCWdW01,EAO+02}  \\ \hline
\multicolumn{1}{|c|}{$\Omega\rbra{1/\varepsilon^2}$, $q=2$} & / & \multicolumn{1}{|c|}{/}                  & \cite{CWLY23,GHYZ24}  \\ \hline
\multicolumn{3}{|c|}{$\poly(d,1/\varepsilon)$}                            & \cite{AISW20,WGL+24,WZL24,WZ24}  \\ \hline
\multicolumn{2}{|c|}{$\widetilde O\rbra{1/\varepsilon^{3+\frac{2}{q-1}}}$,~~$\Omega\rbra{1/\varepsilon}$}                                & / & \cite{LW25}   \\ \hline
\multicolumn{1}{|c|}{$\widetilde\Theta\rbra{1/\varepsilon^2}$}      & \multicolumn{1}{c|}{\begin{tabular}{c} $\widetilde O\rbra{1/\varepsilon^{\frac{2}{q-1}}}$ \\ $\Omega\rbra{1/\varepsilon^{\max\cbra{\frac{1}{q-1}, 2}}}$ \end{tabular}} & \begin{tabular}{c} $\widetilde O\rbra{d^{2/q}/\varepsilon^{2/q}}$ \\ $\Omega\rbra{d^{1/q}/\varepsilon^{1/q}}$ \end{tabular} & This Work     \\ \hline
\end{tabular}
\end{table}

Moreover, from a complexity-theoretic perspective, we establish the computational hardness of estimating $\tr\rbra{\rho^q}$ for $0 < q < 1$ as follows.

\begin{theorem}[Informal version of \Cref{thm:TsallisQEA-NIQSZKhard}]
    \label{thm:NIQSZK-informal}
    For every $q\in(0,1)$, the promise problem of estimating $\tr\rbra{\rho^q}$, \TsallisQEA{}, is \NIQSZK{}-hard. 
\end{theorem}

It follows that an efficient estimator cannot exist unless \NIQSZK{} collapses to \BQP{}, which is highly unlikely and thus consistent with the qualitative change in the sample complexity bounds across different regimes of $q$ in \Cref{tab:summary}; in particular, an efficient estimator exists only in those regimes of $q$ for which the corresponding promise problem is \BQP{}-complete. In summary, we collect the computational hardness results for \TsallisQEA{} in \cref{tab:hardness}.

\renewcommand{\arraystretch}{1}
\begin{table}[!htp]
    \centering
    \caption{Computational hardness of \TsallisQEA{}.}
    \label{tab:hardness}
    \begin{tabular}{cccc}
    \toprule
         $q > 2$ & $1 < q \leq 2$ & \begin{tabular}{c}
    $q = 1$ \\ (von Neumann) \end{tabular} & $0 < q < 1$ \\
    \midrule
    \begin{tabular}{c}
    \BQP{}-complete \\ \cite{LW25,Liu26}\end{tabular} & \begin{tabular}{c}
    \BQP{}-complete \\ \cite{LW25}\end{tabular} & \begin{tabular}{c}
    \NIQSZK{}-complete \\ \cite{BASTS10,CCKV08}\end{tabular} & \begin{tabular}{c}
    \NIQSZK{}-hard \\ This Work \end{tabular}  \\
    \bottomrule
    \end{tabular}
\end{table}

\subsection{Techniques} \label{sec:technique}

\paragraph{Upper bounds.} Since the trace of quantum state power \(\tr(\rho^q)\) is a unitarily invariant quantity, it is well-known that there exists a canonical estimator performing weak Schur sampling~\cite{CHW07,MdW16,OW21} on \(\rho^{\otimes n}\) to obtain a Young diagram outcome \(\lambda\) and 
then predicting the final result \(\tr(\rho^q)\) based on \(\lambda\). The most straightforward way to do this is to treat each \(\lambda_i/n\), where \(\lambda_i\) is the \(i\)-th row of \(\lambda\), as an estimate of the \(i\)-th large eigenvalue of \(\rho\), and then output \(\sum_{i}(\lambda_i/n)^q\) as the final result, which is what is called the \textit{plug-in estimator}. 
Existing quantum plug-in estimators are known for, e.g., von Neumann entropy and R\'enyi entropy in \cite{AISW20,BMW16}. 

However, directly using the plug-in estimator with current error bounds for weak Schur sampling in \cite{OW17} seems to be difficult to avoid the dependence on the dimension (or rank) of $\rho$ appearing in the accumulation of errors. 
This is very different from the classical empirical estimation. For example, the classical plug-in estimators for $\sum_i p_i^{q}$ in \cite{JVHW15,JVHW17} suffice to achieve the optimal sample complexity, while the same strategy might introduce an \textit{unexpected} factor of $\operatorname{poly}(d)$ in the quantum case, where $d$ is the dimension.
To overcome this limitation, we develop non-plug-in estimators for $\tr\rbra{\rho^q}$. 
Our non-plug-in estimator adopts a simple but effective truncation strategy which eliminates the dimension (or rank) in the complexity. 
Specifically, having obtained an estimated spectrum $\hat \alpha = \rbra{\hat\alpha_1, \hat\alpha_2, \dots, \hat\alpha_d}$ of $\rho$ to certain precision with $\hat\alpha_1 \geq \hat\alpha_2 \geq \dots \geq \hat\alpha_d$ (with $\alpha = \rbra{\alpha_1, \alpha_2, \dots, \alpha_d}$ the true sorted spectrum of $\rho$), our non-plug-in estimator is then of the form 
\[
    \hat P = \sum_{j=1}^m \hat \alpha_j^q,
\]
where $m$ is a truncation parameter such that the lower-order errors are controlled by the eigenvalues (which are finally suppressed due to constantly upper bounded partial sums), and the higher-order errors are accumulated with scaling only depending on $m$ (thus suppressed with negligible truncation bias).
In sharp contrast to the quantum plug-in estimators in the literature \cite{AISW20,BMW16}, our non-plug-in construction can be shown to achieve optimal sample complexity only up to a logarithmic factor (see \cref{sec:q>2,sec:1<q<2} for more details).
As a result, we obtain sample upper bounds $\widetilde{O}\rbra{1/\varepsilon^2}$ for $q > 2$, $\widetilde{O}\rbra{1/\varepsilon^{\frac{2}{q-1}}}$ for $1 < q < 2$, and $\widetilde{O}\rbra{d^{2/q}/\varepsilon^{2/q}}$ for $0 < q < 1$.
Note that the exponent of the upper bound does not depend on $q$ for constant $q > 2$, which is in contrast to $1 < q < 2$.  
This is because we borrow a factor $\alpha_i$ from $\alpha_i^q$ to control the error $|\hat{\alpha}_i-\alpha_i|$ (to avoid $d$-dependence), and the fluctuation of $\hat{\alpha}_i^{q-1}$ is small enough when $q>2$ (see \cref{eq:using-power-eq-1}), causing $q$ to disappear from the exponent.

\paragraph{Lower bounds.} 
The lower bound \(\Omega(1/\varepsilon^2)\) for any constant \(q>1\) is obtained by reducing from a state discrimination task with a simple but effective hard instance from \cite{CWLY23,GHYZ24}.

The lower bound \(\Omega(1/\varepsilon^{\frac{1}{q-1}})\) for \(1<q<2\) is obtained by reducing a state discrimination task on ensembles of quantum states. Specifically, we consider two unitarily invariant ensembles of quantum states that are maximally mixed with respect to different dimensions. 
Then, we show that the discrimination between these ensembles can be characterized by the discrimination between certain Schur--Weyl distributions in their total variation distance. 
To bound the total variation distance, we recall the relationship between the Schur--Weyl distributions and Plancherel distributions shown in \cite{CHW07}, which demands a linear scaling with the dimensions. With carefully chosen dimension parameters, we can obtain our lower bound.

The lower bound $\Omega(d^{1/q}/\varepsilon^{1/q})$ for $0<q<1$ can also be established via a reduction from a state discrimination task over ensembles of quantum states. Compared to the previous construction, we introduce an additional parameter $\Delta$ that controls the frequency of ``valid'' samples: among $n$ samples, only $O(\Delta n)$ of them (with high probability) contribute to the discrimination task. 
These ``valid'' samples exhibit unitary invariance and induce specific Schur--Weyl distributions. We can then upper bound the relevant trace distance by using the Plancherel distribution as an intermediate tool~\cite{CHW07}.

\paragraph{Computational hardness for $0<q<1$.} Our \NIQSZK{}-hardness results conceptually follow the framework of~\cite{CCKV08,KLGN19}, which treats the case $q=1$ corresponding to the von Neumann entropy. At a high level, this framework reduces \QSCMM{}, the problem of distinguishing an arbitrary efficiently preparable $n$-qubit state $\rho$ from the maximally mixed state $(I/2)^{\otimes n}$ with respect to the trace distance, to the promise problem \TsallisQEA{} of estimating the $q$-Tsallis entropy of $\rho$. Because $\rho$ and $(I/2)^{\otimes n}$ are simultaneously diagonalizable, the reduction relies on an inequality relating the $q$-Tsallis entropy of the eigenvalue distribution of $\rho$ to its total variation distance from the uniform distribution. 

Proving such an inequality leads to an optimization problem over the eigenvalue distribution. 
For the regime $1\leq q \leq 1+\frac{1}{n}$, the (possibly non-convex) optimization problems arising in the proofs of~\cite{KLGN19,LW25} admit explicit solutions.\footnote{As observed in~\cite{KLGN19}, when $q=1$, the upper bound on the Shannon entropy in this setting follows from Vajda's inequality~\cite{Vajda70}.} In fact, a direct analog of this approach is available only for $q=1/2$. In contrast, for the general regime $0<q<1$, obtaining an explicit solution is often challenging due to the non-convex nature of the optimization problem. Instead, the $q$-Tsallis entropy is upper bounded by analyzing a truncated Taylor expansion of the objective function together with a Lagrange remainder term, yielding a weaker bound (see \Cref{thm:inequality-uniformTV-TsallisEA}) that nevertheless suffices to establish the reduction $\QSCMM{} \leq \TsallisQEA$ and hence the \NIQSZK{}-hardness. 

\subsection{Related Work}

After the work of \cite{BCWdW01,EAO+02}, there have been a series of subsequent work focusing on the estimation of $\tr\rbra{\rho^q}$ for integer $q \geq 2$ \cite{Bru04,vEB12,JST17,SCC19,YS21,QKW24,ZL24,SLLJ24,YLLW24,CWYZ25,Wan25,SJW+25}. 
The quantum query complexity of entropy estimation has also been extensively studied in the literature, including the von Neumann entropy \cite{GL20,GHS21,WGL+24} and R\'enyi entropy \cite{SH21,WGL+24,WZL24}.
As the classical counterpart, estimating the functional $\sum_{j=1}^N p_j^q$ of a probability distribution $p$ to within an additive error was studied in \cite{AK01} for integer $q \geq 2$, and later in \cite{JVHW15,JVHW17} for non-integer $q$; its estimation to a multiplicative error was studied in \cite{AOST17} for R\'enyi entropy estimation. 
In addition, Shannon entropy estimation was studied in 
\cite{Pan03,Pan04,VV11a,VV11b,VV17,WY16}.

Given sample access to the quantum states to be tested, quantum estimators and testers for their properties have been investigated in the literature. 
The first optimal quantum tester was discovered in \cite{CHW07}, which distinguishes whether a quantum state has a spectrum uniform on $r$ or $2r$ eigenvalues. 
This was later generalized to an optimal tester for mixedness in \cite{OW21} and to quantum state certification in \cite{BOW19}. 
In addition, optimal estimators are known for R\'enyi entropy of integer order \cite{AISW20}, and the closeness (trace distance and fidelity) between pure quantum states \cite{WZ24b}. 
A distributed optimal estimator was known for the inner product of quantum states \cite{ALL22}.
Estimators and testers with incoherent measurements are also known for purity \cite{CCHL21,GHYZ24}, unitarity \cite{CCHL21,CWLY23}, certification \cite{CHLL22,LA24}, and $\tr\rbra{\rho^q}$ for integer $q$ (further used for spectrum estimation)~\cite{PTTW25}. 
In addition to those that were known to be optimal, there are also estimators for entropy \cite{AISW20,BMW16,WZ24,LW25}, relative entropy \cite{Hay24}, fidelity \cite{GP22}, and trace distance \cite{WZ24c}. 

\subsection{Discussion}

In this paper, we presented quantum estimators for estimating $\tr\rbra{\rho^q}$ for non-integer $q > 1$, significantly improving the prior approaches.
In particular, for $q > 2$, our estimators achieve optimal sample complexity only up to a logarithmic factor. 
Our (non-plug-in) estimators are directly constructed by weak Schur sampling with optimal sample complexity (although every estimator for unitarily invariant properties is known to imply a canonical estimator based on weak Schur sampling \cite[Lemma 20]{MdW16}), in addition to the (plug-in) optimal estimator for R\'enyi entropy of integer order \cite{AISW20}, the optimal testers for mixedness \cite{OW21} and quantum state certification \cite{BOW19}, and the optimal learners for full tomography \cite{HHJ+17,OW16}. 
At the end of the discussion, we list some questions in this direction for future research. 

\begin{enumerate}
    \item Can we remove the logarithmic factor from the sample complexity obtained in this paper?
    \item Can we improve the upper or the lower bound for $1 < q < 2$?
    \item Can we find more (plug-in or non-plug-in) optimal estimators based on weak Schur sampling?
    \item Can we obtain optimal estimators for $\tr\rbra{\rho^q}$ with restricted measurements?
\end{enumerate}

\section{Preliminaries}

\subsection{Notations}
In our quantum algorithms for the cases of $q>2$ and $1<q<2$, we use the following three parameters $m, \delta', \varepsilon'$, where $m$ is the position where the truncation is taken, and $\delta'$ and $\varepsilon'$ are, respectively, the failure probability and the precision when applying the quantum spectrum estimation with entry-wise bounds in \cref{sec:qse}. 
Specifically, $m \in \sbra{d}$ is a positive integer and $\delta', \varepsilon' \in \rbra{0, 1}$ are real numbers, all of which are to be determined later. 
For the case of $0<q<1$, we use the parameter $\eta$, which specifies the precision for estimating the spectrum of $\rho$ in the total variation distance.
Throughout this paper, we assume that $\rho$ has the spectrum decomposition:
\begin{equation*}
    \rho = \sum_{j=1}^d \alpha_j \ketbra{\psi_j}{\psi_j},
\end{equation*}
where $\alpha_1\geq \alpha_2\geq \cdots \geq \alpha_d \geq 0$ with $\sum_{j=1}^d \alpha_j = 1$ and $\cbra{\ket{\psi_j}}$ is an orthonormal basis. 

In addition, we adopt the notion of the \emph{$q$-logarithm} $\ln_q(x) \coloneqq \frac{x^{1-q}-1}{1-q}$, which converges to the natural logarithm $\ln(x)$ as $q\to 1$. This notion arises naturally in the definition of the quantum $q$-Tsallis entropy,  as reflected by the alternative expression $\Sq(\rho) = - \tr\rbra*{\rho^q \ln_q(\rho)}$.

For our purpose, we also need the following inequalities.

\begin{fact} \label{fact:power-ge-1}
    For $\alpha > 1$ and $x, y \in \sbra{0, 1}$, we have $x^{\alpha} \leq x$ and $\abs{x^\alpha - y^\alpha} \leq \alpha \abs{x - y}$. 
\end{fact}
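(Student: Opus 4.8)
The plan is to prove both inequalities by entirely elementary means, treating the boundary value $x = 0$ (and $y = 0$) separately under the standard convention $0^{\alpha} = 0$ for $\alpha > 0$, and reducing everything else to a one-variable calculus estimate.

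For the first inequality $x^{\alpha} \leq x$: the case $x = 0$ is immediate, so assume $x \in (0, 1]$. I would simply factor $x^{\alpha} = x \cdot x^{\alpha - 1}$ and note that $x^{\alpha - 1} = \exp\rbra{(\alpha - 1)\ln x} \leq \exp(0) = 1$, since $\ln x \leq 0$ and $\alpha - 1 > 0$. Multiplying through by $x \geq 0$ gives $x^{\alpha} \leq x$.

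For the second inequality $\abs{x^{\alpha} - y^{\alpha}} \leq \alpha \abs{x - y}$: assume $x \neq y$ (otherwise it is trivial), and apply the mean value theorem to $f(t) = t^{\alpha}$ on the closed segment between $x$ and $y$. Because $\alpha > 1$, the function $f$ is continuously differentiable on $[0,1]$ with $f'(t) = \alpha t^{\alpha - 1}$, and for $t \in [0,1]$ we have $t^{\alpha - 1} \leq 1$ (again using $\alpha - 1 > 0$), so $\abs{f'(t)} \leq \alpha$ throughout the interval. Hence there is $\xi$ between $x$ and $y$ with $\abs{x^{\alpha} - y^{\alpha}} = \abs{f'(\xi)}\,\abs{x - y} \leq \alpha \abs{x - y}$. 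A calculus-free variant is to write $x^{\alpha} - y^{\alpha} = \int_{y}^{x} \alpha t^{\alpha - 1}\, dt$ and bound the integrand pointwise by $\alpha$, which yields the same conclusion.

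There is essentially no obstacle in this proof; the only point that warrants a moment's care is the endpoint $t = 0$: one must fix the convention $0^{\alpha} = 0$, and observe that $f(t) = t^{\alpha}$ remains $C^{1}$ at $t = 0$ (with $f'(0) = 0$) precisely because $\alpha > 1$, so the mean value theorem (or the integral formula) applies uniformly on all of $[0,1]$ without splitting off the boundary.
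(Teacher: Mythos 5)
Your proof is correct and follows essentially the same route as the paper, which simply invokes the mean value theorem for $f(t)=t^{\alpha}$; your additional care about the endpoint $t=0$ and the elementary factorization $x^{\alpha}=x\cdot x^{\alpha-1}$ for the first inequality are sound refinements of the same idea.
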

\begin{proof}
    This fact follows by applying the mean value theorem on the function $f\rbra{x} = x^{\alpha}$.
\end{proof}

\begin{fact}\label{prop-12251837}
For \(0\leq x\leq y\leq 1\) and \(0< s< 1\) we have
$y^s-x^s\leq (y-x)^s$.
\end{fact}
\begin{proof}
This fact follows by considering the derivative of the function \(f(x)\coloneqq (y-x)^s +x^s\).
\end{proof}

\begin{fact}\label{prop-12251915}
For \(0<s<1\) and \( x_i\geq 0\) for all $1 \leq i \leq k$, we have
\[\sum_{i=1}^k x_i^s \leq k^{1-s}\cdot \left(\sum_{i=1}^k x_i\right)^s. \]
\end{fact}
\begin{proof}
Let \(y_i = x_i^s\). By Roger--H{\"o}lder's inequality \cite{Rog88,Hol89}, 
\begin{align}
\sum_{i=1}^k x_i^s &=\sum_{i=1}^k y_i \leq \left(\sum_{i=1}^k 1^{\frac{1}{1-s}}\right)^{1-s} \left(\sum_{i=1}^k y_i^{\frac{1}{s}}\right)^s = k^{1-s} \cdot \left(\sum_{i=1}^k x_i\right)^s. \nonumber 
\end{align}
\end{proof}

\subsection{Basics in quantum computing}
A $d$-dimensional (mixed) quantum state can be described by a $d \times d$ complex-valued positive semidefinite matrix $\rho \in \mathbb{C}^{d \times d}$ satisfying $\tr\rbra{\rho} = 1$. 
The trace distance between two quantum states $\rho_0$ and $\rho_1$ is defined by
\begin{equation*}
    \td(\rho_0,\rho_1) \coloneqq \frac{1}{2} \Abs*{\rho_0-\rho_1}_1 = \frac{1}{2} \tr\rbra*{\abs*{\rho_0-\rho_1}}.
\end{equation*}
The fidelity between two quantum states $\rho_0$ and $\rho_1$ is defined by
\begin{equation*}
    \mathrm{F}\rbra{\rho_0, \rho_1} \coloneqq \tr\rbra*{\sqrt{\sqrt{\rho_1}\rho_0\sqrt{\rho_1}}}.
\end{equation*} 
To discriminate two quantum states, we include the following well-known results. 
The following theorem can be found in {\cite[Section 9.1.4]{Wil13}}, {\cite[Lemma 3.2]{Hay16}}, and {\cite[Theorem 3.4]{Wat18}}.

\begin{theorem} [Quantum state discrimination, cf.\ {\cite[Section 9.1.4]{Wil13}}, {\cite[Lemma 3.2]{Hay16}}, and {\cite[Theorem 3.4]{Wat18}}] \label{thm:qsd}
    Any POVM $\Lambda = \cbra{\Lambda_0, \Lambda_1}$ that distinguishes two quantum states $\rho_0$ and $\rho_1$ (each with a priori probability $1/2$) with success probability
    \begin{equation*}
        \frac{1}{2} \tr\rbra{\Lambda_0 \rho_0} + \frac{1}{2} \tr\rbra{\Lambda_1 \rho_1} \leq \frac{1}{2}\rbra*{1 + \frac{1}{2}\Abs{\rho_0- \rho_1}_1}.
    \end{equation*}
\end{theorem}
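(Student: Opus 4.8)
The plan is to establish this standard Holevo--Helstrom bound by collapsing the two-outcome success probability into a single trace expression and then bounding it via the Jordan--Hahn decomposition of the Hermitian operator $\Delta \coloneqq \rho_0 - \rho_1$.

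First I would use the POVM normalization $\Lambda_0 + \Lambda_1 = I$ together with $\tr\rbra{\rho_1} = 1$ to rewrite
\[
    \tfrac12 \tr\rbra{\Lambda_0 \rho_0} + \tfrac12 \tr\rbra{\Lambda_1 \rho_1} = \tfrac12 \tr\rbra{\Lambda_0 \rho_0} + \tfrac12 \tr\rbra*{\rbra{I - \Lambda_0}\rho_1} = \tfrac12 + \tfrac12 \tr\rbra{\Lambda_0 \Delta},
\]
so it remains to show $\tr\rbra{\Lambda_0 \Delta} \le \tfrac12 \Abs{\Delta}_1$. Since $\Delta$ is Hermitian, I write its Jordan decomposition $\Delta = \Delta_+ - \Delta_-$ with $\Delta_+, \Delta_- \succeq 0$ supported on orthogonal subspaces, so that $\abs{\Delta} = \Delta_+ + \Delta_-$ and $\Abs{\Delta}_1 = \tr\rbra{\Delta_+} + \tr\rbra{\Delta_-}$; moreover $\tr\rbra{\Delta} = \tr\rbra{\rho_0} - \tr\rbra{\rho_1} = 0$ forces $\tr\rbra{\Delta_+} = \tr\rbra{\Delta_-} = \tfrac12 \Abs{\Delta}_1$.

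Then, using $0 \preceq \Lambda_0 \preceq I$ and $\Delta_\pm \succeq 0$,
\[
    \tr\rbra{\Lambda_0 \Delta} = \tr\rbra{\Lambda_0 \Delta_+} - \tr\rbra{\Lambda_0 \Delta_-} \le \tr\rbra{\Lambda_0 \Delta_+} \le \tr\rbra{\Delta_+} = \tfrac12 \Abs{\Delta}_1,
\]
which is exactly the claimed inequality; the two estimates used are $\tr\rbra{\Lambda_0 \Delta_-} \ge 0$ and $\tr\rbra{\Lambda_0 \Delta_+} \le \tr\rbra{\Delta_+}$, both immediate from positivity of $\Lambda_0$, of $I - \Lambda_0$, and of $\Delta_\pm$. (Equivalently, one may invoke the variational identity $\Abs{\Delta}_1 = \max_{-I \preceq M \preceq I} \tr\rbra{M\Delta}$ with the choice $M = 2\Lambda_0 - I$.) There is no genuine obstacle here: the only care needed is the sign bookkeeping and the use of tracelessness of $\Delta$. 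If one additionally wanted to record that the bound is tight, the projector onto the support of $\Delta_+$ attains equality, though the statement as written asserts only the upper bound.
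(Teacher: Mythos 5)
Your proof is correct: the reduction to bounding $\tr\rbra{\Lambda_0\rbra{\rho_0-\rho_1}}$ via POVM normalization, followed by the Jordan--Hahn decomposition and the tracelessness of $\rho_0-\rho_1$, is exactly the standard Holevo--Helstrom argument. The paper itself gives no proof of this theorem and simply cites textbook references, and your argument is essentially the one found in those sources, so there is nothing further to reconcile.
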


The following fact was noted in \cite[Section 1]{HHJ+17}, which is closely related to the quantum Chernoff bound \cite{NS09,ACMT+07}.

\begin{fact} \label{fact:qsd}
    The sample complexity for distinguishing two quantum states $\rho_0$ and $\rho_1$ is $\Omega\rbra{1/\gamma}$,
    where $\gamma = 1 - \mathrm{F}\rbra{\rho_0, \rho_1}$ is the infidelity. 
\end{fact}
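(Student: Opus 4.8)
\emph{Proof sketch.} The statement is the standard ``easy direction'' of state discrimination, and the plan is a short reduction to \cref{thm:qsd} together with multiplicativity of the fidelity under tensor powers. Fix any constant $c \in (1/2, 1)$ as the target success probability, and suppose that some measurement on $n$ copies of the unknown state distinguishes $\rho_0$ from $\rho_1$ with probability at least $c$. A measurement on $n$ copies is exactly a POVM on the $n$-fold tensor power, so applying \cref{thm:qsd} with $\rho_0^{\otimes n}$ and $\rho_1^{\otimes n}$ in place of $\rho_0,\rho_1$ forces $\frac{1}{2}\Abs{\rho_0^{\otimes n} - \rho_1^{\otimes n}}_1 \geq 2c-1$.

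Next I would convert this trace-distance lower bound into a lower bound on $n$ via the fidelity. Recall the Fuchs--van de Graaf inequality $\frac{1}{2}\Abs{\sigma_0 - \sigma_1}_1 \leq \sqrt{1 - \mathrm{F}\rbra{\sigma_0, \sigma_1}^2}$, valid for all states, together with the multiplicativity $\mathrm{F}\rbra{\rho_0^{\otimes n}, \rho_1^{\otimes n}} = \mathrm{F}\rbra{\rho_0, \rho_1}^n = (1-\gamma)^n$. Plugging in $\sigma_i = \rho_i^{\otimes n}$ gives $2c-1 \leq \sqrt{1 - (1-\gamma)^{2n}}$, i.e.\ $(1-\gamma)^{2n} \leq 1 - (2c-1)^2$, which rearranges to $n \geq \frac{\ln\rbra{1/(1-(2c-1)^2)}}{-2\ln(1-\gamma)}$. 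Since $-\ln(1-\gamma) \leq \gamma/(1-\gamma) \leq 2\gamma$ whenever $\gamma \leq 1/2$, this yields $n = \Omega(1/\gamma)$; and for $\gamma > 1/2$ the claimed bound $\Omega(1/\gamma)$ degenerates to the trivial $\Omega(1)$, so that regime needs no argument.

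I do not expect any genuine obstacle here: the only points requiring care are (i) pinning down the success-probability convention and splitting into the two ranges of $\gamma$ as above, and (ii) correctly invoking the two structural facts, namely the Holevo--Helstrom bound (\cref{thm:qsd}) and multiplicativity of the fidelity under tensor powers. As an alternative matching the route indicated by the reference, one may instead use the quantum Chernoff bound \cite{NS09, ACMT+07}: the minimum discrimination error for $\rho_0^{\otimes n}$ versus $\rho_1^{\otimes n}$ is, up to a constant factor, at least $Q^n$ with $Q = \min_{s \in [0,1]} \tr\rbra{\rho_0^s \rho_1^{1-s}} \leq \mathrm{F}\rbra{\rho_0, \rho_1} = 1-\gamma$, so constant-error discrimination requires $n\ln(1/Q) = \Omega(1)$, and $\ln(1/Q) \geq \ln\rbra{1/(1-\gamma)} = \Theta(\gamma)$ for small $\gamma$ again gives $n = \Omega(1/\gamma)$.
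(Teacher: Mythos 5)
Your main argument is correct and complete, but it is a genuinely different route from the one the paper leans on: the paper gives no proof at all for \cref{fact:qsd}, simply noting that it appears in the cited work (HHJ+17, Section 1) as a consequence of the quantum Chernoff bound of \cite{NS09,ACMT+07}. Your route instead stays elementary and self-contained: treat any $n$-sample distinguisher as a two-outcome POVM on $\rho_0^{\otimes n}$ vs.\ $\rho_1^{\otimes n}$, apply the Holevo--Helstrom bound (\cref{thm:qsd}) to get $\tfrac{1}{2}\Abs{\rho_0^{\otimes n}-\rho_1^{\otimes n}}_1 \geq 2c-1$, and then combine the Fuchs--van de Graaf upper bound with multiplicativity of the (root) fidelity, $\mathrm{F}\rbra{\rho_0^{\otimes n},\rho_1^{\otimes n}} = (1-\gamma)^n$, to force $(1-\gamma)^{2n} \leq 1-(2c-1)^2$ and hence $n=\Omega(1/\gamma)$ (with the harmless case split at $\gamma>1/2$). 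The algebra, including $-\ln(1-\gamma)\leq \gamma/(1-\gamma)\leq 2\gamma$ for $\gamma\leq 1/2$, checks out. What each approach buys: yours uses only ingredients already stated in the paper plus two textbook facts and yields an explicit constant; the quantum Chernoff route invoked by the citation gives the sharper exponent $\xi_{\mathrm{QCB}}$ and stronger asymptotics, which is more than is needed for this $\Omega(1/\gamma)$ statement.

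One caution about your ``alternative'' Chernoff sketch at the end: as written, the inequality points the wrong way. From an error lower bound of order $Q^n$ you get $n \geq \Omega\rbra{1/\ln(1/Q)}$, and the bound $Q \leq \mathrm{F}\rbra{\rho_0,\rho_1}$ gives $\ln(1/Q) \geq \Theta(\gamma)$, which only caps $1/\ln(1/Q)$ from above and does \emph{not} imply $n = \Omega(1/\gamma)$. To make that route work you need the reverse comparison $Q \geq \mathrm{F}\rbra{\rho_0,\rho_1}^2$ (a known companion inequality), so that $\ln(1/Q) \leq 2\ln\rbra{1/(1-\gamma)} = O(\gamma)$ and the desired $n = \Omega(1/\gamma)$ follows. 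Since this is only your backup route, it does not affect the correctness of the submission, but fix the direction if you keep that remark.
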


\subsection{Basic representation theory}

A \textit{representation} of a group \(G\) is a pair \((\mu,\mathcal{H})\), where \(\mathcal{H}\) is a (complex) Hilbert space, and \(\mu: G\rightarrow \textup{GL}(\mathcal{H})\) is a group homomorphism from \(G\) to the general linear group on \(\mathcal{H}\).\footnote{In this paper, we mostly consider the case that \(G\) is finite or compact, where without loss of generality we can assume \(\mu:G\rightarrow \mathbb{U}(\mathcal{H})\) is unitary.} 
We also call \(\mu(g)\) the action of \(g\in G\) on \(\mathcal{H}\). When the group action is clear from the context, we may omit \(\mu\) and directly use \(\mathcal{H}\) to refer to the representation of \(G\).

A \textit{sub-representation} of \((\mu,\mathcal{H})\) is a representation \((\mu',\mathcal{H}')\), where \(\mathcal{H}'\) is a subspace of \(\mathcal{H}\) and \(\mu'(g)\) is the restriction of \(\mu(g)\) to \(\mathcal{H}'\). 
A representation \(\mathcal{H}\) of \(G\) is \textit{irreducible} if the only sub-representations of \(\mathcal{H}\) are \(\{0\}\) and \(\mathcal{H}\) itself.
A \textit{representation homomorphism} between two representations \((\mu_1,\mathcal{H}_1),(\mu_2,\mathcal{H}_2)\) of group \(G\) is a linear operator \(F:\mathcal{H}_1\rightarrow \mathcal{H}_2\) which commutes with the action of \(G\), i.e., 
\begin{equation*}
F\mu_1(g)=\mu_2(g)F.
\end{equation*}
A \textit{representation isomorphism} is a representation homomorphism that is also a full-rank linear map. Two representations \(\mathcal{H}_1\) and \(\mathcal{H}_2\) of a group \(G\) are said to be \textit{isomorphic} if there exists a representation isomorphism between them, and we write \(\mathcal{H}_1\stackrel{G}{\cong}\mathcal{H}_2\). Then, we introduce the Schur's Lemma, which is an important and basic result in representation theory.
\begin{fact}[Schur's Lemma, see, e.g.~{\cite[Proposition 2.3.9]{etingof2011introduction}}] \label{prop:202404032229}
Let \(\mathcal{H}_1,\mathcal{H}_2\) be irreducible representations of a group \(G\). If \(F: \mathcal{H}_1\rightarrow \mathcal{H}_2\) is a non-zero homomorphism of representations, then \(F\) is an isomorphism.
\end{fact}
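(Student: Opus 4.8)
The plan is to study the kernel and the image of $F$: both are sub-representations, and irreducibility of $\mathcal{H}_1$ and $\mathcal{H}_2$ pins each of them down to the two trivial options, forcing $F$ to be bijective.

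First I would check that $\ker F$ is a sub-representation of $\mathcal{H}_1$. It is a linear subspace, and for any $v \in \ker F$ and $g \in G$ the homomorphism property gives $F\rbra{\mu_1(g)v} = \mu_2(g)F(v) = 0$, so $\mu_1(g)v \in \ker F$; thus $\ker F$ is invariant under the action of $G$. Since $\mathcal{H}_1$ is irreducible, either $\ker F = \{0\}$ or $\ker F = \mathcal{H}_1$, and the latter is excluded because $F \neq 0$; hence $F$ is injective. Symmetrically, $\operatorname{im} F$ is a linear subspace of $\mathcal{H}_2$, and for $w = F(v) \in \operatorname{im} F$ we have $\mu_2(g)w = \mu_2(g)F(v) = F\rbra{\mu_1(g)v} \in \operatorname{im} F$, so $\operatorname{im} F$ is $G$-invariant. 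Irreducibility of $\mathcal{H}_2$ then forces $\operatorname{im} F \in \{\{0\}, \mathcal{H}_2\}$, and since $F \neq 0$ we get $\operatorname{im} F = \mathcal{H}_2$, i.e.\ $F$ is surjective.

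Combining the two, $F$ is a bijective (hence full-rank) linear map, so it remains only to observe that $F^{-1}$ is again a representation homomorphism: conjugating the relation $F\mu_1(g) = \mu_2(g)F$ by $F^{-1}$ on both sides yields $\mu_1(g)F^{-1} = F^{-1}\mu_2(g)$, so $F^{-1}$ commutes with the $G$-action. Therefore $F$ is a representation isomorphism. There is no real obstacle in this argument; the only points requiring care are verifying $G$-invariance for \emph{all} $g \in G$ when calling $\ker F$ and $\operatorname{im} F$ sub-representations, and (were one to work beyond the finite-dimensional setting relevant here) ensuring the subspaces in play are closed — neither of which is an issue for the representations used in this paper.
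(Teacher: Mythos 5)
Your argument is correct and complete: it is exactly the standard kernel/image proof of Schur's Lemma, which is the argument given in the reference the paper cites ({\cite[Proposition 2.3.9]{etingof2011introduction}}); the paper itself states this as a known fact without proof. One small remark: under the paper's definition, a representation isomorphism is just a full-rank representation homomorphism, so once you have bijectivity you are already done, and your final verification that $F^{-1}$ intertwines the $G$-actions, while correct, is not needed.
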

The following is a direct and useful corollary of Schur's Lemma.
\begin{corollary}\label{coro-240446}
Suppose \(\mathcal{H}\) is an irreducible representation of \(G\) and \(F:\mathcal{H}\rightarrow \mathcal{H}\) is a representation homomorphism. Then \(F=cI\) where \(c\) is a complex number. 
\end{corollary}

\subsubsection{Schur--Weyl duality}

A \textit{Young diagram} \(\lambda\) with \(n\) boxes and at most \(d\) rows is a partition \(\lambda=(\lambda_1,\ldots,\lambda_d)\) of \(n\) such that \(\sum_{i} \lambda_i=n\) and \(\lambda_1\geq\cdots \geq \lambda_d\geq 0\). 
For example, the Young diagram with \(8\) boxes and \(3\) rows, identified by the partition \((4,2,1)\) is:
\begin{equation*}
\vcenter{\hbox{\scalebox{0.9}{\begin{ytableau}~&~&~&~\\~&~\\~\end{ytableau}}}}.
\end{equation*}
We use \(\lambda\vdash n\) to denote that \(\lambda\) is a Young diagram with \(n\) boxes.

Consider the actions of the symmetric group \(\mathfrak{S}_n\) and the unitary group \(\mathbb{U}_d\) on the Hilbert space \((\mathbb{C}^d)^{\otimes n}\). For any \(U\in\mathbb{U}_d\), \(U\) acts on \((\mathbb{C}^d)^{\otimes n}\) by
\[\ket{\psi_1}\otimes \cdots\otimes \ket{\psi_n} \mapsto U\ket{\psi_1}\otimes \cdots \otimes U\ket{\psi_n},\]
and for any \(\pi\in\mathfrak{S}_n\), \(\pi\) acts on \((\mathbb{C}^d)^{\otimes n}\) by
\begin{equation*}
\ket{\psi_1}\otimes \cdots\otimes \ket{\psi_n}\mapsto \ket{\psi_{\pi^{-1}(1)}}\otimes \cdots \otimes \ket{\psi_{\pi^{-1}(n)}}.
\end{equation*}
For convenience, we directly use \(U^{\otimes n}\) and \(\pi\) to denote the action of \(U\in\mathbb{U}_d\) and \(\pi\in\mathfrak{S}_n\) on \((\mathbb{C}^d)^{\otimes n}\).

Note that \(U^{\otimes n}\) and \(\pi\) commute with each other, which means \((\mathbb{C}^d)^{\otimes n}\) is also a representation of the group \(\mathfrak{S}_n\times \mathbb{U}_d\). 
This is characterized by the following renowned Schur--Weyl duality.
\begin{fact}[Schur--Weyl duality~\cite{fulton2013representation,etingof2011introduction}]\label{prop-432239}
\begin{equation*}
(\mathbb{C}^{d})^{\otimes n}\stackrel{\mathfrak{S}_n\times \mathbb{U}_d}{\cong}\bigoplus_{\lambda \vdash n}\mathcal{P}_\lambda\otimes \mathcal{Q}^d_\lambda,
\end{equation*}
where \(\mathcal{P}_\lambda\) and \(\mathcal{Q}_\lambda^d\) are irreducible representations of \(\mathfrak{S}_n\) and \(\mathbb{U}_d\), respectively, and are labeled by a Young diagram \(\lambda\vdash n\).\footnote{Note that if the Young diagram \(\lambda\) has more than \(d\) rows, then \(\mathcal{Q}_\lambda^d=0\).}
\end{fact}
For \(\pi\in \mathfrak{S}_n\) and \(U\in\mathbb{U}_d\), we use \(\mathtt{p}_{\lambda}(\pi)\) and \(\mathtt{q}_{\lambda}(U)\) to denote their actions on \(\mathcal{P}_\lambda\) and \(\mathcal{Q}_{\lambda}^d\), respectively. 
\begin{remark}\label{remark-241950}
In fact, \(\mathtt{q}_\lambda\) can be extended naturally to the actions of the group \(\textup{GL}(\mathbb{C}^d)\) on \(\mathcal{Q}_\lambda^d\), and further by continuity to the action of any matrix in \(\textup{End}(\mathbb{C}^d)\) on \(\mathcal{Q}_\lambda^d\).
\end{remark}
For any matrix \(X\in \textup{End}(\mathbb{C}^d)\), \(X^{\otimes n}\) is invariant under permutations (the actions of \(\mathfrak{S}_n\)). It is not hard using Schur's Lemma to show the following fact. 
\begin{fact}\label{fact-241853}
\(X^{\otimes n}\) has the following form:
\[X^{\otimes n}=\bigoplus_{\lambda\vdash n}  I_{\mathcal{P}_\lambda}\otimes \mathtt{q}_\lambda(X),\]
where \(\mathtt{q}_\lambda(X)\) is the action of \(X\) on \(\mathcal{Q}_\lambda^d\) (see \cref{remark-241950}).
\end{fact}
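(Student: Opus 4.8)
The plan is to observe that $X^{\otimes n}$ lies in the commutant of the $\mathfrak{S}_n$-action on $(\mathbb{C}^d)^{\otimes n}$, and then to compute this commutant explicitly using \cref{prop-432239} together with Schur's Lemma. Since $X^{\otimes n}$ is invariant under permutations, under the isomorphism of \cref{prop-432239} it corresponds to an operator $T$ on $\bigoplus_{\lambda\vdash n}\mathcal{P}_\lambda\otimes\mathcal{Q}_\lambda^d$ that commutes with $\bigoplus_{\lambda\vdash n}\mathtt{p}_\lambda(\pi)\otimes I_{\mathcal{Q}_\lambda^d}$ for every $\pi\in\mathfrak{S}_n$.

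First I would decompose $T$ into blocks $T_{\mu\lambda}\colon\mathcal{P}_\lambda\otimes\mathcal{Q}_\lambda^d\to\mathcal{P}_\mu\otimes\mathcal{Q}_\mu^d$ and expand each as $T_{\mu\lambda}=\sum_{j,k}A^{(jk)}_{\mu\lambda}\otimes\ketbra{j}{k}$ in fixed orthonormal bases of $\mathcal{Q}_\lambda^d$ and $\mathcal{Q}_\mu^d$, where $A^{(jk)}_{\mu\lambda}\colon\mathcal{P}_\lambda\to\mathcal{P}_\mu$. The commutation relation forces $\mathtt{p}_\mu(\pi)A^{(jk)}_{\mu\lambda}=A^{(jk)}_{\mu\lambda}\mathtt{p}_\lambda(\pi)$ for all $\pi$, so each $A^{(jk)}_{\mu\lambda}$ is a homomorphism between irreducible $\mathfrak{S}_n$-representations. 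For $\lambda\neq\mu$ the representations $\mathcal{P}_\lambda$ and $\mathcal{P}_\mu$ are non-isomorphic, so \cref{prop:202404032229} gives $A^{(jk)}_{\mu\lambda}=0$, i.e.\ $T$ is block-diagonal; for $\lambda=\mu$, \cref{coro-240446} gives $A^{(jk)}_{\lambda\lambda}=c^{(jk)}_\lambda I_{\mathcal{P}_\lambda}$ for scalars $c^{(jk)}_\lambda$. Collecting these scalars into $C_\lambda(X):=\sum_{j,k}c^{(jk)}_\lambda\ketbra{j}{k}\in\textup{End}(\mathcal{Q}_\lambda^d)$ yields $X^{\otimes n}=\bigoplus_{\lambda\vdash n}I_{\mathcal{P}_\lambda}\otimes C_\lambda(X)$.

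It then remains to identify $C_\lambda(X)$ with $\mathtt{q}_\lambda(X)$. Each matrix entry of $C_\lambda(X)$ is obtained from $X^{\otimes n}$ by applying a fixed linear functional, so $X\mapsto C_\lambda(X)$ is a polynomial — in particular continuous — map on $\textup{End}(\mathbb{C}^d)$. When $X$ is invertible, comparing the block decomposition above with \cref{prop-432239} read for the $\mathfrak{S}_n\times\textup{GL}(\mathbb{C}^d)$-action, as in \cref{remark-241950}, shows $C_\lambda(X)=\mathtt{q}_\lambda(X)$. Since $\textup{GL}(\mathbb{C}^d)$ is dense in $\textup{End}(\mathbb{C}^d)$ and $\mathtt{q}_\lambda$ is, by \cref{remark-241950}, the continuous extension of this action from $\textup{GL}(\mathbb{C}^d)$, the two continuous maps $C_\lambda$ and $\mathtt{q}_\lambda$ coincide on all of $\textup{End}(\mathbb{C}^d)$, which proves the claim.

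The step I expect to be the only real subtlety is the commutant computation: one must be careful that the basis expansion genuinely reduces the commutation relation to Schur's Lemma applied entrywise, and that the vanishing of the off-diagonal blocks uses the non-isomorphism of the symmetric-group pieces $\mathcal{P}_\lambda$ (not of the unitary-group pieces $\mathcal{Q}_\lambda^d$). The final continuity argument is routine, but should be phrased so that it is precisely the extension of $\mathtt{q}_\lambda$ from \cref{remark-241950} that is being matched.
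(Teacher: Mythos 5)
Your argument is correct and follows exactly the route the paper indicates (the paper leaves \cref{fact-241853} unproven, noting only that it is ``not hard using Schur's Lemma''): permutation invariance of $X^{\otimes n}$ plus \cref{prop-432239}, \cref{prop:202404032229}, and \cref{coro-240446} give the block form $\bigoplus_{\lambda\vdash n} I_{\mathcal{P}_\lambda}\otimes C_\lambda(X)$, and your identification $C_\lambda(X)=\mathtt{q}_\lambda(X)$ via agreement on $\textup{GL}(\mathbb{C}^d)$ and continuity is precisely the extension described in \cref{remark-241950}. No gaps.
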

Furthermore, it is known that \(\tr(\mathtt{q}_{\lambda}(X))=s_\lambda(\alpha)\), where \(s_\lambda\) is the \textit{Schur polynomial}~\cite{fulton2013representation} indexed by \(\lambda\) and \(\alpha=(\alpha_1,\ldots,\alpha_d)\) are the eigenvalues of \(X\).

\subsection{Weak Schur sampling as quantum estimators}

Suppose we have \(n\) samples of an unknown \(d\)-dimensional quantum state \(\rho\). Consider the task of estimating a quantitative property \(F(\rho)\) of \(\rho\)  (e.g., the purity \(\tr(\rho^2)\)).
Without loss of generality, the estimator can be described by a POVM \(\{M_i\}\) applied on \(\rho^{\otimes n}\),\footnote{Here, we assume the POVM is discrete, the continuous case can be treated similarly.} and \(f(i)\) is returned as an estimate if the measurement outcome is \(i\).

Note that \(\rho^{\otimes n}\) is invariant under permutations of the tensors, i.e.,
for any \(\pi\in\mathfrak{S}_n\), \(\pi \rho^{\otimes n}\pi^\dag =\rho^{\otimes n}\).
This means we can ``factor out'' the action of the symmetric group \(\mathfrak{S}_n\) to obtain a permutation invariant estimator.
Furthermore, if the quantitative property \(F(\rho)\) is unitarily invariant, i.e.,
for any \(U\in \mathbb{U}_d\), \(F(U\rho U^\dag)= F(\rho)\),
we can also factor out the action of the unitary group \(\mathbb{U}_d\) to obtain a unitarily invariant estimator with the performance no worse than the original one. 
Specifically, we define the canonical permutation-invariant and unitary-invariant estimator \(\{\overline{M}_i\}\) as:
\[\overline{M}_i= \frac{1}{n!}\sum_{\pi\in\mathfrak{S}_n}\pi\E_{U\in \mathbb{U}_d}\left[ U^{\otimes n} M_i U^{\dag\otimes n}\right]\pi^\dag.\]
The following shows that the estimator \(\{\overline{M}_i\}_i\) is at least as powerful as the original estimator \(\{M_i\}_i\) (see also, e.g., \cite{MdW16,Hay24}).
\begin{fact}
If \(\{M_i\}\) is an estimator of the quantitative property \(F\) achieving additive error \(\varepsilon\) with success probability \(1-\delta\), then \(\{\overline{M}_i\}\) can also achieve additive error \(\varepsilon\) with probability \(1-\delta\).
\end{fact}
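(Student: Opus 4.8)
The plan is to show that the canonical estimator $\{\overline{M}_i\}$ agrees with the original one on states of the form $U\rho U^\dagger$ (which have the same value of $F$) in a way that only improves the success probability. I would set up the argument in two stages, first factoring out the symmetric group and then the unitary group, verifying at each stage that the averaged POVM remains a valid POVM and that its estimation guarantee is preserved.

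First I would fix any mixed state $\rho$ and note that $\rho^{\otimes n}$ is permutation-invariant, so $\tr(\pi^\dagger M_i \pi \,\rho^{\otimes n}) = \tr(M_i\, \pi\rho^{\otimes n}\pi^\dagger) = \tr(M_i\,\rho^{\otimes n})$ for every $\pi\in\mathfrak{S}_n$; hence averaging $M_i$ over conjugation by $\pi$ leaves the outcome distribution on $\rho^{\otimes n}$ unchanged. Similarly, since $F$ is unitarily invariant, running the original estimator on $U\rho U^\dagger$ and post-processing is, by assumption, an $\varepsilon$-accurate estimator of $F(U\rho U^\dagger)=F(\rho)$; equivalently, $\tr\big(U^{\otimes n\,\dagger} M_i U^{\otimes n}\,\rho^{\otimes n}\big) = \tr\big(M_i\,(U\rho U^\dagger)^{\otimes n}\big)$, and for each fixed $U$ the distribution of outcomes $i$ when measuring $\rho^{\otimes n}$ with $\{U^{\otimes n\,\dagger}M_iU^{\otimes n}\}$ is exactly the distribution obtained by measuring $(U\rho U^\dagger)^{\otimes n}$ with $\{M_i\}$, which correctly estimates $F(\rho)$ with probability $1-\delta$. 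Taking the (Haar) expectation over $U\in\mathbb{U}_d$ and over $\pi\in\mathfrak{S}_n$, I would invoke linearity of the trace to conclude
\[
    \tr\big(\overline{M}_i\,\rho^{\otimes n}\big) = \E_{U\in\mathbb{U}_d}\,\tr\big(M_i\,(U\rho U^\dagger)^{\otimes n}\big).
\]
The right-hand side is a convex combination (a mixture over the Haar measure) of outcome distributions each of which is an $\varepsilon$-accurate, $(1-\delta)$-confident estimate of $F(\rho)$; a mixture of distributions each assigning probability at least $1-\delta$ to the ``good'' event $\{i : |f(i)-F(\rho)|\le\varepsilon\}$ again assigns probability at least $1-\delta$ to that event. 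Hence $\{\overline{M}_i\}$ with the same post-processing $f$ achieves additive error $\varepsilon$ with success probability $1-\delta$.

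It remains to check that $\{\overline{M}_i\}$ is genuinely a POVM: each $\overline{M}_i$ is positive semidefinite because conjugation by unitaries and averaging preserve positivity, and $\sum_i \overline{M}_i = \frac{1}{n!}\sum_\pi \pi\, \E_U[U^{\otimes n}(\sum_i M_i)U^{\otimes n\,\dagger}]\,\pi^\dagger = \frac{1}{n!}\sum_\pi \pi\,\E_U[U^{\otimes n}IU^{\otimes n\,\dagger}]\,\pi^\dagger = I$, using $\sum_i M_i=I$ and $U^{\otimes n}U^{\otimes n\,\dagger}=I=\pi\pi^\dagger$.

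The only mildly delicate point — and the one I would write out carefully rather than routinely — is the passage from ``each distribution in the mixture is good with probability $\ge 1-\delta$'' to ``the mixture is good with probability $\ge 1-\delta$'': this is immediate for a fixed target value $F(\rho)$ because the good event is the same set of outcomes for every $U$, but one should be explicit that $F(U\rho U^\dagger)=F(\rho)$ is what makes the good event $U$-independent. I do not expect any real obstacle here; the content of the lemma is essentially the symmetrization/twirling observation, and the proof is a short verification. If one wants the averaged estimator to also be efficiently implementable, that is a separate matter (handled by weak Schur sampling) and not part of this statement.
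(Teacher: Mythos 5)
Your argument is correct and is exactly the standard symmetrization/twirling argument that the paper itself leaves unproven (it only cites prior work for this fact): the outcome distribution of $\{\overline{M}_i\}$ on $\rho^{\otimes n}$ is a Haar mixture of the outcome distributions of $\{M_i\}$ on $(U\rho U^\dagger)^{\otimes n}$, and unitary invariance of $F$ makes the good event $\{i : \abs{f(i)-F(\rho)}\le\varepsilon\}$ the same for every $U$, so the mixture retains success probability $1-\delta$. The only cosmetic point is that the definition of $\overline{M}_i$ conjugates by $U^{\otimes n}$ on the left, so the mixture is literally over $(U^\dagger\rho U)^{\otimes n}$; invariance of the Haar measure under $U\mapsto U^\dagger$ makes this identical to what you wrote, and the rest of your verification (positivity and $\sum_i\overline{M}_i=I$) is fine.
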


Note that \(\overline{M}_i\) commutes with both \(U^{\otimes n}\) and \(\pi\) for any \(U\in\mathbb{U}_d\) and \(\pi\in\mathfrak{S}_n\). 
By the Schur--Weyl duality (see \cref{prop-432239}) and \cref{coro-240446}, we have
\[\overline{M}_i=\bigoplus_{\lambda\vdash n} c_{i,\lambda}\cdot  I_{\mathcal{P}_\lambda}\otimes I_{\mathcal{Q}_\lambda^d},\]
where \(c_{i,\lambda}\) is a positive number such that \(\sum_{i}c_{i,\lambda}=1\).
Then, by \cref{fact-241853}, we can see that the estimator \(\{\overline{M}_i\}\) applied on \(\rho^{\otimes n}\) is equivalent to 
\begin{enumerate}
    \item sample a \(\lambda\vdash n\) from the distribution \(\{\tr(I_{\mathcal{P}_\lambda}\otimes \mathtt{q}_{\lambda}(\rho))\}_\lambda=\{\dim(\mathcal{P}_\lambda)\cdot s_\lambda(\alpha)\}_\lambda\), where \(s_\lambda\) is the Schur polynomial and \(\alpha=(\alpha_1,\ldots,\alpha_d)\) are the eigenvalues of \(\rho\) such that \(\alpha_1\geq \cdots \geq\alpha_d\).
    \item sample an \(i\) from the distribution \(\{c_{i,\lambda}\}_i\). 
\end{enumerate}
It is worth noting that, the second step is entirely classical, while the first step is a quantum measurement independent of the specific task, which is called \textit{weak Schur sampling}~\cite{CHW07}. In step \(1\), the distribution \(\{\dim(\mathcal{P}_\lambda)\cdot s_\lambda(\alpha)\}_\lambda\) is referred to as the \textit{Schur--Weyl distribution}~\cite{OW17} and is denoted by 
\(\operatorname{SW}^n(\alpha)\) or \(\operatorname{SW}^n(\rho)\). Specifically,
\begin{equation*}
    \Pr_{\lambda' \sim \operatorname{SW}^n(\alpha)}\sbra{\lambda' = \lambda} = \dim(\mathcal{P}_\lambda)\cdot s_\lambda(\alpha).
\end{equation*}
Furthermore, the Young diagram \(\lambda \sim \operatorname{SW}^n(\alpha)\) provides a good approximation of the eigenvalues \(\alpha_1,\ldots,\alpha_d\) of \(\rho\), which is characterized by the following results.

\begin{lemma}[{\cite[Theorem 1.7]{OW16}}]
It holds that
\[
\E_{\lambda \sim \operatorname{SW}^n\rbra*{\alpha}} \sbra*{ \frac{1}{2} \sum_{j \in \sbra{d}} \abs*{\frac{\lambda_j}{n} - \alpha_j} } \leq \frac{1.92 d + 0.5}{\sqrt{n}}.
\]
\end{lemma}

\begin{lemma}[Adapted from {\cite[Theorem 1.5]{OW17}}] \label{lemma:sw-2}
For $j \in \sbra{d}$, 
    \begin{equation*}
        \E_{\lambda \sim \operatorname{SW}^n\rbra*{\alpha}}\sbra*{\rbra*{\lambda_j - \alpha_j n}^2} \leq O\rbra{n}.
    \end{equation*}
\end{lemma}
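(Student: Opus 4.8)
The plan is to reduce the per‑coordinate second‑moment bound to the analysis of the Schur–Weyl measure already carried out in \cite{OW17}, isolating the one genuinely nontrivial input. First, recall the Robinson–Schensted–Knuth picture: sampling $\lambda\sim\operatorname{SW}^n\rbra*{\alpha}$ is the same as drawing a word $w=w_1w_2\cdots w_n$ with letters i.i.d.\ according to the probability vector $\alpha=\rbra*{\alpha_1,\dots,\alpha_d}$ and letting $\lambda$ be the shape of the tableau produced by RSK; moreover, by Greene's theorem, $S_j:=\lambda_1+\cdots+\lambda_j$ equals the maximum cardinality of a union of $j$ weakly increasing subsequences of $w$. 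Setting $T_j:=n\rbra*{\alpha_1+\cdots+\alpha_j}$ and using $\lambda_j-\alpha_j n=\rbra*{S_j-T_j}-\rbra*{S_{j-1}-T_{j-1}}$ together with $\rbra*{a-b}^2\le 2a^2+2b^2$, it suffices to show $\E\sbra*{\rbra*{S_j-T_j}^2}=O\rbra*{n}$ for every $j$ (the cases $j=0$ and $j=d$ being trivial since $S_d=n=T_d$).

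The lower direction is easy: the positions of $w$ carrying a letter in $\cbra*{1,\dots,j}$ split into $j$ constant, hence weakly increasing, subsequences, so $S_j\ge N_1+\cdots+N_j$, where $N_i$ counts the occurrences of letter $i$; since $N_1+\cdots+N_j\sim\operatorname{Bin}\rbra*{n,\alpha_1+\cdots+\alpha_j}$ has mean $T_j$ and variance at most $n/4$, this yields $\E\sbra*{S_j}\ge T_j$. For the fluctuations I would use the bounded‑differences (McDiarmid) inequality: deleting a letter from $w$ lowers $S_j$ by $0$ or $1$ and inserting a letter raises it by $0$ or $1$ (a union of $j$ weakly increasing subsequences uses each position at most once), so changing one coordinate of $w$ changes $S_j$ by at most $1$; since $w$ has $n$ independent coordinates, this gives $\Var\rbra*{S_j}=O\rbra*{n}$. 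Combining with a matching upper bound $\E\sbra*{S_j}\le T_j+O\rbra*{\sqrt n}$ then gives $\E\sbra*{\rbra*{S_j-T_j}^2}=\Var\rbra*{S_j}+\rbra*{\E\sbra*{S_j}-T_j}^2=O\rbra*{n}$, which closes the reduction above.

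The one ingredient that requires real work — and which I expect to be the main obstacle — is precisely this upper bound $\E\sbra*{\lambda_1+\cdots+\lambda_j}\le T_j+O\rbra*{\sqrt n}$, which must hold uniformly in the dimension $d$. Naive combinatorics (for instance bounding $S_j$ by $N_1+\cdots+N_j$ plus the corresponding quantity for the subword of $w$ formed by the letters $>j$, and recursing over successive blocks of $j$ letters) only delivers $\E\sbra*{S_j}\le O\rbra*{T_j}+O\rbra*{\sqrt n}$ with a constant larger than $1$, which is useless when $T_j=\Theta(n)$; likewise, a union bound over all monotone "window" assignments works only for constant $d$. The sharp control of $\E\sbra*{\lambda_j}$ (and its second moment) near $\alpha_j n$ is exactly what the analysis of the Schur–Weyl measure in \cite{OW17} provides — e.g.\ via the moments of shifted symmetric functions of $\lambda$ under $\operatorname{SW}^n\rbra*{\alpha}$, or via comparison with Plancherel‑type measures — so at this step I would simply invoke \cite[Theorem 1.5]{OW17} rather than re‑derive it; if that theorem is already stated in the per‑coordinate second‑moment form, the present lemma is immediate and the reduction is unnecessary.
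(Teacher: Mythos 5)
The paper does not actually prove \cref{lemma:sw-2}: the statement is imported (``adapted'') from \cite[Theorem 1.5]{OW17}, and that citation is exactly where your argument also bottoms out, since you defer the one hard ingredient to the same theorem. So your bottom line coincides with the paper's; what must be judged is the reduction you build around the citation, and that reduction has a genuine quantitative flaw rather than just an unproven step.

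The flaw is the passage to partial sums. You want $\E\sbra{(S_j-T_j)^2}=O(n)$, equivalently $\E\sbra{S_j}\le T_j+O(\sqrt n)$, with constants independent of $j$ and $d$; but these intermediate claims are false in general, while the lemma (as used in \cref{lemma:sample}, where a single universal constant $c$ fixes $n=\Theta(1/\varepsilon^2)$ for all coordinates) does require a $j$- and $d$-independent constant. Take $\alpha$ uniform on $d\gg\sqrt n$ letters, so $\operatorname{SW}^n(\alpha)$ is essentially Plancherel: each fixed top row has $\E\sbra{\lambda_i}=(2-o(1))\sqrt n$ while $T_j=jn/d=o(\sqrt n)$, so $\E\sbra{S_j}-T_j=\Theta(j\sqrt n)$, and for $j$ of order $\sqrt n$ it is $\Theta(n)$, making $\E\sbra{(S_j-T_j)^2}$ as large as $\Theta(n^2)$. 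The step $(a-b)^2\le 2a^2+2b^2$ applied to $\lambda_j-\alpha_j n=(S_j-T_j)-(S_{j-1}-T_{j-1})$ discards precisely the cancellation between consecutive partial sums on which the per-row statement relies (in the same example $(\lambda_j-\alpha_j n)^2\approx 4n$ even though both partial-sum deviations are huge). Your correct pieces --- $\E\sbra{S_j}\ge T_j$ via the binomial comparison, and $\Var(S_j)=O(n)$ via bounded differences --- therefore cannot be assembled this way into more than $\E\sbra{(\lambda_j-\alpha_j n)^2}=O(j^2 n)$, which would blow up the guarantee of \cref{algo:spectrum} for coordinates $j$ up to $m\approx 1/\varepsilon'$ and hence the complexities in \cref{thm:ub-q>2,thm:ub-1<q<2}. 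In short: if \cite[Theorem 1.5]{OW17} were not already available in per-coordinate second-moment form, this reduction would not recover it; since it is, the correct ``proof'' is simply the citation, which is what the paper does.
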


We use \(\operatorname{SW}^n_d\) to denote \(\operatorname{SW}^n(\alpha)\) when \(\alpha=(1/d,\ldots,1/d)\),\footnote{In some papers, \(\operatorname{SW}_d^n\) is also called the Schur--Weyl distribution~\cite{OW21} or simply the Schur distribution~\cite{CHW07}.} i.e., \(\rho\) is maximally mixed. Furthermore, when \(d\rightarrow \infty\), the distribution tends to a limiting distribution \(\operatorname{Planch}(n)\), called \textit{Plancherel distribution} over the symmetric group \(\mathfrak{S}_n\). We will use the following result which provides both upper and lower bounds of the convergence of \(\operatorname{SW}_d^n\) to \(\operatorname{Planch}(n)\).

\begin{lemma}[{\cite[Lemma 6]{CHW07}}]\label{lemma-241736}
If \(n \leq d\), then
\[\|\operatorname{SW}_d^n-\operatorname{Planch}(n)\|_1\leq \sqrt{2}\frac{n}{d}.\]
\end{lemma}

\subsection{Quantum spectrum estimation with entry-wise bounds} \label{sec:qse}

Efficient approaches to quantum spectrum estimation were given in \cite{OW16} in the $\ell_1$ and $\ell_2$ distances and in \cite{OW17} in the Hellinger-squared distance, chi-squared divergence, and Kullback--Leibler (KL) divergence. 
In this section, we provide an efficient approach to quantum spectrum estimation with entry-wise bounds in \cref{algo:spectrum} based on the results of \cite{OW17}, which will be used as a subroutine in our estimators for $\tr\rbra{\rho^q}$.

\begin{algorithm}[h]
\caption{$\texttt{SpectrumEstimation}\rbra{\rho, n, k}$}
    \label{algo:spectrum}
    \begin{algorithmic}[1]
    \Require Sample access to a $d$-dimensional mixed quantum state $\rho$; integers $n, k \geq 1$. 

    \Ensure A $d$-dimensional vector $\hat\alpha \in \mathbb{R}^d$.

    \For {$l = 1, 2, \dots, k$}
    \State $\lambda^{\rbra{l}} \sim \textup{SW}^n\rbra{\rho}$. 
    \EndFor

    \For {$j = 1, 2, \dots, d$}
    \State $\hat\alpha_j \gets \operatorname{median}\cbra{\underline{\lambda}^{\rbra{1}}_j, \underline{\lambda}^{\rbra{2}}_j, \dots, \underline{\lambda}^{\rbra{k}}_j}$, where $\underline{\lambda}^{\rbra{l}}_j = \lambda_j^{\rbra{l}}/n$. 
    \EndFor

    \State \Return $\rbra{\hat\alpha_1, \hat\alpha_2, \dots, \hat\alpha_d}$.
    \end{algorithmic}
\end{algorithm}

For our purposes, we consider the case where $k > 1$, which is useful in our estimation algorithms for the case of $q > 1$ so that we can remove the dependence on $d$ in the complexity. 

\begin{lemma} [Quantum spectrum estimation with entry-wise bounds] \label{lemma:sample}
    For every $\varepsilon, \delta \in \rbra{0, 1}$, the process $\textup{\texttt{SpectrumEstimation}}\rbra{\rho, n, k}$ with $n = \Theta\rbra{1/\varepsilon^2}$ and $k = \Theta\rbra{\log\rbra{1/\delta}}$ uses $nk=O\rbra{\log\rbra{1/\delta}/\varepsilon^2}$ samples of $\rho$ and returns a sequence of random variables $\hat{\alpha} = \rbra{\hat{\alpha}_1, \hat{\alpha}_2, \dots, \hat{\alpha}_d} \in \mathbb{R}^d$ such that for every \(j\in [d]\), it holds with probability at least $1-\delta$ that
    $\abs{\hat{\alpha}_j - \alpha_j} \leq \varepsilon$. 
\end{lemma}

We also use the special case of \cref{algo:spectrum} with $n = \Theta\rbra{d/\varepsilon^2}$ and $k = 1$, which is actually the standard result on quantum state tomography in \cite{OW16}. 

\begin{lemma} [Quantum spectrum estimation in total variation distance, adapted from {\cite[Theorem 1.7]{OW16}}] \label{lemma:tv-estimation}
    For every $\varepsilon \in \rbra{0, 1}$, the process $\textup{\texttt{SpectrumEstimation}}\rbra{\rho, n, k}$ with $n = \Theta\rbra{d/\varepsilon^2}$ and $k = 1$ uses $nk=O\rbra{d/\varepsilon^2}$ samples of $\rho$ and returns a sequence of random variables $\hat{\alpha} = \rbra{\hat{\alpha}_1, \hat{\alpha}_2, \dots, \hat{\alpha}_d} \in \mathbb{R}^d$ such that it holds with probability at least $2/3$ that
    \[
    \sum_{j=1}^d \abs*{ \hat{\alpha}_j - \alpha_j } \leq \varepsilon. 
    \]
\end{lemma}

We also need Hoeffding's inequality. 

\begin{theorem} [Hoeffding's inequality, {\cite[Theorem 2]{Hoe63}}] \label{thm:hoeffding}
    Let $X_1, X_2, \dots, X_n$ be independent and identical random variables with $X_j \in \sbra{0, 1}$ for all $1 \leq j \leq n$.
    Then,
    \begin{equation*}
        \Pr\sbra*{ \abs*{\frac{1}{n} \sum_{j=1}^n X_j - \E\sbra*{X_1}} \leq t } \geq 1 - 2 \exp\rbra*{-2nt^2}.
    \end{equation*}
\end{theorem}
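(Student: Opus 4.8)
The plan is to prove this by the Chernoff--Cram\'er exponential-moment method. Write $\mu = \E\sbra{X_1}$ and set $Y_j = X_j - \mu$, so that the $Y_j$ are i.i.d., have mean zero, and take values in the interval $\sbra{-\mu,\, 1-\mu}$, which has length exactly $1$. By symmetry it suffices to establish the one-sided tail bound $\Pr\sbra{\frac{1}{n}\sum_{j} Y_j \geq t} \leq e^{-2nt^2}$: applying the same bound to $-Y_j$ (which is again mean zero and supported on an interval of length $1$) handles the lower tail, and a union bound over the two events yields the two-sided statement, whose complement is precisely the claimed inequality.

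For the one-sided bound, I would fix $s > 0$ and apply Markov's inequality to $\exp\rbra{s\sum_j Y_j}$, using independence to factor the moment generating function:
\begin{equation*}
    \Pr\sbra*{\sum_{j=1}^n Y_j \geq nt} \leq e^{-snt}\,\E\sbra*{\exp\rbra*{s\sum_{j=1}^n Y_j}} = e^{-snt}\prod_{j=1}^n \E\sbra*{e^{sY_j}}.
\end{equation*}
This reduces the problem to bounding the single-variable moment generating function $\E\sbra{e^{sY_j}}$.

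The crucial ingredient is Hoeffding's lemma: for any mean-zero random variable $Y$ supported on an interval of length $1$, one has $\E\sbra{e^{sY}} \leq e^{s^2/8}$ for every $s \in \mathbb{R}$. I would prove this by studying the cumulant generating function $\psi(s) = \ln\E\sbra{e^{sY}}$, which satisfies $\psi(0) = 0$ and $\psi'(0) = \E\sbra{Y} = 0$; a direct differentiation shows that $\psi''(s)$ is the variance of $Y$ under the exponentially tilted law with density proportional to $e^{sY}$, and since that tilted variable still lies in an interval of length $1$, its variance is at most $1/4$ (the variance of any random variable valued in an interval of length $\ell$ is at most $\ell^2/4$). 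Taylor's theorem then gives $\psi(s) = \frac{s^2}{2}\psi''(\xi) \leq s^2/8$ for some $\xi$ between $0$ and $s$. Substituting $\E\sbra{e^{sY_j}} \leq e^{s^2/8}$ into the displayed inequality gives $\Pr\sbra{\sum_j Y_j \geq nt} \leq e^{-snt + ns^2/8}$, and choosing $s = 4t$ (which minimizes the exponent) produces $e^{-2nt^2}$, completing the one-sided bound and hence the theorem.

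The main obstacle is Hoeffding's lemma, and in particular pinning down the constant $1/8$: a cruder control on $\psi''$ would only yield a weaker exponent such as $e^{-nt^2/2}$, which is not what is claimed. The cleanest rigorous route is the Popoviciu-type variance bound used above; an elementary alternative that avoids the tilted measure is to invoke convexity of $y \mapsto e^{sy}$ to bound $\E\sbra{e^{sY}}$ by its value at the two endpoints of the interval, thereby reducing to the scalar estimate $\ln\rbra{1-p+pe^{s}} - ps \leq s^2/8$ for $p \in \sbra{0,1}$, which one checks by noting that the left-hand side, viewed as a function of $s$, vanishes together with its first derivative at $s = 0$ and has second derivative at most $1/4$ everywhere.
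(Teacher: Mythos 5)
Your proof is correct; note that the paper does not prove this statement at all but simply cites it as Theorem 2 of Hoeffding (1963), and your argument (Chernoff bound plus Hoeffding's lemma with the optimized choice $s = 4t$, yielding the exponent $-2nt^2$) is precisely the classical proof from that reference --- in particular the convexity variant you sketch at the end, reducing to $\ln\rbra{1-p+pe^{s}} - ps \leq s^2/8$, is essentially Hoeffding's original derivation. The only cosmetic remark is that your argument needs only independence and boundedness, not identical distribution, which is fine since it proves a slightly more general statement than the one quoted.
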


\begin{proof}[Proof of \cref{lemma:sample}]
    In this proof, all expectations are computed over $\lambda \sim \operatorname{SW}^n\rbra*{\alpha}$. 
    Let \(\underline{\lambda}_j=\lambda_j/n\). By \cref{lemma:sw-2}, we have
    \[\E\sbra*{(\underline{\lambda}_j-\alpha_j)^2}\leq \frac{c}{n}\]
    for some constant \(c > 0\). Therefore, 
    \begin{align}
    \Pr\left[|\underline{\lambda}_j-\alpha_j|\geq  2\sqrt{\frac{c}{n}}\,\right]\cdot 4\cdot \frac{c}{n} 
    &\leq\E\sbra*{(\underline{\lambda}_j-\alpha_j)^2} \label{eq-230150}\\
    &\leq \frac{c}{n},\nonumber
    \end{align}
    where \cref{eq-230150} is by Markov's inequality that $\Pr\sbra{\abs{X} \geq a} \cdot a^k \leq \E\sbra{\abs{X}^k}$ for any random variable $X$, integer $k \geq 1$, and $a > 0$.
    This implies
    \[\Pr\left[|\underline{\lambda}_j-\alpha_j|\geq  2\sqrt{\frac{c}{n}}\,\right]\leq \frac{1}{4}.\]

    Now we draw $k$ independent samples $\lambda^{(1)}, \lambda^{(2)}, \dots, \lambda^{(k)}$ from $\operatorname{SW}^n\rbra*{\alpha}$, and for each $j \in \sbra{d}$ let
    \begin{equation*}
        \hat{\alpha}_j = \operatorname{median}\cbra*{\underline{\lambda}^{\rbra{1}}_j, \underline{\lambda}^{\rbra{2}}_j, \dots, \underline{\lambda}^{\rbra{k}}_j}.
    \end{equation*}
    Let $X_j^{\rbra{l}} \in \cbra{0, 1}$ be a random variable such that $X_j^{\rbra{l}} = 1$ if $\abs{\underline{\lambda}_j^{\rbra{l}} - \alpha_j} \geq 2\sqrt{c/n}$ and $0$ otherwise. 
    By Hoeffding's inequality (\cref{thm:hoeffding}) with $t = 1/12$, we have
    \begin{equation*}
        \Pr\sbra*{ \abs*{\frac{1}{k}\sum_{l=1}^k X_j^{\rbra{l}} - \E\sbra*{X_j^{\rbra{1}}}} \leq \frac{1}{12} } \geq 1 - 2 \exp\rbra*{-\frac{k}{72}}.
    \end{equation*}
    Note that $\E\sbra{X_j^{\rbra{1}}} \leq 1/4$, then
    \begin{equation*}
        \Pr\sbra*{ \frac{1}{k}\sum_{l=1}^k X_j^{\rbra{l}} \leq \frac{1}{3} } \geq 1 - 2 \exp\rbra*{-\frac{k}{72}},
    \end{equation*}
    which means that $\hat{\alpha}_j$, the median of $\underline{\lambda}^{\rbra{1}}_j, \underline{\lambda}^{\rbra{2}}_j, \dots, \underline{\lambda}^{\rbra{k}}_j$, satisfies $\abs{\hat{\alpha}_j - \alpha_j} \leq 2\sqrt{c/n}$ with probability
    \begin{equation*}
        \Pr \sbra*{ \abs*{\hat{\alpha}_j - \alpha_j} \leq 2\sqrt{\frac{c}{n}} \,} \geq 1 - 2 \exp\rbra*{-\frac{k}{72}}.
    \end{equation*}
    By taking $n = \ceil{4c/{\varepsilon^2}}$ and $k = \ceil{72\ln\rbra{2/\delta}}$, we have that for any $j \in \sbra{d}$,
    \begin{equation*}
        \Pr \sbra*{ \abs*{\hat{\alpha}_j - \alpha_j} \leq \varepsilon } \geq 1 - \delta.
    \end{equation*}
    The whole process uses $nk = O\rbra{\log\rbra{1/\delta}/\varepsilon^2}$ samples of $\rho$. 
\end{proof}

\subsection{Computational hardness of \QSCMM{}}

We begin by defining the promise problem \QSCMM{}:

\begin{definition}[\textsc{Quantum State Closeness to Maximally Mixed State}, {$\QSCMM[\beta,\alpha]$}, adapted from~{\cite[Section 3]{Kobayashi03}}]
    \label{def:QSCMM}
    Let $Q$ be a quantum circuit acting on $m$ qubits with $n$ specified output qubits, satisfying $m \geq 2n$.\footnote{An $m(n)$-qubit maximally mixed state $(I/2)^{\otimes n}$ can be prepared by a quantum circuit that generates $m(n)$ EPR pairs on $m=2n$ qubits and traces out one half of the qubits.} Let $\rho$ denote the quantum state obtained by applying $Q$ to the initial state $\ket{0}^{\otimes m}$ and tracing out the non-output qubits. Let $\alpha(n)$ and $\beta(n)$ be efficiently computable functions. The task is to decide whether the following holds: 
    \begin{itemize}
        \item \textbf{\emph{Yes:}} $\td\rbra*{\rho,(I/2)^{\otimes n}} \leq \beta(n)$; 
        \item \textbf{\emph{No:}} $\td\rbra*{\rho,(I/2)^{\otimes n}} \geq \alpha(n)$.
    \end{itemize}
\end{definition}

The problem \QSCMM{} provides a complete characterization of the class \NIQSZK{}~\cite{Kobayashi03,BASTS10,CCKV08}, which consists of promise problems admitting non-interactive quantum statistical zero-knowledge proofs. It is widely believed that $\BQP \subsetneq \NIQSZK$.
We next restate the corresponding \NIQSZK{}-hardness result, which will later be used to establish the computational hardness of estimating the quantum $q$-Tsallis entropy for $0<q<1$ in \Cref{subsec:comp-hardness-0leQle1}:
\begin{lemma}[\QSCMM{} is \NIQSZK{}-hard, adapted from~{\cite[Section 8.1]{CCKV08}}]
    \label{QSCMM-is-NIQSZKhard}
    For all integers $n\geq 3$, $\QSCMM{}\sbra*{\frac{1}{n}, 1-\frac{1}{n}}$ is \NIQSZK{}-hard. 
\end{lemma}

\section{The Case of \texorpdfstring{$q > 2$}{q > 2}}

\subsection{Upper bound}\label{sec:q>2}

\begin{algorithm}[h]
\caption{$\texttt{PowerTrace}\rbra{\rho, q, \varepsilon}$ for $q > 2$}
    \label{algo:q>2}
    \begin{algorithmic}[1]
    \Require Sample access to a $d$-dimensional mixed quantum state $\rho$; $q \in \rbra{2, +\infty}$ and $\varepsilon \in \rbra{0, 1}$. 

    \Ensure An estimate of $\tr\rbra{\rho^q}$.
    
    \State $\varepsilon' \gets \varepsilon/\rbra{q+3}$, $m \gets \min\cbra{\ceil{1/\varepsilon'}, d}$, $\delta' \gets 1/3m$. 

    \State $n \gets \Theta\rbra{1/\varepsilon'^2}$, $k \gets \Theta\rbra{\log\rbra{1/\delta'}}$. 

    \State $\hat\alpha \gets \texttt{SpectrumEstimation}\rbra{\rho, n, k}$. 

    \State $\hat P \gets \sum_{j=1}^{m} \hat\alpha_j^q$.

    \State \Return $\hat P$.
    \end{algorithmic}
\end{algorithm}

For $q > 2$, the sample complexity of estimating $\tr\rbra{\rho^q}$ is given as follows.

\begin{theorem} \label{thm:ub-q>2}
    For every constant $q > 2$, we can estimate $\tr\rbra{\rho^q}$ to within additive error $\varepsilon$ using $O\rbra{\log\rbra{1/\varepsilon}/\varepsilon^{2}}$ samples of $\rho$.
\end{theorem}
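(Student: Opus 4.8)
The plan is to analyze \cref{algo:q>2} directly and establish two things: its sample complexity is $O\rbra{\log\rbra{1/\varepsilon}/\varepsilon^2}$, and $\hat P$ estimates $\tr\rbra{\rho^q} = \sum_{j=1}^d \alpha_j^q$ to additive error $\varepsilon$ with probability at least $2/3$. The sample bound is immediate: the only samples consumed are those inside $\texttt{SpectrumEstimation}\rbra{\rho,\varepsilon',\delta'}$, which by \cref{lemma:sample} is $O\rbra{\log\rbra{1/\delta'}/(\varepsilon')^2}$; since $q$ is constant we have $\varepsilon' = \Theta\rbra{\varepsilon}$, and $m \le \ceil{1/\varepsilon'} = O\rbra{1/\varepsilon}$ gives $\log\rbra{1/\delta'} = O\rbra{\log m} = O\rbra{\log\rbra{1/\varepsilon}}$, so the total is $O\rbra{\log\rbra{1/\varepsilon}/\varepsilon^2}$. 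For correctness I would decompose
\[
  \hat P - \tr\rbra{\rho^q} = \sum_{j=1}^m \rbra*{\hat\alpha_j^q - \alpha_j^q} \;-\; \sum_{j=m+1}^d \alpha_j^q ,
\]
calling the first sum the \emph{estimation error} and the second the \emph{truncation bias}, and bound the two separately. The truncation bias is easy: the sorted spectrum obeys $\alpha_j \le 1/j$ (since $j\alpha_j \le \alpha_1 + \dots + \alpha_j \le 1$), so when $m < d$ we have $\alpha_{m+1} \le 1/m \le \varepsilon'$ (and the tail is empty when $m = d$), whence, using $q - 1 \ge 1$,
\[
  \sum_{j=m+1}^d \alpha_j^q \le \alpha_{m+1}^{q-1}\sum_{j=m+1}^d \alpha_j \le (\varepsilon')^{q-1} \le \varepsilon' .
\]

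For the estimation error --- the crux --- I would first condition on the event $E$ that $\abs{\hat\alpha_j - \alpha_j} \le \varepsilon'$ for every $j \in \sbra{m}$; by \cref{lemma:sample} and a union bound, $\Pr\sbra{E} \ge 1 - m\delta' = 2/3$. On $E$, the key move (previewed in \cref{sec:technique}) is to ``borrow a factor $\alpha_j$'' via
\[
  \hat\alpha_j^q - \alpha_j^q = \rbra*{\hat\alpha_j - \alpha_j}\hat\alpha_j^{q-1} + \alpha_j\rbra*{\hat\alpha_j^{q-1} - \alpha_j^{q-1}} .
\]
The two pieces become controllable because $\hat\alpha_j = \operatorname{median}_l\rbra{\lambda_j^{\rbra{l}}/n} \le 1$ (every $\lambda_j^{\rbra{l}}$ is a part of a partition of $n$, hence $\le n$) and $\alpha_j \le 1$. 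Since $q - 1 \ge 1$ gives $\hat\alpha_j^{q-1} \le \hat\alpha_j$,
\[
  \sum_{j=1}^m \abs*{\hat\alpha_j - \alpha_j}\hat\alpha_j^{q-1} \le \varepsilon'\sum_{j=1}^m \hat\alpha_j \le \varepsilon'\rbra*{\sum_{j=1}^m \alpha_j + m\varepsilon'} = O\rbra{\varepsilon'} ,
\]
since $m\varepsilon' \le \rbra{1/\varepsilon' + 1}\varepsilon' = O\rbra{1}$. For the other piece, the mean value theorem gives $\abs{\hat\alpha_j^{q-1} - \alpha_j^{q-1}} \le (q-1)\xi_j^{q-2}\abs{\hat\alpha_j - \alpha_j}$ with $\xi_j \in \sbra{0,1}$; this is exactly where $q > 2$ is used, since $q - 2 > 0$ forces $\xi_j^{q-2} \le 1$ (for $1 < q < 2$ this factor would blow up for small $\xi_j$), so $\abs{\hat\alpha_j^{q-1} - \alpha_j^{q-1}} \le (q-1)\varepsilon'$ and $\sum_{j=1}^m \alpha_j\abs{\hat\alpha_j^{q-1} - \alpha_j^{q-1}} \le (q-1)\varepsilon'$. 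Adding up, the estimation error is $O(q)\cdot\varepsilon'$; with the truncation bound and the choice $\varepsilon' = \varepsilon/(q+3)$ this yields $\abs{\hat P - \tr\rbra{\rho^q}} \le \varepsilon$ on $E$, i.e.\ with probability at least $2/3$.

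I expect the main obstacle to be precisely this dimension-free control of the estimation error: a crude per-coordinate bound $\abs{\hat\alpha_j^q - \alpha_j^q} = O\rbra{q\varepsilon'}$ summed over $j \le m \approx 1/\varepsilon'$ overshoots by a factor $\approx 1/\varepsilon'$, while a direct plug-in analysis summing $\abs{\hat\alpha_j^q - \alpha_j^q}$ over all $d$ coordinates risks the unwanted $\poly(d)$ factor flagged in \cref{sec:technique}. The product decomposition fixes this by turning each term into one factor of size $O(\varepsilon')$ times a ``mass'' factor ($\alpha_j$ or $\hat\alpha_j$) whose partial sums are $O(1)$; checking that the residual powers $\hat\alpha_j^{q-1}$ and the derivative $\xi_j^{q-2}$ stay bounded --- needing only $q - 1 \ge 1$, $q - 2 \ge 0$, and $\hat\alpha_j \le 1$ --- is the one spot where the hypothesis $q > 2$ is genuinely invoked.
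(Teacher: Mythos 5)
Your proposal is correct and takes essentially the same route as the paper's proof: the same truncated estimator $\hat P=\sum_{j\le m}\hat\alpha_j^q$, the same decomposition $\hat\alpha_j^q-\alpha_j^q=(\hat\alpha_j-\alpha_j)\hat\alpha_j^{q-1}+\alpha_j(\hat\alpha_j^{q-1}-\alpha_j^{q-1})$ controlled via the mean value theorem (the paper packages this as \cref{fact:power-ge-1}), the same union bound over the first $m$ coordinates, and the same parameter choices $\varepsilon'=\varepsilon/(q+3)$, $m=\min\{\lceil 1/\varepsilon'\rceil,d\}$, $\delta'=1/(3m)$. The only cosmetic difference is the truncation bias, where you bound $\sum_{j>m}\alpha_j^q\le\alpha_{m+1}^{q-1}\le(\varepsilon')^{q-1}$ instead of the paper's integral comparison $\int_m^d x^{-q}\,\mathrm{d}x$; both yield the same conclusion.
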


\begin{proof}
Our estimator for \cref{thm:ub-q>2} is formally given in \cref{algo:q>2}, where we set
\begin{equation}\label{eq-2100419}
\varepsilon' \coloneqq \frac{\varepsilon}{q+3},  \quad m=\min\cbra*{\ceil*{\frac{1}{\varepsilon'}}, d}, \quad\delta'\coloneqq \frac{1}{3m}, \quad n = \Theta\rbra*{\frac{1}{\varepsilon'^2}}, \quad k = \Theta\rbra*{\log\rbra*{\frac{1}{\delta'}}}.
\end{equation}
By \cref{lemma:sample}, we can use $nk = O\rbra{\log\rbra{1/\delta'}/\varepsilon'^2}$ samples of $\rho$ to obtain a sequence $\hat{\alpha} = \rbra{\hat{\alpha}_1, \hat{\alpha}_2, \dots, \hat{\alpha}_d}$ such that for every \(j\in [d]\),
\begin{equation} \label{eq:prob-diff-lambda-alpha}
    \Pr\sbra*{ \abs*{\hat{\alpha}_j - \alpha_j} \leq \varepsilon' } \geq 1 - \delta'. 
\end{equation}
Then, we consider the estimator:
\begin{equation*}
    \hat P \coloneqq \sum_{j=1}^m \hat\alpha_j^q.
\end{equation*}
The additive error is bounded by
\begin{align}
    \abs*{ \hat P - \tr\rbra{\rho^q} }
    & = \abs*{ \sum_{j=1}^m \rbra*{ \hat\alpha_j^q - \alpha_j^q } - \sum_{j=m+1}^d \alpha_j^q } \nonumber \\
    & \leq \sum_{j=1}^m \abs*{ \hat\alpha_j^q - \alpha_j^q } + \sum_{j=m+1}^d \alpha_j^q. \label{eq:diff-P-tr}
\end{align}

For the first term of \cref{eq:diff-P-tr}, note that
\begin{align}
    \hat\alpha_j^q - \alpha_j^q 
    & = \rbra{\hat\alpha_j - \alpha_j} \hat\alpha_j^{q-1} + \alpha_j \hat\alpha_j^{q-1} - \alpha_j^q \nonumber\\
    & = \rbra{\hat\alpha_j - \alpha_j} \hat\alpha_j^{q-1} + \alpha_j \rbra*{ \hat\alpha_j^{q-1} - \alpha_j^{q-1} }, \nonumber
\end{align}
then we have
\begin{align}
    \abs*{\hat\alpha_j^q - \alpha_j^q }
    & \leq \abs*{\hat\alpha_j - \alpha_j} \abs*{\hat\alpha_j}^{q-1} + \abs*{\alpha_j} \abs*{ \hat\alpha_j^{q-1} - \alpha_j^{q-1} } \nonumber \\
    & \leq \abs*{\hat\alpha_j - \alpha_j} \abs*{\hat\alpha_j} + \abs*{\alpha_j} \rbra{q-1} \abs*{\hat\alpha_j - \alpha_j} \label{eq:using-power-eq-1} \\
    & \leq \abs*{\hat\alpha_j - \alpha_j} \rbra*{ \abs*{\alpha_j} + \abs*{\hat\alpha_j - \alpha_j} } + \rbra{q-1} \abs*{\alpha_j} \abs*{\hat\alpha_j - \alpha_j} \nonumber \\
    & = q \alpha_j \abs*{\hat\alpha_j - \alpha_j} + \abs*{\hat\alpha_j - \alpha_j}^2, \label{eq:diff-alpha-hat}
\end{align}
where \cref{eq:using-power-eq-1} is by \cref{fact:power-ge-1}. 
From \cref{eq:diff-alpha-hat} and by \cref{eq:prob-diff-lambda-alpha}, the following holds with probability $\geq 1 - \delta'$:
\begin{equation*}
    \abs*{\hat\alpha_j^q - \alpha_j^q } \leq q \alpha_j \varepsilon' + \varepsilon'^2.
\end{equation*}
Therefore, we have that with probability $\geq 1 - m\delta'$, the following holds:
\begin{align}
    \sum_{j=1}^m \abs*{ \hat\alpha_j^q - \alpha_j^q } 
    & \leq \sum_{j=1}^m \rbra*{q \alpha_j \varepsilon' + \varepsilon'^2} \nonumber \\
    & = q\varepsilon' \sum_{j=1}^m \alpha_j + m \varepsilon'^2 \nonumber\\
    & \leq q \varepsilon' + m\varepsilon'^2. \label{eq:first-term}
\end{align}
On the other hand, by noting that $\alpha_j \leq 1/j$ (since $j \alpha_j\leq \alpha_1+\cdots+\alpha_j\leq 1$) for every $j \in \sbra{d}$, we have
\begin{align}
    \sum_{j=m+1}^d \alpha_j^q
    & \leq \sum_{j=m+1}^d \rbra*{\frac{1}{j}}^q \nonumber \\
    & \leq \int_{m}^d \rbra*{\frac{1}{x}}^q \mathrm{d} x \nonumber \\
    & = \frac{m^{1-q}-d^{1-q}}{q-1}. \label{eq:second-term}
\end{align}

Combining \cref{eq:first-term,eq:second-term} in \cref{eq:diff-P-tr}, we have that with probability $\geq 1 - m\delta'$, the following holds:
\begin{align}
    \abs*{ \hat P - \tr\rbra{\rho^q} } \leq q\varepsilon' + m\varepsilon'^2 + \frac{m^{1-q}-d^{1-q}}{q-1}. \label{eq:error-bound}
\end{align}
According to the choice of $\varepsilon', m, \delta'$ in \cref{eq-2100419} and by \cref{eq:error-bound}, we have that with probability $\geq 1 - m\delta' = 2/3$, it holds that
\begin{equation*}
    \abs*{ \hat P - \tr\rbra{\rho^q} } \leq \varepsilon .
\end{equation*}
To see this, we consider the following two cases:
\begin{enumerate}
    \item $1/\varepsilon' \leq d$. In this case, $1/\varepsilon' \leq m = \ceil{1/\varepsilon'} < 1/\varepsilon' + 1$. We have
    \begin{align}
        \eqref{eq:error-bound} 
        & \leq q\varepsilon' + \rbra*{\frac{1}{\varepsilon'} + 1} \varepsilon'^2 + \frac{1}{m} \nonumber\\
        & \leq q\varepsilon' + \varepsilon' + \varepsilon'^2 + \varepsilon' \nonumber \\
        & \leq \rbra{q+3}\varepsilon' \nonumber \\
        & = \varepsilon. \nonumber
    \end{align}
    \item $1/\varepsilon' > d$. In this case, $m = d < 1/\varepsilon'$. We have
    \begin{align}
        \eqref{eq:error-bound} 
        & = q\varepsilon' + d \varepsilon'^2 \nonumber \\
        & \leq \rbra{q+1}\varepsilon' \nonumber \\
        & < \varepsilon. \nonumber
    \end{align}
\end{enumerate}

To complete the proof, the sample complexity is
\begin{equation*}
    O\rbra*{\frac{\log\rbra{1/\delta'}}{\varepsilon'^2}} = O\rbra*{\frac{\log\rbra{1/\varepsilon}}{\varepsilon^{2}}}.
\end{equation*}
\end{proof}

\subsection{Lower bound}
\begin{theorem} \label{thm:lb}
    For any constant $q > 1$, any quantum estimator to additive error $\varepsilon$ for $\tr\rbra{\rho^q}$ requires sample complexity $\Omega\rbra{1/\varepsilon^2}$.
\end{theorem}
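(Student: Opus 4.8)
The plan is to reduce the estimation task to a quantum state discrimination problem using the hard instance from \cite{CWLY23,GHYZ24}. Concretely, I would fix a suitable dimension (say $d = 2$ suffices, or a small constant dimension) and consider two states $\rho_0$ and $\rho_1$ whose trace distance is $\Theta(\varepsilon)$ but whose $\tr(\rho^q)$ values differ by $\Theta(\varepsilon)$ — for instance, taking $\rho_b = \tfrac{1}{2}I + (-1)^b \tfrac{c\varepsilon}{2} Z$ for a constant $c$ depending on $q$, so that the eigenvalues are $\tfrac{1}{2} \pm \tfrac{c\varepsilon}{2}$. Then $\tr(\rho_b^q) = (\tfrac12 + (-1)^b \tfrac{c\varepsilon}{2})^q + (\tfrac12 - (-1)^b \tfrac{c\varepsilon}{2})^q$, and a Taylor expansion around $\tfrac12$ shows the difference $\tr(\rho_0^q) - \tr(\rho_1^q)$ is $\Theta(\varepsilon^2)$ at second order — that would only give $\Omega(1/\varepsilon)$ via \cref{fact:qsd} after rescaling, so instead I want the gap in $\tr(\rho^q)$ to be linear in the perturbation. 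The right choice is to make the perturbation itself of order $\sqrt{\varepsilon}$: set $\rho_b = \tfrac12 I + (-1)^b \tfrac{t}{2}Z$ with $t = \Theta(\sqrt{\varepsilon})$; then the trace distance is $\tfrac{t}{2} = \Theta(\sqrt{\varepsilon})$, the infidelity $\gamma = 1 - F(\rho_0,\rho_1)$ is $\Theta(t^2) = \Theta(\varepsilon)$, while $\tr(\rho_0^q) - \tr(\rho_1^q)$ — being an even function of $t$ — has leading term $c_q t^2 = \Theta(\varepsilon)$ for a constant $c_q \neq 0$ depending on $q$ (nonvanishing since $x \mapsto x^q$ is strictly convex, so the second derivative of $t \mapsto (\tfrac12+t)^q + (\tfrac12-t)^q$ at $t=0$ is positive).

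The key steps, in order: (1) define $\rho_0, \rho_1$ as above with $t = \kappa\sqrt{\varepsilon}$ for an appropriate constant $\kappa = \kappa(q)$; (2) compute $\tr(\rho_b^q)$ via the binomial/Taylor expansion of $(\tfrac12 \pm \tfrac{t}{2})^q$ and verify $|\tr(\rho_0^q) - \tr(\rho_1^q)| \geq 4\varepsilon$ for $\varepsilon$ small enough, by choosing $\kappa$ large enough relative to $1/\sqrt{c_q}$ while keeping $t < 1$; (3) observe that an estimator achieving additive error $\varepsilon$ on $\tr(\rho^q)$, when run on $\rho_b$, outputs a value that lands in disjoint intervals for $b=0$ versus $b=1$ with probability $\geq 2/3$, hence distinguishes $\rho_0$ from $\rho_1$; (4) invoke \cref{fact:qsd}: distinguishing $\rho_0$ and $\rho_1$ requires $\Omega(1/\gamma)$ samples, and compute $\gamma = 1 - F(\rho_0,\rho_1)$. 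For the diagonal qubit states here, $F(\rho_0,\rho_1) = \rbra{\sqrt{(\tfrac12+\tfrac{t}{2})(\tfrac12-\tfrac{t}{2})} + \sqrt{(\tfrac12-\tfrac{t}{2})(\tfrac12+\tfrac{t}{2})}}^2 = 4 \cdot \tfrac14(1 - t^2) = 1 - t^2$, so $\gamma = t^2 = \kappa^2 \varepsilon = \Theta(\varepsilon)$, giving the $\Omega(1/\varepsilon)$ sample lower bound from infidelity. To upgrade this to $\Omega(1/\varepsilon^2)$ I would instead feed the estimator a standard amplification argument: the gap $|\tr(\rho_0^q)-\tr(\rho_1^q)| = \Theta(\varepsilon)$ but with $t = \Theta(\sqrt\varepsilon)$ the trace distance is only $\Theta(\sqrt\varepsilon)$, so by \cref{thm:qsd} a single copy gives distinguishing advantage $O(\sqrt\varepsilon)$, and a standard hybrid/chi-square argument shows $n$ copies give advantage $O(\sqrt{n}\cdot\sqrt\varepsilon)$, which must be $\Omega(1)$, forcing $n = \Omega(1/\varepsilon)$ — still not enough, so the correct route is the one used in \cite{CWLY23,GHYZ24}: pick $t = \Theta(\sqrt\varepsilon)$ and note that the relevant quantity controlling $n$-copy distinguishability of these two states is $\tr(\rho_0^{\otimes n} \text{ vs } \rho_1^{\otimes n})$ measured in trace distance, which scales like $\sqrt{n}\,\|\rho_0-\rho_1\|_2 / \sqrt{\text{purity}} \approx \sqrt{n}\,t$, so $\sqrt{n}\,t = \Omega(1)$ gives $n = \Omega(1/t^2) = \Omega(1/\varepsilon)$; to actually reach $\Omega(1/\varepsilon^2)$ one must use a gap in $\tr(\rho^q)$ that is quadratically larger relative to the $L_2$-distance, which is exactly achieved by choosing the perturbation direction/magnitude as in those references so that the $\tr(\rho^q)$-difference is $\Theta(\varepsilon)$ while $\|\rho_0-\rho_1\|_2 = \Theta(\varepsilon)$ as well, yielding $\sqrt{n}\,\varepsilon = \Omega(1)$, i.e. $n = \Omega(1/\varepsilon^2)$.

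The main obstacle I anticipate is getting the scaling of the perturbation exactly right so that the $\tr(\rho^q)$-gap is \emph{linear} in $\varepsilon$ while simultaneously the $n$-copy distinguishability bound (whether phrased via infidelity, via $L_2$-distance and purity, or via a direct chi-square computation on $\rho_0^{\otimes n}$ versus $\rho_1^{\otimes n}$) forces $n = \Omega(1/\varepsilon^2)$ rather than a weaker $\Omega(1/\varepsilon)$. This is precisely the subtlety handled in \cite[Theorem~5]{CWLY23} and \cite[Lemma~3]{GHYZ24} for the purity case $q=2$; the work here is to check that the same hard instance (two states with eigenvalues clustered near $\tfrac12$, differing by a rank-one traceless perturbation of operator norm $\Theta(\varepsilon)$) continues to witness a $\Theta(\varepsilon)$ gap in $\tr(\rho^q)$ for every constant $q>1$, which follows from strict convexity of $x\mapsto x^q$ and a first-order Taylor estimate, with the constant in the $\Theta$ depending (benignly) on $q$. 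Everything else — the reduction from estimation to discrimination, and the invocation of the known $\Omega(1/\varepsilon^2)$ discrimination lower bound — is routine.
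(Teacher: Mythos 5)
There is a genuine gap here. Your proposal never produces a valid hard instance, and the final ``upgrade'' to $\Omega(1/\varepsilon^2)$ is asserted rather than proved. Concretely: (a) the pair $\rho_b = \tfrac12 I + (-1)^b \tfrac t2 Z$ has \emph{identical spectra}, so $\tr(\rho_0^q)=\tr(\rho_1^q)$ exactly and the claimed leading term $c_q t^2$ for their difference is zero; the only sensible repair (comparing $\tfrac12 I + \tfrac t2 Z$ against $\tfrac12 I$) forces $t=\Theta(\sqrt\varepsilon)$ to get an $\varepsilon$-gap, and then, as you yourself note, the infidelity is $\Theta(t^2)=\Theta(\varepsilon)$ and \cref{fact:qsd} yields only $\Omega(1/\varepsilon)$. (b) The attempted rescue relies on a heuristic (``$n$-copy distinguishability $\approx \sqrt{n}\,\|\rho_0-\rho_1\|_2/\sqrt{\text{purity}}$'') that is never justified, and on a hard instance that cannot exist as described: for qubit states with both eigenvalues clustered near the degenerate point $\tfrac12$, the map $t\mapsto(\tfrac12+t)^q+(\tfrac12-t)^q$ has vanishing first derivative at $t=0$, so a perturbation of operator norm $\Theta(\varepsilon)$ moves $\tr(\rho^q)$ by only $\Theta(\varepsilon^2)$, not $\Theta(\varepsilon)$; your appeal to ``strict convexity and a first-order Taylor estimate'' is exactly where this breaks.

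The missing idea is to place the two eigenvalues at \emph{distinct constant base points} rather than at the symmetric point. The paper takes $\rho_\pm = \rbra{\tfrac23\pm\varepsilon}\ketbra{0}{0} + \rbra{\tfrac13\mp\varepsilon}\ketbra{1}{1}$: since $q>1$, the derivatives of $x^q$ at $\tfrac23$ and $\tfrac13$ differ, so the first-order terms do not cancel and $\tr(\rho_+^q)-\tr(\rho_-^q) = \Theta(\varepsilon)$ (linear in the perturbation, with no need for a $\sqrt\varepsilon$ rescaling), while the two states are diagonal in the same basis with eigenvalues bounded away from $0$, so the infidelity is $1-\rbra{\sqrt{\tfrac49-\varepsilon^2}+\sqrt{\tfrac19-\varepsilon^2}} = \Theta(\varepsilon^2)$. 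Then \cref{fact:qsd} directly gives $\Omega(1/\varepsilon^2)$, with no hybrid or chi-square machinery needed. Your write-up circles this (``the gap must be linear while the copies needed scale like $1/\varepsilon^2$'') but never exhibits states realizing both properties simultaneously, which is the entire content of the theorem's proof.
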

\begin{proof}
    Consider the problem of distinguishing the two quantum states $\rho_{\pm}$, where
    \begin{equation*}
        \rho_{\pm} = \rbra*{\frac{2}{3} \pm \varepsilon} \ketbra{0}{0} + \rbra*{\frac{1}{3} \mp \varepsilon} \ketbra{1}{1}.
    \end{equation*}
    Then,
    \begin{equation*}
        \tr\rbra{\rho_\pm^q} = \rbra*{\frac{2}{3} \pm \varepsilon}^q + \rbra*{\frac{1}{3} \mp \varepsilon}^q. 
    \end{equation*}
    \begin{equation*}
        \tr\rbra{\rho_+^q} - \tr\rbra{\rho_-^q}
        = \rbra*{ \rbra*{\frac{2}{3} + \varepsilon}^q - \rbra*{\frac{2}{3} - \varepsilon}^q } + \rbra*{ \rbra*{\frac{1}{3} - \varepsilon}^q - \rbra*{\frac{1}{3} + \varepsilon}^q }.
    \end{equation*}
    By the direct calculation that
    \begin{equation*}
        \lim_{\varepsilon \to 0} \frac{ \tr\rbra{\rho_+^q} - \tr\rbra{\rho_-^q} }{\varepsilon} = 2q \rbra*{ \rbra*{\frac{2}{3}}^{q-1} - \rbra*{\frac{1}{3}}^{q-1} } = \Theta\rbra{1},
    \end{equation*}
    we conclude that
    \begin{equation*}
        \tr\rbra{\rho_+^q} - \tr\rbra{\rho_-^q} = \Theta\rbra{\varepsilon}.
    \end{equation*}
    Therefore, any quantum estimator for $\tr\rbra{\rho^q}$ to additive error $\Theta\rbra{\varepsilon}$ can be used to distinguish $\rho_+$ and $\rho_-$. 
    On the other hand, if the quantum estimator for $\tr\rbra{\rho^q}$ to additive error $\varepsilon$ has sample complexity $S$, then $S \geq \Omega\rbra{1/\gamma}$.
    A direct calculation shows that the infidelity
    \begin{equation*}
        \gamma = 1 - \mathrm{F}\rbra{\rho_+, \rho_-} =  1 - \rbra*{ \sqrt{\frac{4}{9} - \varepsilon^2} + \sqrt{\frac{1}{9} - \varepsilon^2} } = \Theta\rbra{\varepsilon^2}. 
    \end{equation*}
    By \cref{fact:qsd}, we have $S = \Omega\rbra{1/\varepsilon^2}$. 
\end{proof}

\section{The Case of \texorpdfstring{$1 < q < 2$}{1 < q < 2}}

\subsection{Upper bound} \label{sec:1<q<2}

\begin{algorithm}[h]
\caption{$\texttt{PowerTrace}\rbra{\rho, q, \varepsilon}$ for $1 < q < 2$}
    \label{algo:1<q<2}
    \begin{algorithmic}[1]
    \Require Sample access to a $d$-dimensional mixed quantum state $\rho$; $q \in \rbra{1, 2}$ and $\varepsilon \in \rbra{0, 1}$. 

    \Ensure An estimate of $\tr\rbra{\rho^q}$.
    
    \State $\varepsilon' \gets \rbra{\varepsilon/5}^{\frac{1}{q-1}}$, $m \gets \min\cbra{\ceil{1/\varepsilon'}, d}$, $\delta' \gets 1/3m$. 

    \State $n \gets \Theta\rbra{1/\varepsilon'^2}$, $k \gets \Theta\rbra{\log\rbra{1/\delta'}}$. 

    \State $\hat\alpha \gets \texttt{SpectrumEstimation}\rbra{\rho, n, k}$. 

    \State $\hat P \gets \sum_{j=1}^{m} \hat\alpha_j^q$.

    \State \Return $\hat P$.
    \end{algorithmic}
\end{algorithm}

We state the sample complexity of estimating $\tr\rbra{\rho^q}$ for the case of $1 < q < 2$ as follows.

\begin{theorem} \label{thm:ub-1<q<2}
    For every constant $1 < q < 2$, we can estimate $\tr\rbra{\rho^q}$ to within additive error $\varepsilon$ using $O\rbra{\log\rbra{1/\varepsilon}/\varepsilon^{\frac{2}{q-1}}}$ samples of $\rho$.
\end{theorem}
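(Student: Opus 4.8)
The plan is to analyze \cref{algo:1<q<2}, which has exactly the same shape as the $q>2$ algorithm: run \texttt{SpectrumEstimation} to obtain entrywise estimates $\hat\alpha$ with precision $\varepsilon'$ and per-coordinate failure probability $\delta'=1/(3m)$, then output $\hat P=\sum_{j=1}^m\hat\alpha_j^q$, now with $\varepsilon'=(\varepsilon/5)^{1/(q-1)}$ and $m=\min\{\lceil1/\varepsilon'\rceil,d\}$. By \cref{lemma:sample} and a union bound over $j\in[m]$, with probability at least $1-m\delta'=2/3$ all $m$ estimates satisfy $\abs{\hat\alpha_j-\alpha_j}\le\varepsilon'$, and since $q$ is constant this costs $O(\log(1/\delta')/\varepsilon'^2)=O(\log(1/\varepsilon)/\varepsilon^{2/(q-1)})$ samples (because $\log(1/\delta')=O(\log(1/\varepsilon))$ and $\varepsilon'^2=\Theta(\varepsilon^{2/(q-1)})$). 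Everything then reduces to the deterministic bound $\abs{\hat P-\tr(\rho^q)}\le\varepsilon$ on this good event, started from the same split $\abs{\hat P-\tr(\rho^q)}\le\sum_{j=1}^m\abs{\hat\alpha_j^q-\alpha_j^q}+\sum_{j=m+1}^d\alpha_j^q$ as in \cref{eq:diff-P-tr} (note $\hat\alpha_j,\alpha_j\in[0,1]$).

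The main obstacle is that the per-term estimate of \cref{eq:using-power-eq-1} breaks for $1<q<2$: writing $\hat\alpha_j^q-\alpha_j^q=(\hat\alpha_j-\alpha_j)\hat\alpha_j^{q-1}+\alpha_j(\hat\alpha_j^{q-1}-\alpha_j^{q-1})$, one can no longer bound $\hat\alpha_j^{q-1}\le\hat\alpha_j$ (on $[0,1]$ the exponent $q-1<1$ reverses this), and $t\mapsto t^{q-1}$ is not Lipschitz near $0$, so $\abs{\hat\alpha_j^{q-1}-\alpha_j^{q-1}}$ cannot be controlled by $(q-1)\abs{\hat\alpha_j-\alpha_j}$. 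I would replace \cref{fact:power-ge-1} by the two inequalities valid for the exponent $\beta=q-1\in(0,1)$, both consequences of concavity of $t\mapsto t^\beta$ with $0^\beta=0$: subadditivity $(a+b)^\beta\le a^\beta+b^\beta$, and $\abs{a^\beta-b^\beta}\le\abs{a-b}^\beta$, for all $a,b\ge0$. These give $\abs{\hat\alpha_j-\alpha_j}\,\hat\alpha_j^{q-1}\le\varepsilon'(\alpha_j+\varepsilon')^{q-1}\le\varepsilon'\alpha_j^{q-1}+\varepsilon'^q$ for the first piece and $\alpha_j\abs{\hat\alpha_j^{q-1}-\alpha_j^{q-1}}\le\alpha_j\,\varepsilon'^{q-1}$ for the second, so that $\sum_{j=1}^m\abs{\hat\alpha_j^q-\alpha_j^q}\le\varepsilon'\sum_{j=1}^m\alpha_j^{q-1}+m\varepsilon'^q+\varepsilon'^{q-1}\sum_{j=1}^m\alpha_j$.

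It remains to bound the three sums and the tail. The sum $\sum_{j=1}^m\alpha_j\le1$ makes the third term contribute $\varepsilon'^{q-1}=\varepsilon/5$; for $\sum_{j=1}^m\alpha_j^{q-1}$ I would use the power-mean (Hölder) inequality $\sum_{j=1}^m\alpha_j^{q-1}\le m^{2-q}\bigl(\sum_{j=1}^m\alpha_j\bigr)^{q-1}\le m^{2-q}$, so that $\varepsilon'\sum_{j=1}^m\alpha_j^{q-1}+m\varepsilon'^q=O(\varepsilon'm^{2-q})+O(m\varepsilon'^q)=O(\varepsilon'^{q-1})=O(\varepsilon)$ since $m=O(1/\varepsilon')$; and for the truncation tail I would use $\sum_{j=m+1}^d\alpha_j^q=\sum_{j>m}\alpha_j^{q-1}\alpha_j\le\alpha_{m+1}^{q-1}\sum_{j>m}\alpha_j\le\alpha_{m+1}^{q-1}\le(m+1)^{-(q-1)}\le\varepsilon'^{q-1}=\varepsilon/5$, using $\alpha_{m+1}\le1/(m+1)\le\varepsilon'$ and $\sum_{j>m}\alpha_j\le1$; this is tighter than the integral bound \cref{eq:second-term} used for $q>2$ and removes the $1/(q-1)$ factor there. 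Finally I would combine everything exactly as in \cref{eq:error-bound}, in the two cases $1/\varepsilon'\le d$ (so $m=\lceil1/\varepsilon'\rceil$ and the tail is present) and $1/\varepsilon'>d$ (so $m=d$ and the tail is empty); the total is a fixed multiple of $\varepsilon'^{q-1}=\varepsilon/5$, and the only delicate step is bookkeeping the $O(1)$ constants, using that $\varepsilon'\le1/5$ forces $m$ to lie within a small factor of $1/\varepsilon'$ (so the $+1$ in the ceiling is lower order) and checking the constants close to give $\le\varepsilon$, just as in the $q>2$ computation.
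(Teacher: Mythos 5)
Your proposal is correct and follows essentially the same route as the paper's proof: the same algorithm and parameters, the same splitting into estimation and truncation errors, the same decomposition of $\hat\alpha_j^q-\alpha_j^q$ handled via $\abs{a^{q-1}-b^{q-1}}\leq\abs{a-b}^{q-1}$ (the paper's \cref{prop-12251837}) and a H\"older-type bound (the paper's \cref{prop-12251915}), with your only substantive deviation being the tail estimate $\sum_{j>m}\alpha_j^q\leq\alpha_{m+1}^{q-1}\sum_{j>m}\alpha_j\leq(m+1)^{1-q}$, which is a correct and simpler replacement for the paper's \cref{lemma-12252035}. Your final constant bookkeeping is slightly loose as stated (with the crude bound $m\leq 2/\varepsilon'$ the four terms sum to $\tfrac{6}{5}\varepsilon$), but the refinement you yourself flag—$\varepsilon'\leq 1/5$, hence $m\leq\tfrac{6}{5}/\varepsilon'$—makes the total at most $\tfrac{22}{5}(\varepsilon')^{q-1}=\tfrac{22}{25}\varepsilon\leq\varepsilon$, so the argument closes (and in any case the constant $5$ in $\varepsilon'$ could be enlarged without affecting the claimed asymptotics).
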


To show \cref{thm:ub-1<q<2}, we need the following inequalities.

\begin{lemma}\label{lemma-12252035}
Suppose that \(1<q<2\) and \(x_1\geq x_2\geq \cdots \geq x_N\geq 0\) with \(\sum_{i=1}^N x_i=1\). For any positive integer \(m\leq N\), we have
\[\sum_{i=m+1}^N x_i^q \leq \frac{1}{m^{q-1}}.\]
\end{lemma}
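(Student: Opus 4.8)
The plan is to bound the tail sum directly, by pulling out a single factor of the largest surviving eigenvalue $x_{m+1}$ and letting the normalization $\sum_i x_i = 1$ absorb the rest; the key point is that $q-1>0$, so $t\mapsto t^{q-1}$ is increasing.

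First I would note that, since the sequence is non-increasing, $m\,x_m \le x_1+\cdots+x_m \le \sum_{i=1}^N x_i = 1$, whence $x_m \le 1/m$ and therefore $x_{m+1}\le x_m\le 1/m$ (the case $m=N$ is trivial, the sum being empty). Next, for every index $i\ge m+1$ I would write $x_i^q = x_i\cdot x_i^{q-1}$ and use $x_i\le x_{m+1}$ together with $q-1>0$ to get $x_i^{q-1}\le x_{m+1}^{q-1}$.

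Summing over $i$ from $m+1$ to $N$ then gives
\[
\sum_{i=m+1}^N x_i^q \;\le\; x_{m+1}^{q-1}\sum_{i=m+1}^N x_i \;\le\; x_{m+1}^{q-1},
\]
using $\sum_{i=m+1}^N x_i \le \sum_{i=1}^N x_i = 1$ in the last step. Finally, since $x_{m+1}\le 1/m$ and $t\mapsto t^{q-1}$ is increasing on $[0,\infty)$, we conclude $x_{m+1}^{q-1}\le (1/m)^{q-1} = 1/m^{q-1}$, which is exactly the claimed bound.

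There is no real obstacle here — the argument is entirely elementary. The only place requiring a little care is getting the direction of the power inequality right: because $q-1\in(0,1)$ one should not conflate this with the Lipschitz-type estimate of \cref{fact:power-ge-1}, and the reason this works so cleanly is that extracting a single linear factor $x_i$ (rather than trying to estimate $\sum x_i^q$ via convexity or Hölder, which is awkward for the super-linear exponent $q>1$) lets the constraint $\sum_i x_i=1$ do all the work.
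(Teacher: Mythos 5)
Your proof is correct, and it takes a genuinely different (and simpler) route than the paper. The paper proves the lemma by an extremal rearrangement argument: using the convexity-type inequality $x_i^q+x_j^q\leq (x_i+\Delta)^q+(x_j-\Delta)^q$, it iteratively transfers mass within the tail until the tail becomes (essentially) $l=\floor{A/x_m}$ copies of $x_m$ plus a remainder, where $A=\sum_{i=m+1}^N x_i$, and then bounds the resulting sum by $A\, x_m^{q-1}\leq x_m^{q-1}\leq 1/m^{q-1}$. You instead bound each term pointwise, writing $x_i^q=x_i\cdot x_i^{q-1}\leq x_i\, x_{m+1}^{q-1}$ and letting the normalization $\sum_{i>m}x_i\leq 1$ finish the job, which yields $\sum_{i>m}x_i^q\leq A\,x_{m+1}^{q-1}\leq 1/m^{q-1}$. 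Your argument is shorter, needs only monotonicity of $t\mapsto t^{q-1}$ (so it works verbatim for every $q>1$, not just $1<q<2$), and its intermediate bound $A\,x_{m+1}^{q-1}$ is in fact at least as tight as the paper's $A\,x_m^{q-1}$; what the paper's heavier rearrangement buys is only the identification of the extremal tail configuration, which is not needed for the stated bound. One presentational remark: your closing comment suggests convexity is ``awkward'' here, but the paper's proof shows it does work --- it is just unnecessary, as your direct factorization demonstrates.
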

\begin{proof}

Note that if \(x_i\geq x_j\) and \(0\leq \Delta\leq x_j\), then it is easy to verify that
\[x_i^q+x_j^q \leq (x_i+\Delta)^q + (x_j-\Delta)^q.\]
For any sequence \(x_{m}\geq x_{m+1} \cdots \geq x_N\geq 0\), we define a new sequence by the following process: 
\begin{enumerate}
\item Find the smallest index \(j\) such that \(x_j< x_{m}\), and then find the largest index \(k\) such that \(k>j\) and \(x_k>0\). If there are no such \(j,k\), then do nothing.
\item Upon the success of finding \(j,k\), we define the new sequence by \(x'_i=x_i\) for all \(i\neq j,k\) and 
\[x'_j=x_j+\Delta, \quad x'_k=x_k-\Delta,\]
where \(\Delta=\min\{x_{m}-x_j, x_k\}\).
\end{enumerate}
It is obvious that
\[x_{m+1}^q+\cdots+x_N^q \leq (x'_{m+1})^q+\cdots+(x'_N)^q.\]
Starting from a sequence \(x_{m}\geq x_{m+1} \geq \cdots \geq x_N\), we define \(A=\sum_{i=m+1}^N x_i\). Then, we iteratively apply this process and finally get a sequence like
\[x_m, \underbrace{x_{m}, x_{m},\ldots, x_{m}}_{l}, y, \]
where \(l=\lfloor A/x_{m}\rfloor\) and \(y=A-l\cdot x_{m}\). Therefore
\begin{align}
\sum_{i=m+1}^{N} x_i^q
&\leq l\cdot x_{m}^q + y^q  \nonumber \\
&= x_{m+1}^q  \left(l+ \left(\frac{A}{x_{m}}-l\right)^q \right) \nonumber \\
&\leq x_{m}^q \cdot \frac{A}{x_{m}}\label{eq-12252137}\\
& \leq x_{m}^{q-1},\label{eq-12252145}
\end{align}
where \cref{eq-12252137} is because \(A/x_{m}-l < 1\). Then, by noting that \(x_{m}\leq 1/m\), we have
\[\eqref{eq-12252145}\leq \frac{1}{m^{q-1}}.\]
\end{proof}

Now we are ready to prove \cref{thm:ub-1<q<2}. 

\begin{proof} [Proof of \cref{thm:ub-1<q<2}]
Our estimator for \cref{thm:ub-1<q<2} is formally given in \cref{algo:1<q<2}, where we set
\begin{equation}\label{eq-2100426}
\varepsilon' \coloneqq \rbra*{\frac{\varepsilon}{5}}^{\frac{1}{q-1}},  \quad m=\min\cbra*{\ceil*{\frac{1}{\varepsilon'}}, d}, \quad\delta'\coloneqq \frac{1}{3m}, \quad n = \Theta\rbra*{\frac{1}{\varepsilon'^2}}, \quad k = \Theta\rbra*{\log\rbra*{\frac{1}{\delta'}}}.
\end{equation} 
By \cref{lemma:sample}, we can use $nk=O\rbra{\log\rbra{1/\delta'}/\varepsilon'^2}$ samples of $\rho$ to obtain a sequence $\hat{\alpha} = \rbra{\hat{\alpha}_1, \hat{\alpha}_2, \dots, \hat{\alpha}_d}$ such that for every \(j\in [d]\),
\begin{equation} \label{eq-18521225}
    \Pr\sbra*{ \abs*{\hat{\alpha}_j - \alpha_j} \leq \varepsilon' } \geq 1 - \delta'. 
\end{equation}
Then, we consider the estimator:
\begin{equation*}
    \hat{P} \coloneqq \sum_{j=1}^m \hat{\alpha}_j^q.
\end{equation*}
We have
\begin{align}
\left|\hat{P}-\tr(\rho^q) \right| 
&= \left|\sum_{j=1}^m \left(\hat{\alpha}_j^q -\alpha_j^q\right) - \sum_{j=m+1}^d \alpha_j^q \right| \nonumber \\
&\leq \sum_{j=1}^m \left|\hat{\alpha}_j^q-\alpha_j^q\right| + \sum_{j=m+1}^d \alpha_j^q. \label{eq-12251937}
\end{align}
For the first term of \cref{eq-12251937}, note that
\begin{align}
\left|\hat{\alpha}_j^q - \alpha_j^q\right|
&= \left|(\hat{\alpha}_j-\alpha_j)\hat{\alpha}_j^{q-1}+\alpha_j \rbra{\hat{\alpha}_j^{q-1}-\alpha_j^{q-1}}\right| \nonumber \\
& \leq \abs*{\hat{\alpha}_j-\alpha_j}\hat{\alpha}_j^{q-1}+\alpha_j \abs*{\hat{\alpha}_j^{q-1}-\alpha_j^{q-1}} \nonumber \\
&\leq \abs*{\hat{\alpha}_j-\alpha_j}\hat{\alpha}_j^{q-1}+\alpha_j \abs*{\hat{\alpha}_j-\alpha_j}^{q-1},\label{eq-12251831}
\end{align}
where the last inequality is by \cref{prop-12251837}. Then, by \cref{eq-18521225}, with probability \(\geq 1-\delta'\), the following holds:
\[\eqref{eq-12251831}\leq \varepsilon' (\alpha_j+\varepsilon')^{q-1}+\alpha_j (\varepsilon')^{q-1}.\]
This implies, with probability \(\geq 1-m\delta'\), we have
\begin{align}
\sum_{j=1}^m \abs*{\hat{\alpha}_j^q-\alpha_j^q}
&\leq \varepsilon' \sum_{j=1}^m (\alpha_j+\varepsilon')^{q-1}+(\varepsilon')^{q-1}\sum_{j=1}^m \alpha_j \nonumber \\
&\leq \varepsilon' \sum_{j=1}^m (\alpha_j+\varepsilon')^{q-1}+(\varepsilon')^{q-1} \nonumber \\
&\leq \varepsilon' m^{2-q} \cdot \left(m\varepsilon' + \sum_{j=1}^m \alpha_j\right)^{q-1} + (\varepsilon')^{q-1}\label{eq-12251933}\\
& \leq \varepsilon' m^{2-q}  \rbra*{m\varepsilon' + 1}^{q-1} + (\varepsilon')^{q-1}, \label{eq:sum-alpha-diff}
\end{align}
where \cref{eq-12251933} is by \cref{prop-12251915}.

Combining \cref{eq:sum-alpha-diff} with \cref{eq-12251937}, we have that, with probability $\geq 1 - m\delta'$, it holds that
\begin{equation} \label{eq:1<q<2-err}
    \left|\hat{P}-\tr(\rho^q)\right|\leq \varepsilon' m^{2-q}  \rbra*{m\varepsilon' + 1}^{q-1} + (\varepsilon')^{q-1} + \sum_{j=m+1}^d\alpha_j^q. 
\end{equation}

According to the choice of $\varepsilon', m, \delta'$ in \cref{eq-2100426} and by \cref{eq:1<q<2-err}, we have that with probability $\geq 1 - m\delta' = 2/3$, it holds that
\begin{equation*}
    \abs*{ \hat P - \tr\rbra{\rho^q} } \leq \varepsilon .
\end{equation*}
To see this, we consider the following two cases:
\begin{enumerate}
    \item $1/\varepsilon' \leq d$. In this case, $1/\varepsilon' \leq m = \ceil{1/\varepsilon'} < 2/\varepsilon'$. 
    We use \cref{lemma-12252035} to obtain:
\begin{equation}
\sum_{j=m+1}^d\alpha_j^q\leq \frac{1}{m^{q-1}}.\label{eq-12252001}
\end{equation}
    Using \cref{eq-12252001}, we have
    \begin{align}
        \eqref{eq:1<q<2-err} 
        & \leq \varepsilon' (2/\varepsilon')^{2-q}(2+1)^{q-1}+(\varepsilon')^{q-1}+(1/\varepsilon')^{1-q} \nonumber\\
        & \leq 3(\varepsilon')^{q-1} + 2(\varepsilon')^{q-1}\nonumber \\
        & \leq 5(\varepsilon')^{q-1} \nonumber \\
        & = \varepsilon. \nonumber
    \end{align}
    \item $1/\varepsilon' > d$. In this case, $m = d < 1/\varepsilon'$. We have
    \begin{align}
        \eqref{eq:1<q<2-err} 
        & \leq \varepsilon' \rbra{1/\varepsilon'}^{2-q}  \rbra*{1 + 1}^{q-1} + (\varepsilon')^{q-1} \nonumber \\
        & \leq 5 \rbra{\varepsilon'}^{q-1} \nonumber \\
        & = \varepsilon. \nonumber
    \end{align}
\end{enumerate}
To complete the proof, the sample complexity is 
\begin{equation*}
    O\rbra*{\frac{\log\rbra{1/\delta'}}{\varepsilon'^2}} = O\rbra*{\frac{\log\rbra{1/\varepsilon}}{\varepsilon^{\frac{2}{q-1}}}}.
\end{equation*}

\end{proof}

\subsection{Lower bound}

\begin{theorem} \label{thm:lb1to2}
    For every constant $1 < q < 2$, when the dimension of $\rho$ is sufficiently large, any quantum estimator to additive error $\varepsilon$ for $\tr\rbra{\rho^q}$ requires sample complexity $\Omega\rbra{1/\varepsilon^{\frac{1}{q-1}}}$.
\end{theorem}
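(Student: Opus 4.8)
The plan is to reduce from a state-discrimination task between two unitarily invariant ensembles and then invoke \cref{thm:qsd} together with the convergence bounds in \cref{lemma-241736}. Concretely, for a parameter $d$ (to be chosen) and a small integer $t \geq 2$, I would compare the maximally mixed state on the first $d$ coordinates, $\rho_0 = \frac{1}{d}\sum_{j=1}^d \ketbra{j}{j}$, against the maximally mixed state on $td$ coordinates, $\rho_1 = \frac{1}{td}\sum_{j=1}^{td}\ketbra{j}{j}$. These satisfy $\tr(\rho_0^q) = d^{1-q}$ and $\tr(\rho_1^q) = (td)^{1-q} = t^{1-q} d^{1-q}$, so the gap is $\tr(\rho_0^q) - \tr(\rho_1^q) = (1 - t^{1-q}) d^{1-q} = \Theta(d^{1-q})$ for fixed $t$. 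Setting $\varepsilon = \Theta(d^{1-q})$, i.e.\ $d = \Theta(\varepsilon^{-1/(q-1)})$, an estimator of $\tr(\rho^q)$ to additive error $\varepsilon/3$ would distinguish $\rho_0$ from $\rho_1$.

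The key point is that since $\tr(\rho^q)$ is unitarily invariant and the states are maximally mixed (hence also permutation invariant in the natural sense), by the discussion in the preliminaries on weak Schur sampling, \emph{any} $n$-sample estimator can be replaced by one that performs weak Schur sampling on $\rho^{\otimes n}$, producing a Young diagram $\lambda$, and then post-processes classically. Therefore, distinguishing $\rho_0^{\otimes n}$ from $\rho_1^{\otimes n}$ reduces to distinguishing the distributions $\operatorname{SW}^n_d$ from $\operatorname{SW}^n_{td}$, and by the data-processing inequality and \cref{thm:qsd} (applied to the classical output of weak Schur sampling), the success probability of any such distinguisher is bounded by $\frac12(1 + \frac12\|\operatorname{SW}^n_d - \operatorname{SW}^n_{td}\|_1)$. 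So it suffices to show $\|\operatorname{SW}^n_d - \operatorname{SW}^n_{td}\|_1$ is small, say $\leq 1/3$, whenever $n$ is below the claimed bound.

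To bound this total variation distance I would route through the Plancherel distribution: by the triangle inequality,
\[
  \|\operatorname{SW}^n_d - \operatorname{SW}^n_{td}\|_1 \leq \|\operatorname{SW}^n_d - \operatorname{Planch}(n)\|_1 + \|\operatorname{Planch}(n) - \operatorname{SW}^n_{td}\|_1,
\]
and \cref{lemma-241736} gives, for $2 \leq n \leq d$, that each term is at most $\sqrt{2}\,n/d$ and $\sqrt{2}\,n/(td) \leq \sqrt{2}\,n/d$ respectively, so the sum is $O(n/d)$. Hence if $n = o(d) = o(\varepsilon^{-1/(q-1)})$ the distance is $o(1)$, and choosing constants appropriately forces $n = \Omega(\varepsilon^{-1/(q-1)})$ for any successful estimator; the constraint $n \leq d$ is automatically satisfied in the regime of interest since we are proving a lower bound of exactly that order. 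Finally one checks the dimension $td = \Theta(\varepsilon^{-1/(q-1)})$ is indeed ``sufficiently large'' as $\varepsilon \to 0$, matching the theorem's hypothesis.

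The main obstacle is the reduction step from quantum sample complexity to classical distinguishing of Schur--Weyl distributions: one must argue carefully that it loses nothing to assume the estimator is of weak-Schur-sampling form (this is the canonical-estimator argument from the preliminaries, citing \cite[Lemma 20]{MdW16}), and that the optimal distinguishing probability for the two ensembles is exactly governed by $\|\operatorname{SW}^n_d - \operatorname{SW}^n_{td}\|_1$ rather than by a trace distance that could behave differently. Everything after that — the gap computation $\Theta(d^{1-q})$, the triangle inequality through $\operatorname{Planch}(n)$, and plugging in \cref{lemma-241736} — is routine, so the conceptual weight sits entirely in setting up the ensemble-discrimination framework and identifying the correct dimension scaling $d \asymp \varepsilon^{-1/(q-1)}$.
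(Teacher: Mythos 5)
Your proposal is correct and follows essentially the same route as the paper: a hard instance consisting of uniform spectra of two ranks differing by a constant factor at dimension $\Theta(\varepsilon^{-1/(q-1)})$, reduced to distinguishing the corresponding Schur--Weyl distributions and bounded via their convergence to $\operatorname{Planch}(n)$ (\cref{lemma-241736}), with the case $n > d$ handled trivially. The only difference is presentational: you symmetrize the measurement (the canonical weak-Schur-sampling estimator from the preliminaries) applied to two fixed states, while the paper symmetrizes the state side by Haar-averaging $\rho_1^{\otimes n}$ and applying \cref{thm:qsd} to the averaged states --- two equivalent ways of arriving at the same bound $\Abs{\operatorname{SW}^n_r - \operatorname{SW}^n_d}_1$.
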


\begin{proof}
Given integers \(1\leq r\leq d\), we use \(D_{r,d}\) to denote the \(d\times d\) diagonal matrix:
\[D_{r,d}\coloneqq \diag(\underbrace{\tfrac{1}{r},\ldots,\tfrac{1}{r} }_{r},\underbrace{0,\ldots,0}_{d-r}).\]
Let 
\begin{equation*}
r=\left\lfloor \frac{1}{(2\varepsilon)^{\frac{1}{q-1}}}\right\rfloor\quad \textup{ and }\quad d=\left\lfloor \frac{1}{\varepsilon^{\frac{1}{q-1}}}\right\rfloor+1.
\end{equation*}
If the number of samples \(n>r\), then we directly have \(n\geq \Omega(1/\varepsilon^{\frac{1}{q-1}})\). Therefore, we assume 
\begin{equation}\label{eq-242207}
n\leq r.
\end{equation}

Then, consider the following distinguishing problem.
\begin{problem}\label{prob-220209}
Suppose a \(d\)-dimensional quantum state \(\rho\) is in one of the  following with equal probability:
\begin{enumerate}
    \item \(\rho=\rho_1\coloneqq U D_{r,d} U^\dag\), where \(U\sim \mathbb{U}_d\) is a \(d\)-dimensional Haar random unitary.
    \item \(\rho=\rho_2\coloneqq D_{d,d}\). 
\end{enumerate}
The task is to distinguish between the above two cases.
\end{problem}

Note that \(\tr(\rho_1^{q})=1/r^{q-1}\geq 2\varepsilon\) and \(\tr(\rho_2^q)=1/d^{q-1}\leq \varepsilon\). Therefore, any estimator of \(\tr(\rho^{q})\) to additive error \(\frac{1}{2}\varepsilon = \Theta\rbra{\varepsilon}\) is able to distinguish the two cases in \cref{prob-220209}.

On the other hand, suppose we have \(n\) samples of \(\rho\). Then, for the first case (i.e., \(\rho=\rho_1\)), we have
\begin{align}
\E\sbra*{\rho_1^{\otimes n}}&=\E_{U\sim \mathbb{U}_d}\sbra*{U^{\otimes n} D_{r,d}^{\otimes n} U^{\dag\otimes n}} \nonumber\\
&=\bigoplus_{\lambda\vdash n} I_{\mathcal{P}_{\lambda}} \otimes\E_{U\sim \mathbb{U}_d}\sbra*{ \mathtt{q}_{\lambda}(U) \mathtt{q}_\lambda(D_{r,d}) \mathtt{q}_\lambda(U)^\dag} \label{eq-241815}\\
&=\bigoplus_{\lambda \vdash n} I_{\mathcal{P}_\lambda} \otimes I_{\mathcal{Q}_\lambda^d} \cdot \frac{s_{\lambda}(D_{r,d})}{\dim(\mathcal{Q}_\lambda^d)},\label{eq-220432}
\end{align}
where \cref{eq-241815} can be seen by \cref{fact-241853}, in \cref{eq-220432} is by \cref{coro-240446} and the observation that \(\E_{U\sim \mathbb{U}_d}\sbra*{ \mathtt{q}_{\lambda}(U)\mathtt{q}_{\lambda}(D_{r,d})\mathtt{q}_\lambda(U)^\dag}\) commutes with the actions of \(U\in\mathbb{U}_d\), in which \(s_{\lambda}(D_{r,d})\) refers to \(s_{\lambda}(\underbrace{1/r,\ldots,1/r}_r,\underbrace{0,\ldots,0}_{d-r})\).
Similarly, for the second case (i.e., \(\rho=\rho_2\)), we have
\begin{align}
\E\sbra*{\rho_2^{\otimes n}}=\bigoplus_{\lambda \vdash n} I_{\mathcal{P}_\lambda} \otimes I_{\mathcal{Q}_\lambda^d} \cdot \frac{s_{\lambda}(D_{d,d})}{\dim(\mathcal{Q}_\lambda^d)}.\nonumber
\end{align}

By \cref{thm:qsd}, the success probability of distinguishing \(\E[\rho_1^{\otimes n}]\) and \(\E[\rho_2^{\otimes n}]\) is upper bounded by 
\[\frac{1}{2}+\frac{1}{4}\Abs*{\E\sbra*{\rho_1^{\otimes n}}-\E\sbra*{\rho_2^{\otimes n}}}_{1}.\]
Note that
\begin{align}
\Abs*{\E\sbra*{\rho_1^{\otimes n}}-\E\sbra*{\rho_2^{\otimes n}}}_{1}&=\sum_{\lambda\vdash n} \left|\dim(\mathcal{P}_\lambda) \cdot s_\lambda(D_{r,d})-\dim(\mathcal{P}_\lambda) \cdot s_\lambda(D_{d,d})\right|\nonumber\\
&=\|\operatorname{SW}^n_r- \operatorname{SW}^n_d\|_1,\label{eq-242111}
\end{align}
where in \cref{eq-242111} we use the stability of Schur polynomial, i.e., \[s_{\lambda}(D_{r,d})=s_\lambda(\underbrace{\tfrac{1}{r},\ldots,\tfrac{1}{r}}_{r},\underbrace{0,\ldots,0}_{d-r})=s_\lambda(\underbrace{\tfrac{1}{r},\ldots,\tfrac{1}{r}}_{r}).\]
Then, since \(n\leq r \leq d \) (see \cref{eq-242207}), by \cref{lemma-241736}, we have that
\[\|\operatorname{SW}^n_r-\operatorname{Planch}(n)\|_1\leq \sqrt{2}\frac{n}{r},\]
and
\[\|\operatorname{SW}^n_d-\operatorname{Planch}(n)\|_1\leq \sqrt{2}\frac{n}{d}.\]
This means
\begin{align}
\|\operatorname{SW}^n_r- \operatorname{SW}^n_d\|_1 
&\leq \|\operatorname{SW}^n_r-\operatorname{Planch}(n)\|_1+\|\operatorname{SW}^n_d-\operatorname{Planch}(n)\|_1 \nonumber \\
&\leq \sqrt{2}\frac{n}{r}+\sqrt{2}\frac{n}{d}. \nonumber
\end{align}

Therefore, if the success probability is at least $2/3$, we have
\[\frac{2}{3} \leq \frac{1}{2} + \frac{1}{4}\rbra*{\sqrt{2}\frac{n}{r}+\sqrt{2}\frac{n}{d} } \leq \frac{1}{2} + \frac{n}{\sqrt{2}r},\]
which means
\begin{equation*}
n\geq \frac{\sqrt{2}r}{6} = \frac{\sqrt{2}}{6}\floor*{ \frac{1}{(2\varepsilon)^{\frac{1}{q-1}}}} =\Omega\rbra*{\frac{1}{\varepsilon^{\frac{1}{q-1}}}}.
\end{equation*}
\end{proof}

\section{The Case of \texorpdfstring{$0 < q < 1$}{0 < q < 1}}

\subsection{Upper bound}\label{sec-2100432}
\begin{algorithm}[h]
\caption{$\texttt{PowerTrace}\rbra{\rho, q, \varepsilon}$ for $0 < q < 1$}
    \label{algo-2100311}
    \begin{algorithmic}[1]
    \Require Sample access to a $d$-dimensional mixed quantum state $\rho$; $q \in \rbra{0, 1}$ and $\varepsilon \in \rbra{0, 1}$. 

    \Ensure An estimate of $\tr\rbra{\rho^q}$.
    
    \State $\eta\gets \varepsilon^{1/q}/d^{1/q-1}$, $n\gets \Theta(d^2/\eta^2)$.

    \State $\hat\alpha \gets \texttt{SpectrumEstimation}\rbra{\rho, n, 1}$. 

    \State $\hat P \gets \sum_{j=1}^{d} \hat\alpha_j^q$.

    \State \Return $\hat P$.
    \end{algorithmic}
\end{algorithm}

We state the sample complexity of estimating $\tr\rbra{\rho^q}$ for the case of $0 < q < 1$ as follows.

\begin{theorem} \label{thm-2100931}
For every constant $0<q<1$, we can estimate $\tr(\rho^q)$ to within additive error $\varepsilon$ using $O(d^{2/q}/\varepsilon^{2/q})$ samples of $\rho$.
\end{theorem}
\begin{proof}
Let $\eta\coloneqq \varepsilon^{1/q}/d^{1/q-1}$.
By \cref{lemma:tv-estimation}, we can use $n\coloneqq \Theta(d^2/\eta^2)=O((d/\varepsilon)^{2/q})$ samples of $\rho$ to obtain a sequence $\hat{\alpha}=(\hat{\alpha}_1,\hat{\alpha}_2,\ldots,\hat{\alpha}_d)$ such that
\[\sum_{j=1}^d |\hat{\alpha}_j-\alpha_j|\leq \eta,\]
with probability at least $2/3$.
Then, we consider the estimator
\[\hat{P}\coloneqq \sum_{j=1}^d \hat{\alpha}_j^q.\]
We have
\begin{align}
\left|\hat{P}-\tr(\rho^q)\right|&\leq 
\sum_{j=1}^d\abs*{\hat{\alpha}_j^q-\alpha_j^q} \nonumber  \\
&\leq \sum_{j=1}^d \abs*{\hat{\alpha}_j-\alpha_j}^q  \label{eq-2100401}\\
&\leq \left(\sum_{j=1}^d \abs*{\hat{\alpha}_j-\alpha_j}\right)^q d^{1-q} \label{eq-2100400}\\
& \leq \eta^q d^{1-q} =\varepsilon,\nonumber 
\end{align}
where \cref{eq-2100401} is due to \cref{prop-12251837}, and \cref{eq-2100400} is due to \cref{prop-12251915}.
Therefore, with probability at least $2/3$, $\hat{P}$ gives a good estimate for $\tr(\rho^q)$.
\end{proof}

\subsection{Lower bound}
\begin{theorem} \label{thm-03031025}
For every constant $0<q<1$ and sufficiently large $d$, any quantum estimator to additive error $\varepsilon$ for $\tr(\rho^q)$ requires sample complexity $\Omega(d^{1/q}/\varepsilon^{1/q})$.
\end{theorem}
Before giving the proof, we first introduce the following notations.
Let $n, m, d$ be positive integers such that $n\geq m$. Let $\mathcal{H}_1\cong\cdots\cong \mathcal{H}_n\cong\mathbb{C}^{d}$ be $n$ copies of the $d$-dimensional Hilbert space. 
Let $S\subseteq [n]=\{1,2,\ldots,n\}$ be a set of integers and $A$ be a $d\times d$ matrix. We use the following notation $A^{\otimes S}$
to denote the matrix $A^{\otimes |S|}$ acting on $\bigotimes_{i\in S} \mathcal{H}_i$.
Therefore, if $B$ is another $d\times d$ matrix, then
$A^{\otimes S} \otimes B^{\otimes [n]\setminus S}$
denotes the matrix $\bigotimes_{i=1}^n X_i$ acting on $\bigotimes_{i=1}^n \mathcal{H}_i$ where $X_i=A$ for $i\in S$, and $X_i=B$ otherwise.
Now we give the proof below.
\begin{proof}[Proof of \cref{thm-03031025}]
Given an integer \(r\in [1,d-1]\), and $\Delta\in [0,1]$, we use \(D_{r}\) to denote the \((d-1)\times (d-1)\) diagonal matrix:
\[D_{r}\coloneqq \diag(\underbrace{\tfrac{1}{r},\ldots,\tfrac{1}{r} }_{r},\underbrace{0,\ldots,0}_{d-1-r}),\]
with respect to the orthonormal states $\{\ket{1},\ldots,\ket{d-1}\}$ (i.e., $D_{r}=\frac{1}{r}\sum_{i=1}^r\ketbra{i}{i}$).
Let 
\begin{equation*}
r_1\coloneqq \left\lfloor \frac{d-1}{2}\right\rfloor,\quad r_2\coloneqq 2r_1, \quad \textup{and}\quad \Delta\coloneqq \frac{\varepsilon^{1/q}}{d^{1/q-1}}\cdot \frac{1}{(1-\frac{1}{2^{1-q}})^{1/q}}\leq \frac{\varepsilon^{1/q}}{d^{1/q-1}} \cdot O\!\left(\frac{1}{1-q}\right).
\end{equation*}

Then, consider the following distinguishing problem.
\begin{problem}\label{prob-2170137}
Suppose a \(d\)-dimensional quantum state \(\rho\) is in one of the  following with equal probability:
\begin{enumerate}
    \item \(\rho=\rho_1\coloneqq (1-\Delta)\cdot \ketbra{0}{0} + \Delta \cdot U D_{r_1} U^\dag\),
    \item \(\rho=\rho_2\coloneqq (1-\Delta) \cdot \ketbra{0}{0}+\Delta \cdot UD_{r_2}U^\dag\),
\end{enumerate}
where \(U\sim \mathbb{U}_{d-1}\) is a \(d-1\)-dimensional Haar random unitary acting on $\spanspace\{\ket{1},\ldots,\ket{d-1}\}$.
The task is to distinguish between the above two cases.
\end{problem}

Suppose an algorithm can estimate $\tr(\rho^q)$ to within error $\varepsilon/10$, with sample complexity $n$. Then, this algorithm can distinguish these two cases, since 
\begin{align}
\tr(\rho_2^q)-\tr(\rho_1^q)&=\Delta^q r_2^{1-q}-\Delta^q r_1^{1-q} \nonumber\\
&=\frac{\varepsilon}{ d^{1-q}(1-\frac{1}{2^{1-q}})}(r_2^{1-q}-r_1^{1-q}) \nonumber \\
&= \frac{\varepsilon r_2^{1-q}}{d^{1-q}} \nonumber \\
&\geq \frac{\varepsilon}{2}.\nonumber
\end{align}

However, to distinguish these two states, it must satisfy that:
\[n\geq \Omega(d/\Delta)= \Omega\left((1-q)\frac{d^{1/q}}{\varepsilon^{1/q}}\right).\]
To see this, we first assume that $10000n\Delta \leq d$, since otherwise $n\geq \Omega(d/\Delta)$ and we have done.
Note that
\begin{align}
\E_{U\sim\mathbb{U}_{d-1}}[\rho_1^{\otimes n}] &= \E_{U\sim\mathbb{U}_{d-1}}\!\left[\sum_{i=0}^n \Delta^{i} (1-\Delta)^{n-i} \cdot \sum_{\substack{S\subseteq [n]\\ |S|=i}}\left(UD_{r_1}U^\dag\right)^{\otimes S} \otimes \ketbra{0}{0}^{\otimes [n]\setminus S}\right] \nonumber \\
&=\sum_{i=0}^n \Delta^{i} (1-\Delta)^{n-i} \cdot \sum_{\substack{S\subseteq [n]\\ |S|=i}}\overline{D}_{r_1}^S \otimes \ketbra{0}{0}^{\otimes [n]\setminus S}, \nonumber
\end{align}
where $\overline{D}_{r}^S\coloneqq \E_{U\sim\mathbb{U}_{d-1}}\!\left[(UD_{r}U^\dag)^{\otimes S}\right]$. Similarly,
\[\E_{U\sim\mathbb{U}_{d-1}}[\rho_2^{\otimes n}]=\sum_{i=0}^n \Delta^{i} (1-\Delta)^{n-i} \cdot \sum_{\substack{S\subseteq [n]\\ |S|=i}}\overline{D}_{r_2}^S \otimes \ketbra{0}{0}^{\otimes [n]\setminus S}.\]
Then, we can see that
\begin{align}
\left\|\E_U[\rho_1^{\otimes n}]-\E_U[\rho_2^{\otimes n}]\right\|_1 &\leq \sum_{i=0}^n\Delta^i(1-\Delta)^{n-i}\cdot \sum_{\substack{S\subseteq[n]\\ |S|=i}}\left\|\overline{D}_{r_1}^S-\overline{D}_{r_2}^S\right\|_1 \nonumber\\
&=\sum_{i=1}^n \Delta^i(1-\Delta)^{n-i}\cdot \binom{n}{i}\cdot  \left\|\overline{D}_{r_1}^i-\overline{D}_{r_2}^i\right\|_1 \label{eq-322040} \\
&\leq \sum_{1\leq i\leq \min\{n,100 n \Delta\}} \Delta^i (1-\Delta)^{n-i}\cdot \binom{n}{i} \cdot \left\|\overline{D}_{r_1}^i-\overline{D}_{r_2}^i\right\|_1 + \frac{e}{100}\label{eq-322111}\\
&=\sum_{1\leq i\leq \min\{n,100n\Delta\}} \Delta^i (1-\Delta)^{n-i}\cdot \binom{n}{i} \cdot \left\|\operatorname{SW}^i_{r_1}-\operatorname{SW}^i_{r_2}\right\|_1 +\frac{e}{100} \label{eq-322203} \\
&\leq \sum_{1\leq i\leq \min\{n,100n\Delta\}} \Delta^i (1-\Delta)^{n-i}\cdot \binom{n}{i} \cdot \left(\sqrt{2}\frac{i}{r_1}+\sqrt{2}\frac{i}{r_2}\right) +\frac{e}{100} \label{eq-322202} \\
&\leq \frac{1000n\Delta}{d}+\frac{e}{100}, \label{eq-322201}
\end{align}
where in \cref{eq-322040} we define $\overline{D}_r^i=\E_{U\sim\mathbb{U}_{d-1}}\!\left[(UD_rU^\dag)^{\otimes i}\right]$, \cref{eq-322111} is due to \cref{lemma-322325}, \cref{eq-322203} is by noting that (similar to \cref{eq-241815} and \cref{eq-220432})
\[\overline{D}_{r}^i=\E_{U\sim\mathbb{U}_{d-1}}\!\left[U^{\otimes i} D_{r}^{\otimes i} U^{\dag\otimes i}\right]=\bigoplus_{\lambda\vdash n} I_{\mathcal{P}_\lambda}\otimes I_{\mathcal{Q}_\lambda^{d-1}}\cdot \frac{s_\lambda(D_r)}{\dim(\mathcal{Q}_\lambda^{d-1})},\]
\cref{eq-322202} is because
\[\left\|\operatorname{SW}_{r_1}^i-\operatorname{SW}_{r_2}^i\right\|\leq \left\|\operatorname{SW}_{r_1}^i-\operatorname{Planch}(i)\right\|+\left\|\operatorname{SW}_{r_2}^i-\operatorname{Planch}(i)\right\|\leq \sqrt{2}\frac{i}{r_1}+\sqrt{2}\frac{i}{r_2},\]
due to \cref{lemma-241736} and the fact that $i\leq 100n\Delta\leq d/100<r_1<r_2$, and \cref{eq-322201} is because $\sum_{1\leq i\leq \min\{n,100n\Delta\}}\Delta^i(1-\Delta)^{n-i}\binom{n}{i}\leq 1$.
Therefore, if the success probability is at least $2/3$, then due to \cref{thm:qsd}, we have
\[\frac{2}{3}\leq \frac{1}{2}+\frac{1}{4}\left(\frac{1000n\Delta}{d}+\frac{e}{100}\right),\]
which means
\[n\geq \Omega(d/\Delta),\]
as desired.
\end{proof}
\begin{lemma}\label{lemma-322325}
Let $n>0$ be an integer and $p\in [0,1]$. Then, for any $k>e$, we have
\[\sum_{knp< i\leq n}p^i(1-p)^{n-i}\binom{n}{i}\leq \frac{e}{k}.\]
\end{lemma}
\begin{proof}
Let $X_1,\ldots,X_n$ be independent Bernoulli random variables such that $\Pr[X_i=1]=p$. Let $X=\sum_{i=1}^n X_i$ and note that $\E[X]=np$. If $p=0$, then the claim is trivial. Hence, we may assume $p>0$.
Let $m=\max\{1,knp\}$ and $\gamma=m/np$. Then, we have
\begin{align}
\sum_{knp<i\leq n}p^i(1-p)^{n-i}\binom{n}{i}&\leq \sum_{m\leq i\leq n}p^i(1-p)^{n-i}\binom{n}{i} \leq \Pr[X\geq m]\nonumber \\
&\stackrel{\hypertarget{eq-330047}{(\textup{i})}}{\leq} \left(\frac{e^{\gamma-1}}{\gamma^\gamma}\right)^{np}\nonumber \leq \left(\frac{e}{\gamma}\right)^{\gamma np} \nonumber \\
&\stackrel{\hypertarget{eq-330048}{(\textup{ii})}}{\leq} \frac{e}{k},\nonumber 
\end{align}
where \hyperlink{eq-330047}{(i)} is by the Chernoff's bound, and \hyperlink{eq-330048}{(ii)} is because $\gamma\geq k>e$ and $\gamma np \geq 1$.
\end{proof}

\subsection{Computational Hardness}
\label{subsec:comp-hardness-0leQle1}

We begin with the promise problems corresponding to estimating the quantum $q$-Tsallis entropy: 

\begin{definition}[Quantum $q$-Tsallis Entropy Approximation, \TsallisQEA{}, adapted from~{\cite[Definition 5.2]{LW25}}]
	\label{def:TsallisQEA}
    Let $Q$ be a quantum circuit acting on $m$ qubits with $n$ specified output qubits, where $m\geq 2n$. Let $\rho$ denote the quantum state obtained by applying $Q$ to the initial state $\ket{0}^{\otimes m}$ and tracing out the non-output qubits. Let $g(n)$ and $t(n)$ be positive efficiently computable functions. The task is to decide whether the following holds:
    \begin{itemize}
	   \item \textbf{\emph{Yes:}} $\Sq(\rho) \geq t(n) + g(n)$;
	   \item \textbf{\emph{No:}} $\Sq(\rho) \leq t(n) - g(n)$.
    \end{itemize}
\end{definition}

The main result of this subsection is the following hardness statement, which directly follows from a reduction from \QSCMM{} to \TsallisQEA{} for   $q\in(0,1)$, as detailed in \Cref{lemma:reduction-MaxMixedQSD-TsallisQEA}: 

\begin{restatable}[$\TsallisQEA$ is \NIQSZK{}-hard for $0< q <1$]{theorem}{TsallisQEAhardness}
    \label{thm:TsallisQEA-NIQSZKhard}
    For all real $q\in(0,1)$ and all integers $n \geq \ceil[\big]{\frac{5}{q(1-q)}}$, the following holds\emph{:}
    \[ \forall g(n;q) \in \sbra[\bigg]{\frac{1}{\poly(n)}, \frac{2^{5/q}q}{40}},~ \TsallisQEA[t(n;q),g(n;q)] \text{ is } \NIQSZK{}\text{-hard}. \]
    The threshold parameter $t(n;q)$ is as defined in \Cref{eq:tn-reduction-MaxMixedQSD-TsallisQEA}. 
\end{restatable}

To establish the reduction $\QSCMM \leq \TsallisQEA$, we prove an inequality that relates the $q$-Tsallis entropy of a distribution to its total variation distance from the uniform distribution. Specifically, combining \Cref{lemma:inequality-uniformTV-TsallisEA-LB,lemma:inequality-uniformTV-TsallisEA-UB} yields the desired bounds:\footnote{Related results have previously been shown: for the case $q=1$, see in~\cite[Lemma 16]{KLGN19}; and for the case $q>1$ with the additional condition $\TV(p,\nu) \geq 1/q$ (tailored to the upper bound), see~\cite[Lemma 4.10]{LW25}.}
\begin{theorem}[Bounds on $q$-Tsallis entropy via closeness to uniform distribution for $0<q<1$]
    \label{thm:inequality-uniformTV-TsallisEA}
    Let $p$ be a probability distribution over $[N]$, where $N\geq 2$, and let $\nu$ denote the uniform distribution over $[N]$. 
    Then, for every $q\in(0,1)$, the following inequalities hold\emph{:}
    \[ \rbra*{ 1-\TV(p,\nu) - \frac{1}{N} } \ln_q(N) \leq \Hq(p) \leq \ln_q(N) - \frac{q}{2}N^{1-q} \cdot \TV(p,\nu)^2. \]
\end{theorem}

In the remainder of this subsection, we present the proofs of \Cref{thm:TsallisQEA-NIQSZKhard,thm:inequality-uniformTV-TsallisEA}. Notably, for $q=1/2$, the upper bound in \Cref{thm:inequality-uniformTV-TsallisEA} can be strengthened using a simpler argument (\Cref{lemma:inequality-uniformTV-TsallisEA-UBhalf}). Consequently, this yields a slightly simpler reduction $\QSCMM \leq \TsallisQEAnoq_{1/2}$ (\Cref{lemma:reduction-MaxMixedQSD-TsallisQEA-half}) and the corresponding hardness result (\Cref{thm:TsallisQEA-NIQSZKhard-half}). 

\subsubsection{Useful bounds on Tsallis entropy}
\label{subsubsec:Tsallis-entropy-bounds}

In this subsection, we establish bounds on the $q$-Tsallis entropy, as stated in \Cref{lemma:inequality-uniformTV-TsallisEA-LB,lemma:inequality-uniformTV-TsallisEA-UB,lemma:inequality-uniformTV-TsallisEA-UBhalf}.

\begin{lemma}[Lower bound on $q$-Tsallis entropy via closeness to uniform distribution for $0<q<1$]
    \label{lemma:inequality-uniformTV-TsallisEA-LB}
    Let $p$ be a probability distribution over $[N]$, where $N\geq 2$, and let $\nu$ denote the uniform distribution over $[N]$. 
    Then, for every $q\in(0,1)$, the following inequality holds\emph{:}
    \[\rbra*{ 1-\TV(p,\nu) - \frac{1}{N} } \ln_q(N) \leq \Hq(p).\]
\end{lemma}

\begin{proof}
    Let $\Delta_{N}$ denote the set of probability distributions over $N$ elements. For any distribution $p\in \Delta_{N}$, it is straightforward to verify that $0 \leq \TV(p,\nu) \leq 1-1/N$. Throughout, we use the shorthand $\gamma \coloneqq \TV(p,\nu)$ for convenience. 
    
    To demonstrate the lower bound, we recall that $\Hq(p)$ is concave for all $0<q<1$ (see, e.g.,~\cite[Lemma 2.3]{LW25}). Consequently, the proof strategy used for the cases $q=1$ and $q>1$, particularly those in~\cite[Lemma 16]{KLGN19} and~\cite[Lemma 4.10]{LW25}, applies in the present setting as well. Specifically, it suffices to solve the convex optimization problem in \Cref{eq:Tsallis-entropy-min}, whose objective is to minimize the Tsallis entropy $\Hq(p)$ under the constraint $\TV(p,\nu) = \gamma$. Since the feasible region specified in \Cref{eq:Tsallis-entropy-min} constitutes a closed convex set, the problem admits an optimal solution, one such minimizer $p_{\min}$ is given in \Cref{eq:Tsallis-entropy-min-solution}.\footnote{Since $\Hq(p)$ is concave, its minimum over the compact convex polytope in \Cref{eq:Tsallis-entropy-min} is attained at an extreme point. A direct check of extremality shows that, up to permutation, the extreme points are exactly those of the form given in \Cref{eq:Tsallis-entropy-min-solution}.} 
    
    {\centering
    \begin{minipage}{0.5\textwidth}
        \centering
        \begin{mini}{}{\Hq(p')}{}{}{\label{eq:Tsallis-entropy-min}}{}
            \addConstraint{p'}{\in\Delta_N}{}
            \addConstraint{\TV(p',\nu)}{\leq \gamma}{}
        \end{mini}
    \end{minipage}%
    \begin{minipage}{0.5\textwidth}
    \begin{subequations}
    \label{eq:Tsallis-entropy-min-solution}
    \begin{align}
        &p_{\min}(i) = \begin{cases}
            \frac{1}{N}, &\text{if } i\in[k_{\min}]\\
            \frac{1}{N}+\gamma, &\text{if } i=k_{\min}+1\\
            \frac{\varepsilon}{N}, &\text{if } i=k_{\min}+2\\
            0, &\text{otherwise}
        \end{cases},\\
        &\begin{aligned}
            &\text{where } &k_{\min} &\coloneqq \floor*{N(1-\gamma)}-1,\\
            &&\varepsilon &\coloneqq N(1\!-\!\gamma) - \floor*{N(1\!-\!\gamma)}.
        \end{aligned}
    \end{align}
    \end{subequations}
    \end{minipage}  
    \vspace{1em}
    }

    Next, we derive the lower bound for the Tsallis entropy by evaluating $\Hq(p_{\min})$: 
    \begin{align*}
        \Hq(p_{\min}) &= \frac{1}{1-q} \left( k_{\min} \Big( \frac{1}{N} \Big)^q + \Big(\frac{1}{N} + \gamma\Big)^q + \Big( \frac{\varepsilon}{N} \Big)^q - 1 \right)\\
        &\geq \frac{1}{1-q} \left( (\floor*{N(1-\gamma)}-1 + \varepsilon^q) N^{-q} - \frac{(\floor*{N(1-\gamma)}-1)}{N} - \frac{\varepsilon}{N}\right)\\ 
        &= \frac{1}{1-q} \left( ( N(1-\gamma)-\varepsilon -1 + \varepsilon^q) N^{-q} - \frac{N(1-\gamma)-1}{N}\right)\\ 
        &\geq \frac{1}{1-q} \left( \Big(1-\gamma -\frac{1}{N}\Big) N^{1-q} - \rbra[\Big]{1-\gamma-\frac{1}{N}} \right)\\ 
        &= \left(1-\gamma - \frac{1}{N} \right) \ln_q(N).
    \end{align*}
    Here, in the second line, we omit the terms corresponding to $p_{\min}(k_{\min}+1)$, since $\rbra[\big]{\frac{1}{N}+\gamma}^q - \rbra[\big]{\frac{1}{N}+\gamma} \geq 0$ for $0 < q < 1$. The third line follows from the identity $\floor*{N(1-\gamma)} = N(1-\gamma) - \varepsilon$. The fourth line holds because $\varepsilon^q - \varepsilon \geq 0$ for $0 < q < 1$ and $0\leq \varepsilon \leq 1$.
\end{proof}

\begin{lemma}[Upper bound on $q$-Tsallis entropy via closeness to uniform distribution for $0<q<1$] 
    \label{lemma:inequality-uniformTV-TsallisEA-UB}
    Let $p$ be a probability distribution over $[N]$, where $N\geq 2$, and let $\nu$ denote the uniform distribution over $[N]$. 
    Then, for every $q\in(0,1)$, the following inequality holds\emph{:}
    \[ \Hq(p) \leq \ln_q(N) - \frac{q}{2}N^{1-q} \cdot \TV(p,\nu)^2. \]
\end{lemma}

\begin{proof}
    As in \Cref{lemma:inequality-uniformTV-TsallisEA-LB}, for any distribution $p\in \Delta_{N}$, we have $0 \leq \gamma \coloneqq\TV(p,\nu) \leq 1-1/N$. 
    To prove the upper bound, it therefore suffices to maximize the Tsallis entropy $\Hq(p)$ under the constraint $\TV(p,\nu) = \gamma$, which can be formulated as a \textit{non-convex} optimization problem analogous to \Cref{eq:Tsallis-entropy-min}. Accordingly, we consider the following optimization problem: 
    
    \begin{maxi}{}{\Hq(p')}{}{}{\label{eq:Tsallis-entropy-max}}{}
        \addConstraint{p'}{\in\Delta_N}{}
        \addConstraint{\TV(p',\nu)}{\geq \gamma.}{}
    \end{maxi}

    Although \Cref{eq:Tsallis-entropy-max} is analyzed in~\cite[Lemma 4.10]{LW25} for the regime $q>1$, the case $0<q<1$ requires a different analysis. 
    Since $\Hq(p)$ is maximized at $\nu$, any optimizer of \Cref{eq:Tsallis-entropy-max} may be taken to satisfy $\TV(p_{\max},\nu)=\gamma$, and by Jensen's inequality, the optimizer must be equalized on each side of $1/N$.
    Hence, any optimal solution $p_{\max}$ must have the form:\footnote{To see this, let $p'\in\Delta_N$ be feasible for \Cref{eq:Tsallis-entropy-max}, and write $\gamma' \coloneqq \TV(p',\nu) \geq \gamma$. If $\gamma'>\gamma$, then for $\tilde{p} \coloneqq \frac{\gamma}{\gamma'} p'+ \rbra[\big]{1-\frac{\gamma}{\gamma'}}\nu$, we have $\TV(\tilde{p},\nu)=\gamma$, and by concavity of $\Hq(p)$ together with the fact that $\Hq(\nu)$ is maximal, $\Hq(\tilde{p}) \geq \Hq(p')$. Hence it suffices to consider the case $\TV(p',\nu)=\gamma$. Now let $A\coloneqq \cbra{i:p'(i)>1/N}$ and $k\coloneqq |A|$. Then $\sum_{i\in A}\rbra*{p'(i)-1/N} = \gamma$ and $\sum_{i\notin A} \rbra*{1/N-p'(i)} = \gamma$. Since $x \mapsto x^q$ is concave for $0<q<1$, Jensen's inequality gives $\sum_{i\in A} p'(i)^q \leq k\rbra[\big]{\frac{1}{N} + \frac{\gamma}{k}}^q$ and $\sum_{i\notin A} p'(i)^q \leq (N-k) \rbra[\big]{\frac{1}{N}-\frac{\gamma}{N-k}}^q$. Therefore, $\sum_i \psi_i p'(i)^q \leq \PS_q(N,k,\gamma)$, so $\Hq(p') \leq \Hq\rbra[\big]{p^{(k)}}$. Thus, an optimal solution is attained by some distribution of the form $p^{(k)}$ in \Cref{eq:Tsallis-max-solution-candiates}, where $k \leq N(1-\gamma)$ ensures entry-wise non-negativity.\label{footnote:opt-solution-form}}
    \begin{equation}
        \label{eq:Tsallis-max-solution-candiates}
        \Hq(p_{\max}) = \max_{k \in [\floor*{N(1-\gamma)}]} \Hq\rbra*{p^{(k)}},
        \text{ where }
        p^{(k)}(i) \coloneqq \begin{cases}
        \frac{1}{N}+\frac{\gamma}{k}, &\text{if } i\in[k],\\
        \frac{1}{N} - \frac{\gamma}{N-k}, &\text{otherwise}.
        \end{cases}
    \end{equation}

    Combining \Cref{eq:Tsallis-entropy-max,eq:Tsallis-max-solution-candiates}, the task reduces to the following optimization problem for $0<q<1$, where the optimal choice of $k$ depends on the order $q$:
    \begin{maxi}{k}{\PS_q(N,k,\gamma) \coloneqq \sum_{i\in[N]} p^{(k)}(i)^q = k \cdot \rbra*{\frac{1}{N} + \frac{\gamma}{k}}^q + (N-k) \cdot \rbra*{\frac{1}{N} - \frac{\gamma}{N-k}}^q}{}{}{\label{eq:Tsallis-entropy-max-powerSum}}{}
        \addConstraint{0}{ \leq \gamma \leq 1-1/N}{}
        \addConstraint{1}{\leq k \leq \floor*{N(1-\gamma)}}{}
        \addConstraint{k, N}{\in \bbZ_+}{}
    \end{maxi}

    In the remainder of the proof, we derive an upper bound on the optimal value of \Cref{eq:Tsallis-entropy-max-powerSum}. 
    To this end, set $t \coloneqq k/N \in [1/N,1-\gamma]$. A straightforward calculation shows that
    \begin{align*}
        N^{q-1} \cdot \PS_q(N,k,\gamma) &= \rbra*{\frac{k}{N}}^{1-q} \rbra*{\frac{k}{N}+\gamma}^q + \rbra*{\frac{N-k}{N}}^{1-q} \rbra*{\frac{N-k}{N}-\gamma}^q\\
        &= t^{1-q} (t+\gamma)^q + (1-t)^{1-q} (1-t-\gamma)^q \coloneq F_q(t,\gamma).
    \end{align*}

    Applying Taylor's theorem to $F_q(t,\gamma)$ around $\gamma=0$, there exists $\xi \in (0,\gamma)$ corresponding to the Lagrange remainder such that 
    \begin{equation}
        \label{eq:truncationUB}
        F_q(t,\gamma) = F_q(t,0) + \gamma \cdot \frac{\partial}{\partial \gamma} F_q(t,\gamma)\bigg|_{\gamma=0} + \frac{\gamma^2}{2} \cdot \frac{\partial^2}{\partial \gamma^2} F_q(t,\gamma)\bigg|_{\gamma=\xi}.
    \end{equation}
    The first and second derivatives are given by
    \begin{subequations}
    \label{eq:derivativesUB}
    \begin{align}
        \frac{\partial}{\partial\gamma} F_q(t,\gamma) &= q t^{1-q} (t+\gamma)^{q-1} - q (1-t)^{1-q} (1-t-\gamma)^{q-1},\\
        \frac{\partial^2}{\partial\gamma^2} F_q(t,\gamma) &= q(q-1)\rbra*{ t^{1-q} (t+\gamma)^{q-2} + (1-t)^{1-q} (1-t-\gamma)^{q-2} }.
    \end{align}
    \end{subequations}
    Combining \Cref{eq:truncationUB,eq:derivativesUB}, to establish the desired inequality it suffices to prove that    
    \begin{equation}
        \label{eq:targetInequalityUB}
        1 + \frac{\gamma^2}{2} \cdot \frac{\partial^2}{\partial \gamma^2} F_q(t,\gamma) \bigg|_{\gamma=\xi} \leq 1 - \frac{q}{2}(1-q)\gamma^2. 
    \end{equation}

    We now establish \Cref{eq:targetInequalityUB} by deriving the following lower bound via direct calculation: 
    \begin{subequations}
    \label{eq:targetInequalityUBfinal}
    \begin{align}
        \frac{1}{q(q-1)} \cdot \frac{\partial^2}{\partial \gamma^2} F_q(t,\gamma) \bigg|_{\gamma=\xi} &= t^{1-q} (t+\xi)^{q-2} + (1-t)^{1-q} (1-t-\xi)^{q-2}\\
        &= \frac{1}{t} \cdot \rbra*{1+\frac{\xi}{t}}^{q-2} + \frac{1}{1-t}\cdot \rbra*{1- \frac{\xi}{1-t}}^{q-2}\\
        &\geq \frac{1}{t} \cdot \rbra*{1+\frac{1}{t}}^{-2} + \frac{1}{1-t} = \frac{3t+1}{(1-t)(t+1)^2} \coloneqq G(t).
    \end{align}
    \end{subequations}
    Here, the inequality in the third line follows from 
    \[  \rbra*{1+\frac{\xi}{t}}^{q-2} \geq \rbra*{1+\frac{1}{t}}^{q-2} \geq \rbra*{1+\frac{1}{t}}^{-2} \quad\text{and}\quad \rbra*{1-\frac{\xi}{1-t}}^{q-2} \geq 1^{q-2} = 1, \]
    which are consequences of the monotonicity of $f(x)=x^{q-2}$ for all $0<q<1$.

    Finally, \Cref{eq:targetInequalityUB} follows from the bound in \Cref{eq:targetInequalityUBfinal} because $G(t)\geq 1$, which holds since $(3t+1)-(1-t)(t+1)^2 = t(t^2+t+2) \geq 0$ for all $t \geq 0$, and thus the proof is complete. 
\end{proof}

\begin{lemma}[Upper bound on $q$-Tsallis entropy via closeness to uniform distribution for $q=1/2$]
    \label{lemma:inequality-uniformTV-TsallisEA-UBhalf}
    Let $p$ be a probability distribution over $[N]$, where $N\geq 2$, and let $\nu$ denote the uniform distribution over $[N]$. 
    Then, for $q=1/2$, the following inequality holds\emph{:}
    \[ \Hq(p) \leq \ln_q\rbra*{ N (1-\TV(p,\nu)^2) }.\]
\end{lemma}

\begin{proof}
    We adopt the same setting as in the proof of \Cref{lemma:inequality-uniformTV-TsallisEA-UB}. To establish the desired upper bound, it therefore suffices to consider the optimization problem in \Cref{eq:Tsallis-entropy-max}, which, by the same argument as in \Cref{footnote:opt-solution-form}, admits an optimal solution of the form given in \Cref{eq:Tsallis-max-solution-candiates}. Analogously, it remains to bound the objective function, subject to the constraints in \Cref{eq:Tsallis-entropy-max-powerSum}, which for $q=1/2$ becomes: 
    \begin{align*}
        \PS(N,k,\gamma) \coloneqq \PS_{1/2}(N,k,\gamma)
        &= k \cdot \sqrt{\frac{1}{N} + \frac{\gamma}{k}} + (N-k) \cdot \sqrt{\frac{1}{N} - \frac{\gamma}{N-k}}\\
        &= \frac{1}{\sqrt{N}} \rbra[\Big]{ \underbrace{\sqrt{k(k+\gamma N)}}_{F_0(k;\gamma,N)} + \underbrace{\sqrt{ (N-k) ( (1-\gamma) N - k) }}_{F_1(k;\gamma,N)} }.
    \end{align*}

    Define $k_{\pm} \coloneqq \frac{N(1-\gamma)}{2} \pm t$, so that $k_+ +k_- = N(1-\gamma)$. It is readily verified that 
    \[F_0(k_\pm;\gamma,N) = \sqrt{ \frac{N(1+\gamma)}{2} \pm t } \cdot\sqrt{ \frac{N(1-\gamma)}{2} \pm t } = F_1(k_\mp;\gamma,N).\] 
    
    Consequently, $\PS(N,k,\gamma)$ is symmetric about $k = N(1-\gamma)/2$. Moreover, it is straightforward to verify that $\PS(N,k,\gamma)$ is concave in $k$, since the following holds:
    \begin{align*}
        \frac{\partial^2}{\partial k^2} F_0(k;\gamma,N) &= -\frac{\gamma ^2 N^2}{4 (k (k+\gamma  N))^{3/2}} < 0,\\ 
        \frac{\partial^2}{\partial k^2} F_1(k;\gamma,N) &= -\frac{\gamma ^2 N^2}{4 ((N-k) ((1-\gamma) N - k))^{3/2}} < 0.
    \end{align*}

    Therefore, we conclude that the optimal solution of \Cref{eq:Tsallis-entropy-max-powerSum} occurs at either $k_*=\floor*{N(1-\gamma)/2}$ or $k_*=\ceil*{N(1-\gamma)/2}$. As a consequence, 
    \[ \Hq(p_{\max}) \leq 2 \rbra*{\PS\rbra*{N,\frac{N(1-\gamma)}{2},\gamma}-1} = 2\rbra*{\sqrt{N(1-\gamma^2)}-1} = \ln_{1/2}\rbra*{\sqrt{N(1-\gamma^2)}}. \qedhere\]
\end{proof}


\subsubsection{\NIQSZK{} hardness results}

As a warm-up, we first describe the reduction from \QSCMM{} to $\TsallisQEAnoq_{1/2}$:
\begin{lemma}[$\QSCMM\leq\TsallisQEAnoq_{1/2}$]
    \label{lemma:reduction-MaxMixedQSD-TsallisQEA-half}
    Let $\rho$ be an $n$-qubit quantum state whose purification can be prepared by an $m$-qubit quantum circuit $Q$, as described in \Cref{def:TsallisQEA}. For any such state $\rho$, define the threshold function 
    \begin{equation}
        \label{eq:tn-reduction-MaxMixedQSD-TsallisQEA-half}
        t(n) \coloneqq \rbra*{2^{n/2}-1}\rbra*{1-1/n-2^{-n}}+2^{n/2}\cdot\frac{\sqrt{2n-1}}{n}-1.
    \end{equation}
    Then, the following statements hold\emph{:}
    \[ 
        \forall n \geq 4,\quad 
        \begin{array}{ll}
        \td\rbra*{\rho,(I/2)^{\otimes n}} \leq 1/n &\;\Rightarrow\; \SqHalf(\rho) \geq t(n) + 1/10,\\
        \td\rbra*{\rho,(I/2)^{\otimes n}} \geq 1-1/n &\;\Rightarrow\;  \SqHalf(\rho) \leq t(n) - 1/10.
        \end{array}
    \]
\end{lemma}

\begin{proof}
    Given a quantum state $\rho$, let $\{v_i\}_{i\in[2^n]}$ be an orthonormal basis consisting of eigenvectors of $\rho$, so that it admits the spectral decomposition $\rho = \sum_{i\in[2^{n}]} \lambda_i \ketbra{v_i}{v_i}$. Let $p \coloneqq (\lambda_1, \cdots, \lambda_{2^n})$ denote the vector of eigenvalues, which forms a probability distribution of dimension $2^n$, satisfying $\SqHalf(\rho) = \HqHalf(p)$. Let $\nu$ denote the uniform distribution over $2^n$ elements.
    Since $\rho$ and $(I/2)^{\otimes n}$ commute, they are simultaneously diagonalizable in the same basis. Hence, 
    \[\td\rbra*{\rho, (I/2)^{\otimes n}} = \TV(p,\nu),\] 
    and the bounds in \Cref{lemma:inequality-uniformTV-TsallisEA-LB,lemma:inequality-uniformTV-TsallisEA-UBhalf} naturally extend to the quantum setting. 
    
    We now consider the following two cases: 
    \begin{itemize}
    \item In the case where $\td\rbra*{\rho,(I/2)^{\otimes n}} \leq 1/n$, the lower bound from \Cref{lemma:inequality-uniformTV-TsallisEA-LB} implies 
    \begin{align*}
        \SqHalf(\rho) &\geq \ln_{1/2}\big(2^n\big) \cdot \rbra*{1 - \td\rbra*{\rho,(I/2)^{\otimes n}} - 2^{-n}}
        \geq 2\rbra*{2^{n/2}-1} \rbra*{1 - \frac{1}{n} - 2^{-n}} \coloneqq \tau_{\ttY}(n).
    \end{align*}
    
    \item In the case where $\td\rbra*{\rho,(I/2)^{\otimes n}} \geq 1-1/n$, the upper bound from \Cref{lemma:inequality-uniformTV-TsallisEA-UBhalf} yields 
    \begin{align*}
        \SqHalf(\rho) &\leq \ln_{1/2}\rbra*{2^n\rbra*{1-\td\rbra*{\rho,(I/2)^{\otimes n}}^2}}
        \leq 2^{\frac{n}{2}+1}\frac{\sqrt{2n-1}}{n} -2
        \coloneqq \tau_\ttN(n).
    \end{align*}
    \end{itemize}

    We then define the threshold function $t(n) \coloneqq \frac{1}{2}\rbra*{ \tau_\ttY(n) + \tau_\ttN(n)}$ and the gap function 
    \[g(n) \coloneqq \frac{1}{2}\rbra*{ \tau_\ttY(n) - \tau_\ttN(n)} 
    = \underbrace{2^{n/2}}_{G_1(n)} \underbrace{\rbra[\Big]{1-\frac{1}{n}-2^{-n}-\frac{\sqrt{2n-1}}{n}}}_{G_2(n)}+\underbrace{\frac{1}{n}+2^{-n}}_{G_3(n)}.\] 
    
    It suffices to show that $g(n) \geq 1/5$ for all integers $n\geq 4$. 
    Since $G_3(n)$ is monotonically decreasing for $n\geq 4$ and satisfies $\lim_{n\rightarrow\infty} G_3(n)=0$, it follows that $G_3(n)\geq 0$ for $n\geq 1$, and thus this term does not contribute to the lower bound. Also, it is straightforward to verify that both $G_1(n)$ and $G_2(n)$ are positive and monotonically increasing for $n\geq 4$. Consequently, we complete the proof by noting that
    \[ g(n) = G_1(n) G_2(n)+G_3(n) \geq G_1(n) G_2(n) \geq G_1(4) G_2(4) = \frac{11}{4}-\sqrt{7} > \frac{1}{10}. \qedhere \]
\end{proof}

Next, we establish the reduction from \QSCMM{} to $\TsallisQEA$ for all $q\in(0,1)$:
\begin{lemma}[$\QSCMM\leq\TsallisQEA$ for $0<q<1$]
    \label{lemma:reduction-MaxMixedQSD-TsallisQEA}
    Let $\rho$ be an $n$-qubit quantum state whose purification can be prepared by an $m$-qubit quantum circuit $Q$, as described in \Cref{def:TsallisQEA}. For any such state $\rho$, define the threshold function 
    \begin{equation}
        \label{eq:tn-reduction-MaxMixedQSD-TsallisQEA}
        t(n;q) \coloneqq \frac{2^{n(1-q)}-1}{2(1-q)}\rbra*{2 - \frac{1}{n} - 2^{-n}} - \frac{q}{4}\cdot 2^{n(1-q)} \rbra*{1-\frac{1}{n}}^2.
    \end{equation}
    Then, the following statements hold\emph{:}
    \[ 
        \forall n \geq \ceil[\Big]{\frac{5}{q(1-q)}},\quad 
        \begin{array}{ll}
        \td\rbra*{\rho,(I/2)^{\otimes n}} \leq 1/n &\;\Rightarrow\; \Sq(\rho) \geq t(n;q) + 2^{5/q} q/40,\\
        \td\rbra*{\rho,(I/2)^{\otimes n}} \geq 1-1/n &\;\Rightarrow\;  \Sq(\rho) \leq t(n;q) - 2^{5/q} q/40.
        \end{array}
    \]
\end{lemma}

\begin{proof}
    Following the same reasoning at the beginning of the proof of \Cref{lemma:reduction-MaxMixedQSD-TsallisQEA-half}, the bounds in \Cref{lemma:inequality-uniformTV-TsallisEA-LB,lemma:inequality-uniformTV-TsallisEA-UB} also extend naturally to the quantum setting. 
    
    We now consider the following two cases: 
    \begin{itemize}
    \item In the case where $\td\rbra*{\rho,(I/2)^{\otimes n}} \leq 1/n$, the lower bound from \Cref{lemma:inequality-uniformTV-TsallisEA-LB} implies 
    \begin{align*}
        \Sq(\rho) &\geq \ln_{q}\!\big(2^n\big) \cdot \rbra*{1 - \td\rbra*{\rho,(I/2)^{\otimes n}} - 2^{-n}}
        \geq \frac{2^{n(1-q)}-1}{1-q}\rbra*{1 - \frac{1}{n} - 2^{-n}} \coloneqq \tau_{\ttY}(n;q).
    \end{align*}
    
    \item In the case where $\td\rbra*{\rho,(I/2)^{\otimes n}} \geq 1-1/n$, the upper bound from \Cref{lemma:inequality-uniformTV-TsallisEA-UB} yields 
    \begin{align*}
        \Sq(\rho) &\leq \ln_q(2^n) - \frac{q}{2} 2^{n(1-q)} \cdot \td\rbra*{\rho,(I/2)^{\otimes n}}^2        
        \leq \frac{2^{n(1-q)}-1}{1-q} - \frac{q}{2}\cdot 2^{n(1-q)} \rbra*{1-\frac{1}{n}}^2
        \coloneqq \tau_\ttN(n;q).
    \end{align*}
    \end{itemize}

    We then define the threshold function $t(n;q) \coloneqq \frac{1}{2}\rbra*{ \tau_\ttY(n;q) + \tau_\ttN(n;q)}$ and the gap function 
    \begin{align*}
    2 \cdot g(n;q) &\coloneqq \tau_\ttY(n;q) - \tau_\ttN(n;q) \\
    &= \frac{1}{1-q} \cdot \rbra*{\frac{1}{n} + 2^{-n}} - \frac{2^{n(1-q)}}{1-q} \rbra*{\frac{1}{n} + 2^{-n}} + \frac{q}{2}\cdot 2^{n(1-q)} \rbra*{1-\frac{1}{n}}^2\\
    &\geq \underbrace{\frac{1}{1-q} \cdot \rbra*{\frac{1}{n} + 2^{-n}}}_{G_1(n;q)} + \underbrace{2^{n(1-q)}}_{G_2(n;q)} \underbrace{\rbra*{ \frac{q}{2} \rbra*{1-\frac{1}{n}}^2 - \frac{2}{n(1-q)} }}_{G_3(n;q)}.
    \end{align*}
    Here, the inequality in the third line follows from the facts that $2^{n(1-q)}/(1-q) > 0$ for all $q\in(0,1)$ and $2^{-n} \leq 1/n$ for all $n\geq 1$. 

    Since $G_1(n;q)$ is non-negative and monotonically decreasing for $n\geq 1$, this term does not contribute to the lower bound. Moreover, as $G_2(n;q) \geq 0$ for all $n\geq 1$ and $q\in(0,1)$, it suffices to show that $G_3(n;q) > 0$ for sufficiently large $n$. 

    Since $n^2 G_3(n;q)$ can be written as a quadratic function of $n$,
    \[ n^2 G_3(n;q) = \frac{q}{2} \cdot n^2 - \rbra*{q+\frac{2}{1-q}} \cdot n + \frac{q}{2}, \]
    whose sign coincides with that of $G_3(n;q)$, it is evident that $n^2 G_3(n;q) > 0$ whenever 
    \[n \geq 
     n_0 \coloneqq 1 + \frac{2}{q(1-q)} \rbra*{1+\sqrt{1+q-q^2}}.\]
     
    Noting that $\sqrt{1+q-q^2} \in (1,\sqrt{5}/2]$ for $q\in(0,1)$, we may safely choose $n \geq n_\star \coloneqq \frac{5}{q(1-q)} > n_0$ for all $q\in(0,1)$, so that
    \[ G_3(n;q) \geq G_3(n_\star,q) = \frac{q}{2} \rbra*{1-\frac{q(1-q)}{5}}^2 - \frac{2}{5} q \geq \frac{q}{2} \rbra*{1-\frac{1}{20}}^2 - \frac{2}{5} q = \frac{41}{800}\cdot q > \frac{q}{20}.\]

    Hence, the gap function satisfies the desired lower bound: 
    \[ g(n;q) \geq \frac{1}{2} G_2(n;q) G_3(n;q) = \frac{1}{2} \cdot 2^{5/q} \cdot \frac{q}{20} = \frac{2^{5/q} q}{40}. \qedhere\]
\end{proof}

As a consequence, \Cref{lemma:reduction-MaxMixedQSD-TsallisQEA} implies the following \NIQSZK{}-hardness result: 

\TsallisQEAhardness*

\begin{proof}
    Using \Cref{QSCMM-is-NIQSZKhard}, we know that $\QSCMM[1/n, 1-1/n]$ is \NIQSZK{}-hard for $n \geq 3$. Combining this result with the reduction from \QSCMM{} to $\TsallisQEA$ for $q \in (0,1)$ and $n \geq \ceil[\big]{\frac{5}{q(1-q)}}$ (see \Cref{lemma:reduction-MaxMixedQSD-TsallisQEA}) and the specific choice of $t(n;q)$ in the reduction, we conclude that the resulting gap satisfies $g(n;q) \geq 2^{5/q}q/40$, completing the proof. 
\end{proof}

Directly analogous to \Cref{thm:TsallisQEA-NIQSZKhard}, the reduction in \Cref{lemma:reduction-MaxMixedQSD-TsallisQEA-half} implies the following: 

\begin{theorem}[$\TsallisQEAnoq_{1/2}$ is \NIQSZK{}-hard]
    \label{thm:TsallisQEA-NIQSZKhard-half}
    For all integers $n \geq 4$, it holds that\emph{:}
    \[ \forall g(n) \in[1/\poly(n), 1/5],\quad \TsallisQEAnoq_{1/2}[t(n),g(n)] \text{ is } \NIQSZK{}\text{-hard}. \]
    The threshold parameter $t(n)$ is as defined in \Cref{eq:tn-reduction-MaxMixedQSD-TsallisQEA-half}. 
\end{theorem}

\addcontentsline{toc}{section}{References}

\bibliographystyle{alphaurl}
\bibliography{main}

\end{document}